\documentclass[final]{siamart0516}
\RequirePackage[numbers]{natbib}

\usepackage{graphics} 
\usepackage{epsfig} 
\usepackage{color}
\usepackage{url}
\usepackage[export]{adjustbox}
\usepackage{amsfonts}
\usepackage{amssymb}
\usepackage{amsmath}
\usepackage{comment}
\usepackage{caption}
\usepackage{subcaption}
\usepackage{graphicx}
\usepackage{float}
\usepackage{capt-of}
\usepackage[section]{placeins}
\usepackage{bbm}
\usepackage{multicol}
\usepackage [autostyle, english = american]{csquotes}
\MakeOuterQuote{"}

\def\R{\mathbb R}
\def\N{\mathbb N}

\def\P{\mathbb P}

\def\ONE{\mathbbm 1}

\def\tr{\mbox{\rm{Tr}}}
\def\Id{\mbox{Id}}

\newsiamremark{remarks}{Remarks}
\newsiamremark{remark}{Remark}
\newcommand{\beginsupplement}{%
        \setcounter{table}{0}
        \setcounter{page}{1}
        \renewcommand{\thetable}{S\arabic{table}}%
        \setcounter{figure}{0}
        \renewcommand{\thefigure}{S\arabic{figure}}%
     }



\usepackage{lipsum}
\usepackage{amsfonts}
\usepackage{graphicx}
\usepackage{epstopdf}
\usepackage{algorithmic}
\ifpdf
  \DeclareGraphicsExtensions{.eps,.pdf,.png,.jpg}
\else
  \DeclareGraphicsExtensions{.eps}
\fi

\newcommand{\TheTitle}{Eigenvalues of random  matrices with isotropic Gaussian noise and the design of  Diffusion Tensor Imaging experiments.} 
\newcommand{\TheAuthors}{Dario Gasbarra, Sinisa Pajevic, Peter J. Basser}

\headers{\TheTitle}{\TheAuthors}

\title{{\TheTitle}\thanks{Submitted to the editors 12.10.2016.
}}

\author{
  Dario Gasbarra\thanks{Department of Mathematics and Statistics, University of Helsinki, P.O. Box 68 FI-00014
    Finland (\email{dario.gasbarra@helsinki.fi})}
  \and
   Sinisa Pajevic \thanks{National Institutes of Health (NIH), Mathematical and Statistical Computing Lab, 12 South
     Drive Bethesda MD 20892 USA  (\email{pajevic@nih.gov})}
   \and
   Peter J. Basser\thanks{     Eunice Kennedy Shriver National Institute of Child Health and Human Development
     (NICHD), National Institutes of Health (NIH), 13 South Drive,
    MSC 5772, Bethesda, MD 20892-5772 USA (\email{pjbasser@helix.nih.gov})}
}

\usepackage{amsopn}


\if 0  
\title{\LARGE \bf  
Eigenvalues of random matrices with isotropic Gaussian noise and the design of  Diffusion Tensor Imaging experiments. 
} 


\author{Dario Gasbarra$^{1}$  Sinisa Pajevic$^{2}$ Peter J. Basser$^{3}$  
\thanks{$^{1}$Department of Mathematics and Statistics,  University of Helsinki   P.O. Box 68 FI-00014   Finland } 
\thanks{$^{2}$ National Institutes of Health (NIH), Mathematical and Statistical Computing Lab,
12 South Drive Bethesda MD 20892 USA}
\thanks{$^{3}$ Eunice Kennedy Shriver National Institute of Child Health and Human
Development  (NICHD) National Institutes of Health (NIH), 13 South Drive, MSC 5772, Bethesda, MD 20892-5772 USA }

}

\fi

 \begin{document}
\maketitle
\thispagestyle{plain}
\pagestyle{plain}

\begin{abstract}
  Tensor-valued and matrix-valued measurements of different physical properties are increasingly available in
  material sciences and medical imaging applications. The eigenvalues and eigenvectors of such multivariate data provide novel
  and unique information, but at the cost of requiring  a more complex statistical analysis. In this work we derive the
  distributions of eigenvalues and eigenvectors in the special but important case of $m \times m$ symmetric
  random matrices, $D$, observed with isotropic matrix-variate Gaussian noise. The properties of these
  distributions depend strongly on the symmetries of the mean tensor/matrix, $\bar D$. When $\bar D$ has repeated
  eigenvalues, the eigenvalues of $D$ are not asymptotically Gaussian, and repulsion is observed between the
  eigenvalues corresponding to the same $\bar D$ eigenspaces. We apply these results to diffusion tensor
  imaging (DTI), with $m=3$, addressing an important problem of detecting the symmetries of the diffusion tensor,
  and seeking an experimental design that could potentially yield an isotropic Gaussian distribution. In the
  3-dimensional case, when the mean tensor is spherically symmetric and the noise is Gaussian and isotropic, the
  asymptotic distribution of the first three eigenvalue central moment statistics is simple and can be used to test
  for isotropy. In order to apply such tests, we use quadrature rules of order $t \ge 4$ with constant weights on the
  unit sphere to design a DTI-experiment with the property that isotropy of the underlying true tensor implies
  isotropy of the Fisher information. We also explain the potential implications of the methods using simulated DTI
  data with a Rician noise model.
\end{abstract}

\if 0
Tensor-valued and matrix-valued measurements of different physical properties are increasingly available in
  material sciences and medical imaging applications. The eigenvalues and eigenvectors of such multivariate data provide novel
  and unique information, but at the cost of requiring  a more complex statistical analysis. In this work we derive the
  distributions of eigenvalues and eigenvectors in the special but important case of  symmetric
  random matrices observed with isotropic Gaussian noise.
  We apply these results to DTI addressing an important problem of detecting the symmetries of the diffusion tensor,
  and seeking an experimental design that could potentially yield an isotropic Gaussian distribution. In 
  dimension 3, when the mean tensor is spherically symmetric and the noise is isotropic Gaussian, the
  asymptotic distribution of the first three eigenvalue central moment statistics is simple and can be used to test
  for isotropy. In order to apply such tests, we use quadrature rules of order  t<=4 with constant weights on the
  unit sphere to design a DTI-experiment with the property that isotropy of the underlying true tensor implies
  isotropy of the Fisher information. 
\fi

\if 0
\textbf{ \small  \textit{ Index Terms}: eigenvalue and eigenvector distribution, asymptotics, sphericity test, 
singular hypothesis testing, DTI, 
spherical $t$-design, Gaussian Orthogonal Ensemble.}
\fi

\begin{keywords}
{Eigenvalue and eigenvector distribution}, 
{asymptotics}, {sphericity test}, {singular hypothesis testing}, {DTI}, {spherical $t$-design}, 
{Gaussian Orthogonal Ensemble}
\end{keywords}

\if 0
\begin{keyword}
\kwd{eigenvalue and eigenvector distribution}\kwd{statistical asymptotics}\kwd{sphericity test}
\kwd{singular hypothesis testing}\kwd{DTI} 
\kwd{spherical $t$-design}\kwd{Gaussian Orthogonal Ensemble}
\end{keyword}
\fi

\begin{AMS}
  60F05, 62K05, 62E20, 68U10
\end{AMS}

\section{INTRODUCTION} \label{section:intro}

Tensors of second and higher order are ubiquitous in the physical
sciences. Some examples include the moment of inertia tensor;
electrical, hydraulic, and thermal conductivity tensors; stress and
strain tensors, etc. One key advance in the field of tensor measurement
was the advent of diffusion tensor imaging (DTI), a magnetic resonance
based imaging technique that provides an estimate of a second order
diffusion tensor in each voxel within an imaging volume\cite{basser1994a,basser1994b}. This
effectively provides discrete estimates of a continuous or piece-wise
continuous tensor field within tissue and organs. With the possibility
of measuring tensors in millions of individual voxels within, for
example, a live human brain, there is a clear need for a statistical
framework to be developed to a) design optimal DTI experiments, b)
characterize central tendencies and variability in such data, and c)
provide a family of hypothesis tests to assess and compare tensors and
the quantities derived from them.

\subsection{TENSOR-VARIATE NORMAL DISTRIBUTION} \label{subsection:tensvar}

In DTI, a tensor $D$ is represented by a symmetric matrix $D=( D_{i,j} :
1\le i \le j \le 3)$ and it has been established that the measured 
tensor components $D_{ij}$, over multiple independent acquisitions from the same subject in the same voxel,
conform to
a multivariate normal distribution \cite{pajevic2003}. We previously proposed a normal
distribution for tensor-valued random variables that arise in DTI whose
precision and covariance structures could be written as fourth-order
tensors\cite{basser-pajevic03}:
\begin{equation*}  \label{isotropic_prior:0}
  p(D)\propto 
  \exp\biggl( - \frac 1 2 (D-\bar D): A : (D-\bar D) \biggr), 
\end{equation*}
where $A$ is a fourth-order precision tensor, $\bar D$ is the mean
tensor, and  "$:$" is a tensor contraction.

There are distinct advantages to analyzing tensor or tensor-field data in the laboratory coordinate system in which their
components are measured, and using the tensor-valued variates with a
fourth-order tensor precision tensor rather than writing the tensor as a vector and using a square covariance
matrix. For example, by retaining the tensor form it is easy to establish the conditions that the
statistical properties be coordinate independent, yielding a
isotropic fourth-order precision tensor
\begin{equation*} 
\label{isotropic_prior:1}
 A^{iso}_{ijkl}  = \lambda \delta_{ij}\delta_{kl}+\mu\left(\delta_{ik}\delta_{jl} + \delta_{il}\delta_{jk}\right),
\end{equation*}
which can be parameterized with only two constants, $\mu$ and $\lambda$. This form, if
achieved, can greatly simplify statistical analysis and is the focus of this paper. 

\medskip In the following sections, we switch from tensor to matrix notation \cite{basser-pajevic03}, as the
correspondence between the Gaussian tensor-variate and standard multivariate normal can be established using
appropriate conversion factors\cite{basser-pajevic07}. The outline of the paper is as follows. First, in this
section we state the properties for the $m$-dimensional isotropic Gaussian matrix. In section \ref{section:spectral}
we describe a spectral representation and change of variables applicable to general symmetric random matrices. In
section \ref{section:eigendistrib} we derive distributions for the eigenvalues and eigenvectors for the isotropic
Gaussian, while in section \ref{section:smallnoise} we obtain the analytical expressions in the limit of small noise
for different symmetries of the mean tensor $\bar D$. In the remaining sections, we focus on the application of
these results to DTI. In section \ref{section:sphericity} we develop a sphericity test, testing for the isotropy of
the diffusion tensor; in section \ref{section:rician} we study the isotropy of the  Fisher information and justify the use of 
spherical $t$-designs as gradient tables in DTI experimental design; and finally, in section
\ref{section:mcsim} we test many of the mathematical results and predictions using Monte Carlo simulations of DTI experiment.
The main theorems are proved in Appendix \ref{appendix:a}.

\if 0
that for a sequence of $m\times m$ symmetric random matrices $D^{(n)}$
such that for some scaling sequence $a^{(n)}\to \infty$
$$ \sqrt{ a^{(n)} } \bigl ( D^{(n)}- \bar D) \stackrel{law}{\to} X $$
where $\bar D$ is deterministic and $X$ has zero mean and isotropic
Gaussian distribution, and deriving the limiting laws of the eigenvalues and eigenvectors of $D^{(n)}$.
\fi

\subsection{ISOTROPIC GAUSSIAN MATRIX DISTRIBUTION} \label{subsection:isotropicGaussian}

Given a fixed symmetric matrix   $\bar D \in \R^{m\times m}$, it is shown in \cite{mallows},\cite{basser-pajevic03},
that the probability distribution of a $m\times m$ symmetric Gaussian random matrix     $D=( D_{ij} : 1\le i \le j
\le m)$ 
 is isotropic around $\bar D$ if and only if it has  density of the form
 \begin{align}  
  p( D)&= C_m(\mu,\lambda) 
 \exp\biggl( - \mu\tr( (D-\bar D)^2 )-
 \frac{ \lambda} 2  \{\tr(D-\bar D) \bigr\}^2    \biggr) ,
& \label{isotropic_prior:2}   \\   C_m( \mu,\lambda)&=2^{ (m-1)m/4}   \pi^{  -(m+1)m/4}  \mu^{(m+1)m/4}\sqrt{ 1+ \lambda m/(2\mu) },
& \end{align}
with precision parameter $\mu > 0$ and  interaction parameter $\lambda$ satisfying the constraint $\lambda m > - 2\mu$. 
To fix the ideas, when $m=3$ this corresponds to a Gaussian distribution
for the vectorized matrix
\begin{equation}
\mbox{vec}(D)=(D_{11} ,D_{22}, D_{33}, D_{12}, D_{13} ,D_{23 }),
\end{equation}
 with  mean $\mbox{vec}(\bar D)$  and  precision matrix 
 \begin{eqnarray}  \label{Omega_D:matrix:2nd} A(\mu, \lambda)= \left( \begin{matrix}\lambda + 2\mu & \lambda & \lambda & 0 & 0 & 0 \\
                     \lambda & \lambda + 2\mu & \lambda & 0 & 0 & 0 \\
\lambda & \lambda & \lambda + 2\mu  & 0 & 0 & 0 \\
0 & 0 & 0  & 4\mu & 0 & 0 \\
0 & 0 & 0  &0 & 4\mu & 0 \\
0 & 0 & 0 &0 &0 & 4\mu  
                    \end{matrix}
 \right) \; .
\end{eqnarray}
In particular  $(D_{ij}:1\le i < j\le m)$ are independent,
and $(D_{ii} :1\le i \le m)$ are negatively correlated for $\lambda>0$,
with covariance $\Sigma(\mu,\lambda)=A(\mu,\lambda)^{-1}$ where
\begin{subequations}
\begin{eqnarray*}  && \Sigma_{ij,ij}=
E\bigl( ( D_{ij}-\bar D_{ij} )^2\bigr) = (4\mu)^{-1}, \quad i\ne j,   \\ 
&&\Sigma_{ii,jj}=E\bigl(  ( D_{ii}-\bar D_{ii}) ( D_{jj}-\bar D_{jj} ) \bigr) 
=   \biggl( \delta_{ij}    -   \frac{\lambda }{ 2\mu + \lambda  m} \biggr) \frac 1 {2\mu}  \; .
\end{eqnarray*}
\end{subequations}
\begin{remark}  
          When $\bar D={\bf 0}$, $\lambda=0$ and
$\mu=1$ or,  depending on the scaling convention,  $\mu=1/2$, the random matrix distribution 
\eqref{isotropic_prior:2} is  known in the
literature as Gaussian Orthogonal Ensemble (GOE). The connection between general isotropic Gaussian matrices
and the GOE was first noticed in \cite{schwartzman}.
 The  fluctuations of the diagonal elements $\bigl((D_{ii}-\bar D_{ii}):1\le i\le m\bigr)$ are exchangeable
 and independent from the off-diagonal elements.
   \end{remark}
\section{SPECTRAL REPRESENTATION AND CHANGE OF VARIABLES} \label{section:spectral}
We summarize basic facts from the 
 random matrix literature \cite{dyson},\cite{metha},\cite{edelman},\cite{chikuse},\cite{forrester}.
A symmetric matrix $D \in \R^{m\times m}$ has spectral decomposition $D=O G O^{\top}$,
where $G$  is a diagonal matrix 
containing the $m$ eigenvalues $(\gamma_{1},\gamma_{2},\dots,\gamma_{m})\in \R^m$,  and 
$O=\bigl( O^{-1})^{\top}$ 
is an orthogonal matrix
with columns 
 corresponding to the  normalized eigenvectors. The orthogonal matrices form a compact  group ${\mathcal O}(m)$
 with respect to the matrix multiplication, which contains the special orthogonal group
${\mathcal S}{\mathcal O}(m)= \{ O \in {\mathcal O}(m) : \det(O)=1 \}$ of rotations.
 The $(m-1)m/2$ independent entries under the diagonal $(O_{ij}: 1\le j< i \le m )$ determine $O$, and 
 the eigenvalues are distinct for  symmetric matrices  
outside a set of   Lebesgue measure zero 
in $\R^{(m+1)m/2}$.
 The spectral decomposition is not unique,
 since $D= O G O^{\top}= R O P G P^{\top} O^{\top} R^{\top} $ for any permutation matrix $P$,
 and any $R=( R_{ij}=\pm \delta_{ij} )_{1\le i \le j \le m}$, which form the  subgroup ${\mathcal R}(m)$ of reflections with respect
 to the Cartesian axes, isomorphic to $\{1,-1\}^m$.
 In order to determine uniquely $O$ and $G$, we sort the eigenvalues
in descending order $\gamma_1 > \gamma_2 > \dots > \gamma_m$, and
impose, for  each column vector 
$(O_{1j}, O_{2j},\dots, O_{mj})^{\top}$,  $j=1, \dots ,m$, the condition that the
first encountered non-zero coordinate is positive, and denoted by ${\mathcal O}(m)^+$ the set of such matrices.
An $O\in {\mathcal O}(m)^+$ is a representative of the left coset $O {\mathcal R}(m)$. 
 The change of variables 
\begin{equation} \label{change:var}
D\mapsto X=( \gamma_{i}, O_{ij}:  1\le j< i \le m) ,
\end{equation}
has differential
%
\begin{eqnarray*} \label{change:variable}
  \prod_{1\le i \le j \le m} d D_{ij} = 
  |J(\gamma,O) |   \prod_{1\le i \le m}  d \gamma_i  \prod_{j<i} d O_{ij}  \; ,   
\end{eqnarray*}
%
\noindent where $J$ is the Jacobian of the inverse map $X\mapsto D$,  which is evaluated by means of  differential geometry. 
We consider a differentiable map $Y:\R^{m\times m} \to \R^{m\times m}  $.  The matrix differential can then be
written using  the chain rule
\begin{align*}
   dY_{ij}  = \sum_{k,h=1}^m \frac{\partial Y_{ij} }{ \partial X_{hk} } dX_{hk} \; ,
\end{align*}
and the wedge  product  acting on the transformed differentials is
\begin{align*}
  \bigl(  dY \bigr)^{\wedge}:= \bigwedge_{i,j=1}^m  dY_{ij}= 
  \det\biggl(   \frac{\partial Y_{ij} }{ \partial X_{hk} } \biggr)
  \bigwedge_{h,k=1}^m  dX_{hk} \;.
\end{align*}
Note that 
the wedge product is taken over the independent entries of the matrix,
for example if $X$ is symmetric
\begin{align*}
  ( dX )^{\wedge } =  \bigwedge_{1\le h \le k \le m } dX_{hk}\; ,
\end{align*}
and when  $X$ is skew-symmetric
\begin{align*}
  ( dX )^{\wedge } =  \bigwedge_{1\le h < k \le m } dX_{hk}\; .
\end{align*}
 The wedge product is also anticommutative, meaning that $dx \wedge dy = -dy \wedge dx$.
However when we compute volume elements, we always choose an ordering of the wedge product producing  a non-negative volume.
The Jacobian calculation is  based on the following result:
\begin{proposition}
(Prop.1.2 in \cite{forrester}) When $A,D$  are $m\times m$ matrices and $D$ is symmetric,
\begin{align} \label{eq:transformation}
    ( A^{\top} dD A )^{\wedge} = \det(A)^{m+1} ( dD )^{\wedge} \; . \end{align} \end{proposition}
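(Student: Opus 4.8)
The plan is to read the identity as the Jacobian of a \emph{linear} change of variables and then to pin that Jacobian down by exploiting its multiplicativity in $A$. Since $D$ is symmetric, so is $Y:=A^{\top}DA$, and $\Phi_A\colon S\mapsto A^{\top}SA$ is an $\R$-linear endomorphism of the $(m+1)m/2$-dimensional space $\mathrm{Sym}(m)$ of symmetric matrices. Applying the wedge-product chain rule recalled above with $X=D$ and $Y=A^{\top}DA$, and taking wedges over the independent entries $1\le i\le j\le m$, one obtains at once
\begin{equation*}
  (A^{\top}dD\,A)^{\wedge}=\det(\Phi_A)\,(dD)^{\wedge},
\end{equation*}
so the proposition reduces to the claim $\det(\Phi_A)=\det(A)^{m+1}$. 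If $A$ is singular this is trivial, since $\Phi_A$ then has a nontrivial kernel and both sides vanish; hence assume $A\in\mathrm{GL}(m,\R)$.

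From $(AB)^{\top}S(AB)=B^{\top}(A^{\top}SA)B$ we get $\Phi_{AB}=\Phi_B\circ\Phi_A$, so $A\mapsto\det(\Phi_A)$ is a homomorphism $\mathrm{GL}(m,\R)\to\R^{\times}$, and it is conjugation invariant; moreover, in the coordinate basis of $\mathrm{Sym}(m)$ the matrix of $\Phi_A$ has entries that are quadratic monomials in the entries of $A$, so this character is a polynomial in $A$. Such a character must be a power of the determinant: it is constant on each transvection subgroup $\{I+t\,e_ie_j^{\top}:t\in\R\}$, $i\ne j$ (a polynomial in $t$ that takes a single value for $t\ne 0$ by conjugation invariance), hence identically $1$ on $\mathrm{SL}(m,\R)$, which these transvections generate; so $\det(\Phi_A)$ depends on $A$ only through $\det A$, and polynomiality gives $\det(\Phi_A)=\det(A)^{k}$ for some $k\in\Z$. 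To identify $k$ I would test $A=\diag(t,1,\dots,1)$: then $(A^{\top}SA)_{ij}=A_{ii}A_{jj}S_{ij}$, so $\Phi_A$ is diagonal in the coordinate basis with eigenvalues $A_{ii}A_{jj}$, and
\begin{equation*}
  \det(\Phi_A)=\prod_{1\le i\le j\le m}A_{ii}A_{jj}=t^{2}\cdot t^{\,m-1}=t^{\,m+1},
\end{equation*}
the pair $(1,1)$ contributing $t^{2}$ and each of the $m-1$ pairs $(1,j)$, $j\ge 2$, contributing $t$. Hence $k=m+1$, as claimed.

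A more self-contained variant, avoiding any appeal to the structure of $\mathrm{SL}(m,\R)$, is to factor $A=U_1\Lambda U_2$ by the singular value decomposition (or $A=QR$ with $Q$ orthogonal and $R$ triangular), verify $\det(\Phi_A)=\det(A)^{m+1}$ on each factor, and multiply using $\Phi_{AB}=\Phi_B\circ\Phi_A$: for diagonal or triangular $A$ the map $\Phi_A$ is triangular once the coordinates $S_{ij}$ are ordered by $i+j$, so its determinant is the product of the diagonal entries $A_{ii}A_{jj}$; for orthogonal $U$ one notes that $\Phi_U$ preserves the Frobenius inner product $\langle S,T\rangle=\tr(ST)$ on $\mathrm{Sym}(m)$, whence $\det(\Phi_U)=\pm1$, the sign being $(\det U)^{m+1}$ because $\mathrm{SO}(m)$ is connected and a single axis reflection $\diag(-1,1,\dots,1)$ realizes the value $(-1)^{m-1}$. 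In either route the genuine step is the passage from multiplicativity to a power of $\det$ (respectively, fixing the sign of $\det\Phi_U$); the rest is bookkeeping, and I expect the only real friction in a full write-up to be keeping the ordering of the $(m+1)m/2$ independent matrix entries consistent between the wedge product and the matrix representing $\Phi_A$, together with the orientation conventions that make the volume element non-negative.
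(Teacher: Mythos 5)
The paper does not prove this proposition; it is cited verbatim as Prop.\ 1.2 of Forrester's book, so there is no in-paper proof to compare against. Your argument is correct. The first route (viewing $\Phi_A\colon S\mapsto A^{\top}SA$ as a linear endomorphism of $\mathrm{Sym}(m)$, observing that $A\mapsto\det\Phi_A$ is a polynomial, conjugation-invariant multiplicative character of $\mathrm{GL}(m,\R)$, hence trivial on the transvections generating $\mathrm{SL}(m,\R)$ and thus a power of $\det$, and finally fixing the exponent with $A=\diag(t,1,\dots,1)$) is sound; the only cosmetic slip is allowing $k\in\Z$ where polynomiality forces $k\ge 0$. Your second route — factor $A$ by SVD or QR, check triangularity of $\Phi_A$ on triangular/diagonal factors under the ordering by $i+j$, and handle the orthogonal factor via Frobenius-isometry plus connectedness of $\mathrm{SO}(m)$ and a single reflection — is essentially the standard textbook computation in Forrester and in Muirhead. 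So the first route is a genuinely more abstract alternative that replaces case-by-case verification with representation-theoretic rigidity, at the cost of invoking that transvections generate $\mathrm{SL}(m,\R)$; the second buys elementary self-containment at the cost of more bookkeeping with the ordering of the $(m+1)m/2$ coordinates, which you correctly flag as the main friction.
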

Since
$O^{\top} O=I$,  it follows that the matrix differential $    O^{\top} dO=-     d O^{\top} O$ is  skew-symmetric.
We also have
\begin{align*}
  d D=  dO \; G O^{\top} + O dG O^{\top} + O G dO^{\top}
\end{align*}
and
\begin{align*}
  O^{\top} d D  \;  O = dG + O^{\top} dO  G  - G O^{\top} dO , 
\end{align*}
where the   differential matrix on the right hand side has  diagonal entries $ d\gamma_i$
and off diagonal entries
\begin{align*}
    ( O^{\top} dO )_{ij} (\gamma_i - \gamma_j) =  ( \gamma_i- \gamma_j ) \sum_{k=1}^m O_{ki}d O_{kj}\quad    i \ne j \; .
\end{align*}
By 
using 
property \eqref{eq:transformation}, we obtain
\begin{align} & \label{eq:jacobian:transformation} \bigl( dD \bigr)^{\wedge}=
\bigl(  O^{\top}  dD O \bigr)^{\wedge} =   V(\gamma) \bigwedge_{i=1}^m d\gamma_i
\biggl(    O^{\top} dO \biggr)^{\wedge} \;, \quad \mbox{ where }
     & \\  \nonumber &
 V(\gamma) 
 = \left \vert \begin{matrix} 1 & \gamma_1  & \gamma_1^2 & \dots & \gamma_1^{m-1} \\
                            1 & \gamma_2  & \gamma_2^2    & \dots & \gamma_2^{m-1}         \\
                          & & & \ddots & \\  
                             1 & \gamma_m  & \gamma_m^2  & \dots & \gamma_m^{m-1} 
                           \end{matrix} 
                           \right\vert = \prod_{1\le i < j \le m} (\gamma_i-\gamma_j)
\end{align} 
 is the Vandermonde determinant.
The wedge product  $\bigl( O^{\top} dO \bigr)^{\wedge}$ defines a uniform  
measure on ${\mathcal O}(m)$ which is invariant under the group action, and the Haar probability measure is given by
\begin{align*} H_m( dO) = \frac 1 { \mbox{Vol}( {\mathcal O}(m) ) }      ( O^{\top} dO )^{\wedge} 
\end{align*}
is 
obtained by normalizing with  the volume measure (Corollary 2.1.16 in \cite{muirhead})                           
\begin{align*}
\mbox{Vol}( {\mathcal O}(m) ) =2^m\pi^{(m+1)m/4 }\prod\limits_{j=1}^{m} \Gamma\bigl( j/2 \bigr)^{-1}= 2^m 
\mbox{Vol}( {\mathcal O}(m)^+ ) = 2 \mbox{Vol}( {\mathcal S} {\mathcal O}(m) )\;.
\end{align*}
 We  rewrite \eqref{eq:jacobian:transformation} as
\begin{align}  \label{measure:factorization}
  \bigwedge_{i\le j}dD_{ij}=   \mbox{Vol}( {\mathcal O}(m)) V(\gamma) d\gamma \; \times \; H_m(dO)\; , \quad  O\in {\mathcal O}(m)^+,
  \quad \gamma_1>\gamma_2> \dots > \gamma_m \; .   
\end{align}
\if 0
\begin{remark} By taking the composition $O'=O\; \mbox{diag}( \eta_1,\dots,\eta_t)\in {\mathcal R}_m$,
where $\eta_k\in \{ 1,-1\}$ are independent random signs, with
$P( \eta_k=1)=P( \eta_k=-1)=1/2$, by reflection symmetry  $\mbox{Vol}({\mathcal O}(m) _+)=2^{-m} \mbox{Vol}(SO(m) )$ the 
measure \eqref{measure:factorization} 
to the measure  
   \begin{align}  
   2^{-m} \mbox{Vol}( SO(m) )  V(\gamma) d\gamma  \; \times \; H_m(dO')   \; ,
\end{align}
\end{remark}
which has  the whole $SO(m)$ as marginal support.

\noindent
\begin{minipage}[t]{0.5\linewidth}
  \centering
\begin{eqnarray*}
  \iint\limits_{ O_{21}^2 + O_{31}^2 \le 1} \int_{-1}^1 
 f( O_{21} ,O_{31} , O_{32} ) 
 \;dO_{31}  dO_{32} dO_{21}  = \pi^2 /3 \; .
\end{eqnarray*}
\end{minipage}
\\

After normalization,  the push-forward  of the probability measure
\begin{eqnarray*}
3 {\pi^{-2}}
f( O_{21} ,O_{31} , O_{32} ) \; dO_{21} dO_{31} dO_{32}
\end{eqnarray*}
from the set
\begin{eqnarray*}
\bigl\{  (O_{21}, O_{31}, O_{32} ): \; 
 O_{32}^2 \le 1, \; O_{21}^2 + O_{31}^2 \le 1 \; \bigr\},
 \end{eqnarray*} 
 to the rotation group $SO(3)$, coincides with
the normalized rotation invariant Haar measure $H(dO)$ (see  Theorem 2.5.2. in \cite{AGZ}).
\fi
\section{EIGENVALUE AND EIGENVECTOR DISTRIBUTION} \label{section:eigendistrib}
\subsection{Zero-Mean Isotropic Gaussian Matrix} \label{zero:mean}
We consider first a  zero-mean symmetric random matrix $D$ 
with  isotropic  Gaussian distribution  \eqref{isotropic_prior:2},
where $\bar D={\bf 0}$. This is an important special case to consider. While it does not satisfy 
the physical requirement that the eigenvalues of a diffusion (or other transport) tensor are all non-negative, 
it illustrates the mathematical machinery necessary 
to derive a closed-form expression for the resulting distribution of tensor eigenvalues.
From the spectral decomposition  $D=O GO^{\top}$, it follows 
by using the change of variables \eqref{change:var} in the density \eqref{isotropic_prior:2},
that 
$O$ is independent from  $G$  and
represents a random rotation distributed according to the constrained 
probability  
\begin{align*}
2^m {\bf 1}\bigl( O \in {\mathcal O}^+(m) \bigr ) H_m(dO),
\end{align*}
and 
the ordered  $D$-eigenvalues   have joint density on 
$\bigl\{ \gamma \in \R^m : \gamma_1  > \dots >\gamma_m \bigr \}$
 \begin{eqnarray}  \label{eigenvalue:density} &&
  q_0(\gamma)=   Z_m(\mu,\lambda)  V(\gamma)
 \exp\biggl( - \biggl( \mu + \frac{\lambda}2 \biggr) \sum_{i=1}^m\gamma_i^2
 - \lambda \sum_{1\le i < j \le m} \gamma_i \gamma_j \biggr),
 \end{eqnarray}
 with normalizing constant
\begin{subequations}
 \begin{align} & \label{eigenv:normalizing:const}  
Z_m( \mu,\lambda) =  Z_m(1,0) \mu^{m(m+1)/4 } \sqrt{ 1+\lambda m/ (2\mu) }  \;, & \\ &
  Z_m( 1,0) =2^{-m}\mbox{Vol}( {\mathcal O}(m)) C_{m}(1,0)= 2^{m(m-1)/4} \prod\limits_{l=1}^m  \Gamma( l/2)^{-1}  & \; .
\end{align}
\end{subequations}
\begin{remark}
The density  \eqref{eigenvalue:density} is not generally Gaussian,
since the Vandermonde determinant induces repulsion between the 
 eigenvalues, which are
 never independent, even in the case with $\lambda=0$
 and the diagonal elements    $D_{ii}$ are independent.
 When  $\lambda=0$, after rescaling, \eqref{eigenvalue:density} is the well known
GOE eigenvalue density, which plays a special role below 
(see Theorem \ref{spectral:clustering}). For $m=3$,  $Z_3(\mu, \lambda)= 4\pi^{-1}\mu^{5/2} \sqrt{ 2\mu +3\lambda }$.
\end{remark}
\subsection{General Case}
\begin{theorem} \label{main:thm}
 Let $\bar D\in R^{m\times m}$ be a symmetric matrix with a spectral decomposition
 $\bar D=\bar O \bar G \bar O^{\top}$, where
$\bar G=\mbox{diag}( \bar \gamma_1 , \bar \gamma_2 , \dots,\bar \gamma_m)$,
$ \bar \gamma_1\ge \bar \gamma_2\ge \dots \ge \bar \gamma_m$ are the ordered eigenvalues of  $\bar D$,
and $\bar O\in {\mathcal O}(m)^+$ (which is not uniquely determined when there are repeated eigenvalues),
and let $D$ be a symmetric $m\times m$ Gaussian  matrix   with  density  \eqref{isotropic_prior:2} isotropic around
the mean value $\bar D$. Then,
the ordered $D$-eigenvalues $\gamma_1> \gamma_2 > \dots > \gamma_m$ 
have joint density 
\begin{multline}
 q_{\bar \gamma}( \gamma)= 
Z_m(\mu,\lambda) V( \gamma ) 
  \exp\biggl( -  \sum_{i,j=1}^m  \biggl(  \delta_{i,j }\mu + \frac{\lambda} 2 \biggr)
  (\gamma_i -\bar \gamma_i)(\gamma_j-\bar \gamma_j) \biggr)  \\
\times 
\exp\bigl( - 2\mu \sum_{i=1}^m  \gamma_i\bar \gamma_i \bigr) {\mathcal I}_m( 2\mu \bar \gamma,\gamma),
   \label{alt:eigdensity}
 \end{multline}
%
and ${\mathcal I}_m$ is the spherical
integral below  known as the Harish-Chandra-Itzykson-Zuber (HCIZ) integral
\cite{tao_blog,hikami_brezin}: 
  \begin{align*} {\mathcal I}_m(  \bar \gamma,  \gamma) =
\int\limits_{{\mathcal O}(m)} \exp\biggl( \tr\bigl( O G  O^{\top} \bar G   \bigr) \biggr) H_m(dO) =    
\int\limits_{{\mathcal O}(m)} \exp\biggl(  \sum_{ij}  \{ O_{ij}\}^2 \bar  \gamma_i  \gamma_j \biggr) H_m(dO)  \; .
\end{align*}

Conditionally on the eigenvalues $(\gamma_1,\dots,\gamma_m)$, the conditional probability of 
$R= \bar O^{\top}O$ 
has
density  
\begin{align} \label{eig_vector:density} q_{\bar \gamma}( R | \gamma)= 2^m 
{\mathcal I}_m( 2\mu \bar\gamma, \gamma)^{-1}  \exp\biggl( 2 \mu \sum_{i,j=1}^m  \bar \gamma_i \gamma_j  R_{ij}^2 \biggr) 
\end{align}
with respect to the Haar probability measure $H_m(dR)$  on  $\bar O^{\top} {\mathcal O}(m)^+$.
\end{theorem}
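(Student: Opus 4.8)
The plan is to substitute the spectral decomposition $D=OGO^{\top}$ into the Gaussian density \eqref{isotropic_prior:2}, transport it through the change of variables \eqref{change:var} by means of the volume factorization \eqref{measure:factorization}, integrate out the orthogonal factor to obtain the eigenvalue marginal \eqref{alt:eigdensity}, and read off the conditional law \eqref{eig_vector:density} of $R=\bar O^{\top}O$ along the way. First I would rewrite the exponent of \eqref{isotropic_prior:2} in spectral coordinates: from $\tr\bigl((D-\bar D)^{2}\bigr)=\tr(D^{2})-2\tr(D\bar D)+\tr(\bar D^{2})$, the cyclicity of the trace, and the substitution $R=\bar O^{\top}O$ (so that $O^{\top}\bar D O=R^{\top}\bar G R$), one gets $\tr(D^{2})=\sum_{i}\gamma_{i}^{2}$, $\tr(\bar D^{2})=\sum_{i}\bar\gamma_{i}^{2}$, $\tr(D\bar D)=\tr\bigl(G\,R^{\top}\bar G R\bigr)=\sum_{i,j}R_{ij}^{2}\,\bar\gamma_{i}\gamma_{j}$, and $\tr(D-\bar D)=\sum_{i}(\gamma_{i}-\bar\gamma_{i})$. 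Plugging this into \eqref{isotropic_prior:2} and using \eqref{measure:factorization}, the joint law of $(\gamma,O)$ on $\{\gamma_{1}>\cdots>\gamma_{m}\}\times{\mathcal O}(m)^{+}$ has density, with respect to $d\gamma\times H_{m}(dO)$,
\begin{equation*}
  C_{m}(\mu,\lambda)\,\mathrm{Vol}\bigl({\mathcal O}(m)\bigr)\,V(\gamma)\,
  \exp\!\Bigl(-\mu\textstyle\sum_{i}\gamma_{i}^{2}-\mu\sum_{i}\bar\gamma_{i}^{2}-\tfrac{\lambda}{2}\bigl(\sum_{i}(\gamma_{i}-\bar\gamma_{i})\bigr)^{2}\Bigr)\,
  \exp\!\Bigl(2\mu\textstyle\sum_{i,j}(\bar O^{\top}O)_{ij}^{2}\,\bar\gamma_{i}\gamma_{j}\Bigr),
\end{equation*}
so the whole $O$-dependence is confined to the last factor.

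The delicate step — the one I expect to need the most care to state cleanly — is integrating that last factor over the fundamental domain ${\mathcal O}(m)^{+}$ rather than over all of ${\mathcal O}(m)$. The integrand is invariant under the right action $O\mapsto OR_{0}$ of any reflection $R_{0}\in{\mathcal R}(m)$, since $(\bar O^{\top}OR_{0})_{ij}^{2}=(\bar O^{\top}O)_{ij}^{2}$; as ${\mathcal O}(m)^{+}$ is a fundamental domain for this group of $2^{m}$ elements and $\mathrm{Vol}({\mathcal O}(m)^{+})=2^{-m}\mathrm{Vol}({\mathcal O}(m))$, the integral over ${\mathcal O}(m)^{+}$ equals $2^{-m}$ times the integral over ${\mathcal O}(m)$. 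The latter, by left-invariance of $H_{m}$ under $O\mapsto\bar O R$ (i.e.\ the substitution $R=\bar O^{\top}O$), equals $\int_{{\mathcal O}(m)}\exp\bigl(2\mu\sum_{i,j}R_{ij}^{2}\bar\gamma_{i}\gamma_{j}\bigr)H_{m}(dR)={\mathcal I}_{m}(2\mu\bar\gamma,\gamma)$ by definition of the HCIZ integral. Hence marginalizing over $O$ multiplies the $\gamma$-part of the density by $2^{-m}{\mathcal I}_{m}(2\mu\bar\gamma,\gamma)$. Collecting the prefactor, $2^{-m}C_{m}(\mu,\lambda)\mathrm{Vol}({\mathcal O}(m))=Z_{m}(\mu,\lambda)$ by \eqref{eigenv:normalizing:const}, and rewriting the purely-$\gamma$ exponent by the elementary identity
\begin{equation*}
  -\mu\textstyle\sum_{i}\gamma_{i}^{2}-\mu\sum_{i}\bar\gamma_{i}^{2}-\tfrac{\lambda}{2}\bigl(\sum_{i}(\gamma_{i}-\bar\gamma_{i})\bigr)^{2}
  =-\textstyle\sum_{i,j}\bigl(\delta_{ij}\mu+\tfrac{\lambda}{2}\bigr)(\gamma_{i}-\bar\gamma_{i})(\gamma_{j}-\bar\gamma_{j})-2\mu\textstyle\sum_{i}\gamma_{i}\bar\gamma_{i}
\end{equation*}
(expand the right-hand side and use $-\mu\sum_{i}(\gamma_{i}-\bar\gamma_{i})^{2}-2\mu\sum_{i}\gamma_{i}\bar\gamma_{i}=-\mu\sum_{i}\gamma_{i}^{2}-\mu\sum_{i}\bar\gamma_{i}^{2}$), gives exactly \eqref{alt:eigdensity}.

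Finally, the conditional law is already visible in the displayed joint density: given $\gamma$, the conditional density of $O$ with respect to $H_{m}$ on ${\mathcal O}(m)^{+}$ is proportional to $\exp\bigl(2\mu\sum_{i,j}(\bar O^{\top}O)_{ij}^{2}\bar\gamma_{i}\gamma_{j}\bigr)$, and by the computation just performed its normalizing constant is $2^{m}/{\mathcal I}_{m}(2\mu\bar\gamma,\gamma)$. Pushing this forward under $O\mapsto R=\bar O^{\top}O$ — a left translation, hence $H_{m}$-preserving, carrying ${\mathcal O}(m)^{+}$ onto $\bar O^{\top}{\mathcal O}(m)^{+}$ — yields \eqref{eig_vector:density}. (When $\bar D$ has repeated eigenvalues $\bar O$ is not unique; the law of $O$ itself is nevertheless unambiguous, and only its presentation through $R$ inherits this ambiguity.)
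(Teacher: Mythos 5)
Your proof is correct and follows essentially the same route as the paper's: both pass to spectral coordinates via \eqref{measure:factorization}, substitute $R=\bar O^{\top}O$ exploiting Haar invariance, account for the $2^{\pm m}$ factor tied to the fundamental domain ${\mathcal O}(m)^{+}$, and identify the orthogonal integral as ${\mathcal I}_{m}(2\mu\bar\gamma,\gamma)$. You are merely more explicit about the trace identities and the final exponent rearrangement (the paper passes through its intermediate display \eqref{joint:exact:density} without spelling out that algebra), and your closing remark on the non-uniqueness of $\bar O$ and the well-posedness of the law of $O$ is a useful clarification the paper only implies.
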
\begin{proof}
As in the zero mean case, we start from the isotropic Gaussian  matrix density \eqref{isotropic_prior:2} with mean $\bar D$,
 By using the
spectral representations $D= O G O^{\top}$ and $\bar D= \bar O \bar G \bar O^{\top}$,
after  the change of variables described in section \ref{section:spectral},
we find the joint density of $(G,O)$
with respect to the product measure 
\begin{align*}
  d\gamma_1 \times \dots \times d\gamma_m  \times H_m(dO)
\; \mbox{ on } 
   \{ \gamma\in \R^m: \gamma_1 > \gamma_2 > \dots > \gamma_m  \bigr\} \times {\mathcal O}(m)^+ \; ,
\end{align*}
 given as
 \begin{multline}  \label{joint:exact:density} 
     q_{\bar D}( G,O ) = 
    C_m(\mu,\lambda) \mbox{Vol}( {\mathcal O}(m)) V(G) 
 \times \\   \exp\biggl( -\mu \tr( G- O^{\top} \bar O \bar G \bar O^{\top} O)^2 -
    \frac \lambda  2 \bigl\{ \tr( G-O^{\top} \bar O \bar G \bar O^{\top}  O) \bigr\}^2 \biggr)   
   \\
 \nonumber 
  =  2^{m}Z_m(\mu,\lambda)  V(\gamma)  
  \exp\biggl( - \mu \tr\bigl( G^2 + \bar G^2 \bigr)
   - \frac{\lambda} 2  \bigl\{ \tr\bigl( G - \bar G\bigr) \bigr\}^2\biggr)
   \times \\  \exp\biggl( 2\mu  \tr\bigl( \bar O^{\top} O G  O^{\top} \bar O \bar G \bigr)   \biggr) \; .     
   \end{multline}
%
We change coordinates with $O\mapsto R=\bar O^{\top} O \in {\mathcal O}(m)^+$
and using the invariance property the  Haar measure
we see that
\begin{multline*} 2^m
   \int\limits_{{\mathcal O}(m)^+}  \exp\biggl( 2\mu  \tr\bigl( \bar O^{\top} O G  O^{\top} \bar O \bar G \bigr)   \biggr) H_m(dO)
  \\ =  \int\limits_{{\mathcal O}(m)}  \exp\biggl( 2\mu  \tr\bigl(  R^{\top} G  R \bar G \bigr)   \biggr) H_m(dR),
\end{multline*}
which proves \eqref{alt:eigdensity}.
 In the new coordinates
 \begin{multline}    
\label{HCIZ:firststep}
 q_{\bar \gamma}( G,R) =  
 2^m    Z_m(\mu ,\lambda)   V(G) \exp\biggl( - \mu \tr\bigl( (G -\bar G)^2 \bigr)
   - \frac{\lambda} 2  \bigl\{ \tr\bigl( G - \bar G\bigr) \bigr\}^2\biggr) 
\times \\   \exp\biggl( 2\mu \tr\bigl(  G  R^{\top} \bar G  R\bigr) -2\mu \gamma\cdot\bar\gamma   \biggr)\;  
    \end{multline}
 with respect to $d\gamma \times H_m(dR)$ on 
   $\{ \gamma\in \R^m: \gamma_1 > \gamma_2 > \dots > \gamma_m  \bigr\} \times   \bar O^{\top}{\mathcal O}(m)^+$  which proves  \eqref{eig_vector:density}.
\end{proof}
   
   \begin{remark} 
 When $\bar G=\bar \gamma Id$ we say that $\bar D$ is {\it spherical}. In such case $G$ is stochastically independent
from $O$, which follows  the Haar probability distribution.    
Equation \eqref{alt:eigdensity} shows the density of the ordered eigenvalues.
Often the random matrix literature deals with 
the density of the unordered eigenvalues on $\R^m$, which depends only on the order statistics and it differs by
a $1/m!$ factor.
The HCIZ integral admits  the series expansion  
\begin{eqnarray*}
   {\mathcal I}_m( \bar\gamma , \gamma) =  \sum_{k=0}^{\infty } \frac 1 {k!}  \sum_{ \alpha \in \Pi_m^k } 
   \frac{ C_{\alpha}( \bar \gamma ) C_{\alpha}( \gamma)  }{  C_{\alpha}( {\bf 1} ) },
\end{eqnarray*}
where the sum is over the set of partitions of $k$ into at most $m$ parts
\begin{eqnarray*}
  \Pi_m^{k} =\bigl \{  \alpha\in \N^m :   \alpha_1 \ge \alpha_2 \ge \dots \ge \alpha_m \ge 0  \mbox{ and  }
     \alpha_1 + \alpha_2 + \dots +  \alpha_m = k   \bigr \},
\end{eqnarray*}
and $C_{\alpha}(z_1,\dots,z_m)$ is  the  homogeneous zonal polynomial corresponding to the partition $\alpha$ \cite{james,muirhead,takemura,farrell}.
Theorem  \ref{HCIZ-asymptotics}  deals with the second order  asymptotics of ${\mathcal I}_m(n \gamma, \bar\gamma)
$ as $n\to\infty$.
When $m=3$
\begin{align*} 
{\mathcal I}_3(  \bar \gamma,  \gamma)
& =  \frac 1 {8\pi} \int_0^{2\pi} \int_0^{2\pi}\int_0^{\pi }
\exp\biggl(  \bar \gamma \Omega( \theta, \phi, \psi)  \gamma^{\top}\biggr) \sin( \theta)  d\theta d\phi d\psi , \quad \mbox{with }
 & \\ \Omega( \theta,\phi,\psi)&= 
\left[ { \footnotesize
\begin{matrix} 
(\cos\phi \cos \psi - \sin\phi\sin\psi \cos\theta)^2 &  (\cos\phi\sin\psi +\sin\phi\cos\psi \cos\theta)^2 & (\sin \phi\sin \theta)^2 \\
 ( \sin\phi \cos \psi - \cos\phi\sin\psi \cos\theta)^2 & (\sin\phi\sin\psi -\cos\phi\cos\psi \cos\theta)^2 &   (\cos \phi\sin \theta)^2 \\
               (\sin\psi \sin\theta)^2                                          &   (\cos\psi \cos\theta)^2   &       (\cos\theta)^2      
\end{matrix} } 
\right] &
  \end{align*}
expressed in  Euler angular coordinates. 
\end{remark}
\section{SMALL NOISE ASYMPTOTICS} \label{section:smallnoise}
\subsection{Spectral grouping}
 \begin{theorem} \label{spectral:clustering}
 Let $(D^{(n)},n \in \N)$ be
 a sequence of random $m\times m$ symmetric matrices 
such that, for some deterministic  limit
$\bar D$ and scaling sequence $a^{(n)}\to \infty$, 
\begin{eqnarray} \label{weak:convergence:assumption} 
  \sqrt{ a^{(n)}  } \bigl( D^{(n)} - 
  \bar D  \bigr)
   \stackrel{law}{\longrightarrow}  X,
\end{eqnarray} 
where $\mbox{vec}(X)$ is Gaussian with zero-mean and covariance $\Sigma(1,\lambda)$ 
for some $\lambda> - 2/m$ as in  \eqref{Omega_D:matrix:2nd}.

  Denoting by $(\gamma_j^{(n)} ,1\le j \le m)$  and $(\bar \gamma_j :1\le j \le m)$ the
  ordered eigenvalues of $D^{(n)}$ and $\bar D$, respectively,
  assume that  $\bar D$ has $ k$ distinct eigenvalues, i.e.
 \begin{eqnarray*}
  \bar \gamma_1 =  \dots =\bar \gamma_{\ell_1}> \bar \gamma_{\ell_1+1} = \dots 
 = \bar \gamma_{\ell_2} > \dots  > \bar \gamma_{\ell_{k-1}+1} = \dots = \bar\gamma_{\ell_k} 
  \end{eqnarray*}
   with $1\le k \le m$, $\ell_0=0,\ell_k=m$, 
   corresponding to eigenspaces of respective dimensions   $m_i=( \ell_i -\ell_{i-1})$.
   Consider the clusters 
  \begin{eqnarray*} 
    C_i^{(n)} = \{  \gamma_{\ell_{i-1}+1}^{(n)} >  \gamma_{\ell_{i-1}+2}^{(n)} > \dots >\gamma_{\ell_{i}}^{(n)} \bigr\}
  \quad 1\le i \le k \end{eqnarray*}
   formed by the ordered eigenvalues of $D^{(n)}$ corresponding to the  eigenspaces of $\bar D$ taken in the
    $\bar D$-eigenvalue order, and  define  the corresponding cluster  barycenters as  
 \begin{eqnarray*}  
 \widetilde\gamma_i^{(n)}= \frac{ \gamma_{\ell_{i-1}+1}^{(n)} + \dots + \gamma_{\ell_i}^{(n)} }{ m_i } 
 ,   \quad 1\le i \le k .
 \end{eqnarray*}  
We also consider
 the eigenvalue fluctuations
   \begin{eqnarray*}
   \xi_j^{(n)} = \sqrt{ a^{(n)} } ( \gamma_j^{(n)} - \bar \gamma_j  ) \quad  1\le j \le m  \;,
   \end{eqnarray*}
and the cluster barycenter fluctuations
  \begin{eqnarray*}
   \widetilde \xi_i^{(n)} =\sqrt{a^{(n)} } \bigl( \widetilde \gamma_i^{(n)} -  \bar\gamma_{\ell_i} \bigr)  
   =   \frac{1}{ m_i}  \sum_{j=\ell_{i-1}+1}^{\ell_i}  \xi^{(n)}_j
   ,\quad  1\le i \le k  \; .
   \end{eqnarray*}
As $n\to\infty$, the following limiting distribution appears:
   \begin{enumerate} \item \label{aspc:1} For the cluster barycenters, we have
   \begin{eqnarray*}
     \bigl( \widetilde \xi^{(n)}_1, \dots, \widetilde \xi^{(n)}_k \bigr)
     \stackrel{law}{\longrightarrow}  ( \widetilde  X_1,\dots, \widetilde X_k)\; ,
   \end{eqnarray*}
   where 
 \begin{eqnarray} \label{cluster:center:distribution}  
 \widetilde X_{i}= \frac{ X_{\ell_{i-1}+1 ,\ell_{i-1}+1} + \dots +  X_{\ell_i,\ell_i}}{ m_i }  ,   \quad 1\le i \le k
 \end{eqnarray}   
 have joint Gaussian density 
        \begin{align} \label{center:density:asymptotic} 
   q( \widetilde \xi_1,\dots \widetilde \xi_k)   = \sqrt{ 1+ \lambda m/2  } \; \prod_{i=1}^k  \sqrt  { \frac{ m_i }{\pi } }
  \exp\biggl(- \sum_{i=1}^k m_i {\widetilde \xi_i}^2   -\frac{\lambda}{2} \sum_{i=1}^k \sum_{j=1}^k m_i m_j \widetilde \xi_i \widetilde \xi_j
  \biggr)  \; ,    \end{align}  
 with zero-mean and covariance 
   \begin{eqnarray*}
      E\bigl(  \widetilde X_{i}\widetilde X_{j} \bigr) = \frac 1 2  \biggl( \frac{ \delta_{ij} }{ m_i} - \frac {\lambda} 
      { 2+ \lambda m} \biggr) , \; 1 \le i ,j\le k\; .
   \end{eqnarray*}
 \item  \label{aspc:2}
 For each cluster, the differences between the eigenvalues and  their barycenter 
  \begin{eqnarray*} \xi^{(n)}_{j} - \widetilde \xi^{(n)}_i =
 \sqrt{  a^{(n)} } \bigl( \gamma_j^{(n)}- \widetilde \gamma_i^{(n)} \bigr)  \;: \quad   i=1,\dots,k, \quad j= \ell_{i-1}+1, \dots, \ell_{i}    
  \end{eqnarray*}
 are asymptotically independent from their cluster barycenter and the other clusters, 
 with limiting distribution
  \begin{align} \label{eq:eigen:spreads}
   \bigl(  \xi^{(n)}_{\ell_{i-1} +1} - \widetilde \xi^{(n)}_i, \dots ,  \xi^{(n)}_{\ell_{i}} - \widetilde \xi^{(n)}_i\bigr) 
  \stackrel{law}{\longrightarrow}\bigl( \gamma_1 -\widetilde \gamma_{m_i} ,\dots,\gamma_{m_i}-\widetilde \gamma_{m_i} \bigr) \quad  1\le i \le k
  , \end{align}
where $(\gamma_1>\gamma_2 > \dots> \gamma_{m_i})$ are eigenvalues of the standard $m_i$-dimensional GOE
 of symmetric Gaussian matrices with zero mean and precision $A_{m_i}(1,0)$ 
 with barycenter
 \begin{align*}
  \widetilde \gamma_{m_i} = \frac 1 {m_i} \sum_{j=1}^{m_i} \gamma_j  \sim {\mathcal N}\bigl( 0, 1/(2d_i) \bigr)  \;.
\end{align*}
Moreover the differences
$\bigl( \gamma_1 -\widetilde \gamma_{m_i} ,\dots,\gamma_{m_i}-\widetilde \gamma_{m_i}  \bigr)$ are   independent
from  $\widetilde \gamma_{m_i}$,
with degenerate density 
 \begin{multline}\label{cluster:density:asymptotic} 
    q_{m_i}( 
     \zeta_{\ell_{i-1}+1},\dots ,\zeta_{\ell} )  = \\ Z_{m_i}( 1,0)
\sqrt{  \pi   m_i } 
  \exp\biggl(-  \sum_{j=\ell_{i-1}+1}^{\ell_i} \zeta_j^2 \biggr) 
  \delta_0( \zeta_{\ell_{i-1}+1 } + \dots + \zeta_{\ell_i}\bigr) \prod_{ \ell_{i-1}+1 \le j < h \le \ell_i}
   \big\vert \zeta_j-\zeta_h\big\vert  \;   
 \; ,
    \end{multline} where $\delta_0(z)$ denotes the Dirac distribution, which is also
    the conditional density of the GOE eigenvalues $(\gamma_1,\dots, \gamma_{m_i})$ conditioned
   on $\{\gamma_1+ \dots+ \gamma_{m_i}=0\}$. 
\item  \label{aspc:3} In particular  for each cluster,
\begin{align*}
\bigl(
\sqrt{a^{(n)}} (\gamma_j^{(n)}-\gamma_h^{(n)}) : \ell_{i-1}+1\le j < h \le \ell_{i-1} \bigr)
\stackrel{law}{\to}  \bigl( (\gamma_j-\gamma_h ) : 1\le j < h \le m_i \bigr)
\end{align*}
and these eigenvalue differences are asymptotically independent from the cluster barycenter and  the  other clusters.
\end{enumerate}
  \end{theorem}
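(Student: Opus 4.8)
The plan is to bypass the exact eigenvalue density of Theorem~\ref{main:thm} and argue directly from the weak--convergence hypothesis \eqref{weak:convergence:assumption}, using first--order perturbation theory for eigenvalues with a degenerate limit to reduce the statement to a Gaussian covariance computation. First I would normalize: the isotropic Gaussian law with covariance $\Sigma(1,\lambda)$ is invariant under the conjugation $X\mapsto U^{\top}XU$ for $U\in{\mathcal O}(m)$, and the eigenvalues of $D^{(n)}$ are unchanged if $D^{(n)}$ is replaced by $\bar O^{\top}D^{(n)}\bar O$; hence I may assume $\bar D=\bar G=\diag(\bar\gamma_1,\dots,\bar\gamma_m)$ is diagonal, so that cluster $i$ occupies the coordinate indices $\ell_{i-1}+1,\dots,\ell_i$ and the spectral projection $P_i$ of $\bar D$ onto the $i$-th eigenspace is simply the coordinate projection onto that block; in particular the non--uniqueness of $\bar O$ plays no role. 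Write $\epsilon_n=1/\sqrt{a^{(n)}}\to0$ and $D^{(n)}=\bar D+\epsilon_n X^{(n)}$ with $X^{(n)}\stackrel{law}{\to}X$; by the Skorokhod representation I may assume $X^{(n)}\to X$ almost surely, so that the remainder estimates below hold pathwise.

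The main step, and the one I expect to be the real obstacle, is the perturbation argument that the off--diagonal blocks of the noise and the second--order terms do not survive in the limit. Since the $k$ distinct eigenvalues of $\bar D$ are separated, for $n$ large the ordered eigenvalues $\gamma^{(n)}_{\ell_{i-1}+1}>\dots>\gamma^{(n)}_{\ell_i}$ forming $C^{(n)}_i$ are precisely the eigenvalues of $D^{(n)}$ lying in a fixed small disc around $\bar\gamma_{\ell_i}$. Standard degenerate perturbation theory, via the Riesz projection $P^{(n)}_i=\tfrac1{2\pi\mathrm{i}}\oint_{\Gamma_i}(z-D^{(n)})^{-1}\,dz$ along a fixed contour $\Gamma_i$ isolating $\bar\gamma_{\ell_i}$, gives, for $j$ in cluster $i$,
\[
\gamma^{(n)}_j=\bar\gamma_{\ell_i}+\epsilon_n\,\lambda_j\!\bigl(P_iX^{(n)}P_i\bigr)+r^{(n)}_j,\qquad |r^{(n)}_j|\le C\,\epsilon_n^2\|X^{(n)}\|^2,
\]
where $\lambda_j(\cdot)$ denotes decreasingly ordered eigenvalues and $C$ depends only on the spectral gaps of $\bar D$; since $\|X^{(n)}\|$ stays bounded pathwise, $r^{(n)}_j=O(\epsilon_n^2)$. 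As $P_iX^{(n)}P_i$ restricted to the range of $P_i$ equals the $m_i\times m_i$ diagonal submatrix $X^{(n)}_{[i]}$, it follows that
\[
\xi^{(n)}_j=\sqrt{a^{(n)}}\bigl(\gamma^{(n)}_j-\bar\gamma_j\bigr)=\lambda_j\!\bigl(X^{(n)}_{[i]}\bigr)+O(\epsilon_n)\longrightarrow\lambda_j\!\bigl(X_{[i]}\bigr),
\]
jointly over all clusters, by continuity of the map sending a matrix to the ordered eigenvalues of its diagonal blocks. (Alternatively, one could extract the same limit by combining the exact density \eqref{alt:eigdensity} with the Laplace--type asymptotics of the HCIZ integral in Theorem~\ref{HCIZ-asymptotics}; the present route uses only \eqref{weak:convergence:assumption}.)

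It remains to identify the joint laws of the limits $\lambda_j(X_{[i]})$, which is Gaussian bookkeeping. Write $X_{[i]}=\widetilde X_i I_{m_i}+X^{0}_{[i]}$ with $\widetilde X_i=\tfrac1{m_i}\tr X_{[i]}=\tfrac1{m_i}\sum_{j=\ell_{i-1}+1}^{\ell_i}X_{jj}$ and $X^{0}_{[i]}$ traceless, so $\widetilde\xi^{(n)}_i\to\widetilde X_i$ and $\xi^{(n)}_j-\widetilde\xi^{(n)}_i\to\lambda_j(X^{0}_{[i]})$. From the diagonal block of $\Sigma(1,\lambda)$ one computes $\E(\widetilde X_i\widetilde X_{i'})=\tfrac12\bigl(\delta_{ii'}/m_i-\lambda/(2+\lambda m)\bigr)$, which gives part~\ref{aspc:1}, and inverting this covariance produces the Gaussian density \eqref{center:density:asymptotic}. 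The algebraic heart of part~\ref{aspc:2} is that the interaction term $\lambda/(2+\lambda m)$ cancels identically on centering a block: a short computation gives $\E\bigl((X_{jj}-\widetilde X_i)(X_{kk}-\widetilde X_i)\bigr)=\tfrac12(\delta_{jk}-1/m_i)$, so $X^{0}_{[i]}$ has exactly the law of the traceless part of an $m_i$-dimensional GOE with precision $A_{m_i}(1,0)$; the same cancellation yields $\E\bigl((X_{jj}-\widetilde X_i)(X_{kk}-\widetilde X_{i'})\bigr)=0$ for $i\neq i'$ and $\E\bigl(\widetilde X_i(X_{jj}-\widetilde X_i)\bigr)=0$, which together with the independence of the off--diagonal entries (disjoint index sets across blocks, independent of all diagonal entries) gives the asserted mutual independence. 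The degenerate density \eqref{cluster:density:asymptotic} is then the $m_i$-dimensional GOE eigenvalue density \eqref{eigenvalue:density} at $(\mu,\lambda)=(1,0)$ conditioned on $\{\gamma_1+\dots+\gamma_{m_i}=0\}$: the Gaussian factor reduces to $\exp(-\sum\zeta_j^2)$, the Vandermonde to $\prod_{j<h}|\zeta_j-\zeta_h|$, and the constant $Z_{m_i}(1,0)\sqrt{\pi m_i}$ comes from integrating out the barycenter, which is ${\mathcal N}(0,1/(2m_i))$. Finally part~\ref{aspc:3} is immediate: for $j,h$ in cluster $i$, $\sqrt{a^{(n)}}(\gamma^{(n)}_j-\gamma^{(n)}_h)=(\xi^{(n)}_j-\widetilde\xi^{(n)}_i)-(\xi^{(n)}_h-\widetilde\xi^{(n)}_i)\to\lambda_j(X^{0}_{[i]})-\lambda_h(X^{0}_{[i]})$ is a function of $X^{0}_{[i]}$ alone, so it inherits the independence from part~\ref{aspc:2}.
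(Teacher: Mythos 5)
Your proof is correct, and it takes a genuinely different route from the paper's. The paper starts from the exact joint eigenvalue–eigenvector density \eqref{HCIZ:firststep} of the isotropic Gaussian matrix (which presupposes $D^{(n)}$ has the exact density with growing precision $\mu_n=n$, $\lambda_n/n\to\lambda$), factors the eigenvector part through the decomposition $R=\check R\exp(\hat S)$, applies a Laplace approximation to the angular integral as $\hat S\to 0$ (yielding Theorem~\ref{HCIZ-asymptotics} as a byproduct), and then changes variables to barycenters and spreads. Your argument instead works directly from the weak-convergence hypothesis \eqref{weak:convergence:assumption} through Skorokhod coupling and first-order degenerate perturbation theory: the key observation that the cluster eigenvalues of $\bar D+\epsilon_n X^{(n)}$ agree with the eigenvalues of $\bar\gamma_{\ell_i}I+\epsilon_n X^{(n)}_{[i]}$ up to $O(\epsilon_n^2)$ replaces the entire HCIZ/Laplace machinery. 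What this buys is generality and economy: the perturbative route never needs $D^{(n)}$ to have a density (the paper invokes Sch\'eff\'e's theorem, which is really only justified for the exactly Gaussian subcase), and it reduces the whole theorem to the linear-algebra fact plus a covariance computation on $\Sigma(1,\lambda)$. What the paper's route buys is that the same Laplace computation also proves the eigenvector asymptotics (Theorem~\ref{eigenvector:asymptotics}) and the second-order HCIZ expansion, which your approach does not address. Your Gaussian bookkeeping — the cancellation of the $\lambda/(2+\lambda m)$ term upon centering, the identification of $X^0_{[i]}$ with a traceless GOE block, the zero cross-covariances — is exactly right and matches the paper's final change of variables. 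One small presentational point: the claim that $X^0_{[i]}$ "has exactly the law of the traceless part of an $m_i$-dimensional GOE with precision $A_{m_i}(1,0)$" deserves the one-line check that the off-diagonal variances $1/4$ also agree, which they do since the isotropic law with $\mu=1$ and the GOE $A_{m_i}(1,0)$ have the same off-diagonal precision $4\mu=4$.
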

\begin{remark}:
The weak convergence hypothesis \eqref{weak:convergence:assumption} 
implies 
\begin{eqnarray*}
D^{(n)}=    O^{(n)} \mbox{diag}\bigl(\gamma^{(n)}\bigr)  { O^{(n)}}^{\top}    
\stackrel{P}{\longrightarrow} \bar D=\bar O \;\mbox{diag}\bigl(  \bar \gamma \bigr)   { \bar O}^{\top} \; ,
\end{eqnarray*}
 which means  that
$\gamma^{(n)}\stackrel{P}{\longrightarrow}\bar\gamma$ and $\bar O^{\top} O^{(n)}
\stackrel{P}{\longrightarrow}\mbox{I}$ in probability.
The asymptotic distribution in \eqref{eq:eigen:spreads}
depends only on $m_i$ (the size of the cluster)
and not on the interaction parameter $\lambda$.
When $\sqrt{ a^{(n)}}( D^{(n)}-\bar D)$ has an isotropic Gaussian distribution with covariance
$\Sigma(1,\lambda)$, and the mean  $\bar D = \bar \gamma I$ is spherically symmetric, 
there is only one cluster and 
the distributional equalities in  Theorem \ref{spectral:clustering}  
   hold exactly without going to the limit in distribution. 
A related result is given in \cite{ibrahim}  for the joint asymptotic distribution
  of eigenvalues and eigenvectors.  
  Similar results have been  derived in the special case of  
  non-central Wishart random matrices, and sample covariance matrices which are asymptotically Gaussian \cite{anderson},\cite[Theorem. 9.5.5]{muirhead}.
  \end{remark}
 Next, we illustrate the implications of 
 Theorem \ref{spectral:clustering} in the 3-dimensional situation which is relevant for DTI:  

 \begin{corollary}\label{asymptotic:eigdensity} Let $D$ be $3\times 3$ symmetric matrix with Gaussian density 
 \eqref{isotropic_prior:2}. As $\mu \to \infty$ with $\lambda> - 2\mu /3$,
we have four asymptotic regimes depending on the symmetries of the mean  matrix $\bar D$.
\begin{enumerate}
 \item  $\bar\gamma_1 > \bar \gamma_2 > \bar \gamma_3$ (totally asymmetric tensor)
 
The joint density of  $(\gamma_1,\gamma_2,\gamma_3)$  is approximated by the Gaussian density of $( D_{11},D_{22},D_{33})$,
i.e.
\begin{align} \label{asymptotic:gaussian:anisotropic}
   q( \gamma_1,\gamma_2,\gamma_3)  \simeq\frac{\mu \sqrt{ 2\mu + 3\lambda} }{  \pi^{3/2} \sqrt 2 }
   \exp\biggl(  - \mu  
   \sum_{i} (\gamma_i-\bar \gamma_i)^2 - \frac {  \lambda } 2\sum_{ij} (\gamma_i -\bar \gamma_i)(\gamma_j-\bar\gamma_j)\biggr).
 \end{align}
\item  $\bar\gamma_1 > \bar \gamma_2 = \bar \gamma_3$ (prolate tensor). Let $ \widetilde \gamma_{23}=(\gamma_2 + \gamma_3)/2$.
 The joint distribution of $(\gamma_1, \widetilde \gamma_{23})$ is approximated by  the Gaussian distribution 
of $\bigl( D_{11}, (D_{22}+D_{33})/2  \bigr)$,
i.e.   
\begin{align} & \label{AD:RD:gaussian}   
   q( \gamma_1, \widetilde \gamma_{23} ) \simeq \pi^{-1}  \sqrt{ 2\mu^2 +  \lambda^2 3/4 + 3\mu \lambda } \;
\times & \\ & \nonumber \exp\biggl(  - \biggl( \mu + \frac{\lambda} 2  \biggr) (\gamma_1 -\bar \gamma_1)^2- 2( \mu + \lambda)
    ( \widetilde \gamma_{23}-\bar \gamma_2)^2 -2 \lambda (\gamma_1 -\bar \gamma_1)  ( \widetilde \gamma_{23} -\bar \gamma_{2} )\biggr)
  & \end{align}
Conditionally on $(\gamma_1, \widetilde \gamma_{23} )$, the asymptotic distribution of $(\gamma_2,\gamma_3)$
is degenerate, with $\gamma_3=(2 \widetilde  \gamma_{23}-\gamma_2)$ and  
\begin{align} \label{asymptotic:conditional}
 q( \gamma_2| \widetilde \gamma_{23} ) \simeq  (\gamma_2- \widetilde \gamma_{23} ) 
 \exp\bigl(  - 2\mu (\gamma_2- \widetilde \gamma_{23} )^2 \bigr) 2\mu
 {\bf 1}(\gamma_2 >  \widetilde \gamma_{23} ),
\end{align}
that is $( \gamma_2- \widetilde \gamma_{23} )=( \widetilde \gamma_{23}-\gamma_3)\simeq \sqrt{ \tau }$, with $\tau$
exponentially distributed with rate $2\mu$ and  independent from the barycenter $ \widetilde \gamma_{23}$.
\item  $\bar\gamma_1 = \bar \gamma_2 > \bar \gamma_3$ (oblate tensor). This is similar to the prolate case.
Let $ \widetilde \gamma_{12}=(\gamma_1 + \gamma_2)/2$. Asymptotically
the joint distribution of $( \widetilde\gamma_{12}, \gamma_3 )$ is approximated by the Gaussian distribution 

of $\bigl(  (D_{11}+D_{22})/2, D_{33} \bigr)$, with
  \begin{multline}
  q( \widetilde  \gamma_{12},\gamma_3 )  \simeq \pi^{-1}  \sqrt {   2\mu^2 + \lambda^2 3/4 + 3\mu \lambda }
 \times \\  \exp\biggl(  - \biggl( \mu + \frac{\lambda} 2  \biggr) (\gamma_3 -\bar \gamma_3)^2- 2( \mu + \lambda)
    ( \widetilde \gamma_{12}-\bar \gamma_1)^2 -2 \lambda (\gamma_3 -\bar \gamma_3)  ( \widetilde \gamma_{12} -\bar \gamma_{1} )\biggr)
\end{multline}
   and the asymptotic conditional distribution of   $(\gamma_1,\gamma_2)$ given $( \widetilde \gamma_{12},\gamma_3)$
is degenerate with $\gamma_2=(2 \widetilde \gamma_{12}-\gamma_1)$, and
\begin{align}\label{asymptotic:conditional2}
 q( \gamma_1| \widetilde  \gamma_{12} ) \simeq
 ( \gamma_1 -  \widetilde \gamma_{12} ) 
 \exp\bigl(  - 2\mu (\gamma_1- \widetilde \gamma_{12} )^2 \bigr) 2\mu  
 {\bf 1}( \gamma_1 > \widetilde  \gamma_{12} ),
\end{align}
i.e.  $( \gamma_1- \widetilde \gamma_{12} )=( \widetilde \gamma_{12}-\gamma_2)\simeq \sqrt{ \tau}$, with
$\tau$ exponentially distributed  with rate $2\mu$, 
independent from $ \widetilde \gamma_{12}$.

\item   $\bar \gamma_1= \bar \gamma_2 = \bar \gamma_3$ (isotropic tensor)
  
  The barycenter
  $ \widetilde \gamma_{123}= (\gamma_1+\gamma_2 +\gamma_3 )/3= \frac 1 3 \tr(D)$
 is Gaussian with mean $\bar\gamma_1$
  and variance $1/(6\mu + 9\lambda)$.
  
 Conditionally on $\widetilde \gamma_{123}$, $(\gamma_1,\gamma_2,\gamma_3)$  is degenerate,
  with $\gamma_2 = (3\widetilde \gamma_{123}- \gamma_1- \gamma_3)$, and the conditional density
   of $(\gamma_1,\gamma_3)$ given $\widetilde\gamma_{123}$  is approximated as
\begin{multline} \label{conditional:cluster} 
  q( \gamma_1,\gamma_3| \widetilde \gamma_{123} ) \simeq   (2\mu)^{5/2}\sqrt{ \frac 3 {\pi} }         
  ( \gamma_1- \gamma_3)( 2\gamma_1 + \gamma_3-3 \widetilde\gamma_{123}) (3 \widetilde \gamma_{123}-\gamma_1 -2\gamma_3) 
 \times \\ 
 \exp\bigl( - 2 \mu \bigl\{   (\gamma_1- \widetilde \gamma_{123} )^2+ (\gamma_3-\widetilde\gamma_{123})^2
 +(\gamma_1-\widetilde\gamma_{123})(\gamma_3-\widetilde \gamma_{123}) 
 \bigr\}\bigr)  { \bf  1}( \gamma_1 > \gamma_{123} > \gamma_3) 
\end{multline}
 Asymptotically, the conditional distribution of  the vector 
 \begin{align*}\sqrt{2\mu} (\gamma_1-\widetilde\gamma_{123},
 \gamma_2-\widetilde\gamma_{123}, 2 \widetilde\gamma_{123} -\gamma_1-\gamma_2 )
 \end{align*}
  coincides with the conditional distribution of the ordered eigenvalues of the 3-dimensional standard GOE, conditioned on 
having zero barycenter, and 
   are independent from $\widetilde\gamma_{123}$.
\end{enumerate}
\end{corollary}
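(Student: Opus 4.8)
The plan is to obtain Corollary~\ref{asymptotic:eigdensity} as a direct specialization of Theorem~\ref{spectral:clustering} to $m=3$, enumerated over the multiplicity patterns of $\bar D$. First I would place the corollary inside that framework: with $D$ of density \eqref{isotropic_prior:2} the diverging quantity is $\mu$ itself, so I take $D^{(n)}=D$ and scaling sequence $a^{(n)}=\mu_n\to\infty$. Since $A(\mu,\lambda)=\mu A(1,\lambda/\mu)$, the vector $\sqrt{\mu_n}\,\mbox{vec}(D-\bar D)$ is Gaussian with covariance $\Sigma(1,\lambda/\mu_n)$, so \eqref{weak:convergence:assumption} holds with limiting covariance $\Sigma(1,0)$ and Theorem~\ref{spectral:clustering} applies (its requirement $\lambda>-2/m$ is here exactly the corollary's constraint $\lambda/\mu_n>-2/3$). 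The four regimes of the corollary are precisely the cluster patterns $(m_1,m_2,m_3)=(1,1,1)$, $(m_1,m_2)=(1,2)$, $(m_1,m_2)=(2,1)$ and $m_1=3$, and in each case the statement is assembled from the two outputs of Theorem~\ref{spectral:clustering}: the limiting Gaussian law \eqref{center:density:asymptotic} of the cluster-barycenter fluctuations $\widetilde\xi_i^{(n)}$ (part~\ref{aspc:1}), and the degenerate within-cluster law \eqref{cluster:density:asymptotic} of the centered eigenvalues, namely the law of the eigenvalues of the standard $m_i$-dimensional GOE conditioned to have zero barycenter, independent of that barycenter (part~\ref{aspc:2}).

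The barycenter statements are a finite-dimensional Gaussian matter. By part~\ref{aspc:1}, for the appropriate pattern the rescaled barycenter vector $\bigl(\sqrt{\mu}(\gamma_1-\bar\gamma_1),\sqrt{\mu}(\widetilde\gamma_{23}-\bar\gamma_2)\bigr)$ (prolate case, and analogously for the others) is asymptotically Gaussian with limit \eqref{center:density:asymptotic}; since the diagonal linear functionals $D_{11}$, $(D_{22}+D_{33})/2$, $(D_{11}+D_{22})/2$, $\tfrac13\tr(D)$ of the isotropic Gaussian matrix are themselves averages of diagonal entries of the limiting isotropic Gaussian $X$ and so have the same rescaled limit, the eigenvalue-barycenter density is asymptotically equal to the exact Gaussian density of the matching functionals, whose covariances come from $\Sigma(\mu,\lambda)=A(\mu,\lambda)^{-1}$ via \eqref{Omega_D:matrix:2nd}. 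Writing out these densities gives \eqref{asymptotic:gaussian:anisotropic}, \eqref{AD:RD:gaussian}, its oblate analogue, and the one-dimensional law of $\tfrac13\tr(D)$ with variance $1/(6\mu+9\lambda)$ (the displayed normalizing constants being those of the exact finite-$\mu$ laws, which differ from the $\mu\to\infty$ limit only at lower order).

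For the within-cluster statements only the clusters of size $2$ (prolate/oblate) and size $3$ (isotropic) require work; a singleton cluster contributes a point mass and its barycenter is the eigenvalue itself. For $m_i=2$ I would use the Dirac factor $\delta_0(\zeta_j+\zeta_h)$ in \eqref{cluster:density:asymptotic} to eliminate one variable, leaving a density proportional to $|\zeta|\,e^{-2\zeta^2}$ on $\{\zeta>0\}$ for the rescaled half-difference of the two cluster eigenvalues; undoing the $\sqrt{\mu}$ scaling gives the conditional law \eqref{asymptotic:conditional} (resp. \eqref{asymptotic:conditional2}), and squaring it exhibits $(\gamma_2-\widetilde\gamma_{23})^2$ as exponentially distributed of rate $2\mu$ and independent of $\widetilde\gamma_{23}$. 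For $m_i=3$ I would eliminate $\zeta_2$ via $\zeta_1+\zeta_2+\zeta_3=0$: the Vandermonde factor becomes $|\zeta_1-\zeta_3|\,|2\zeta_1+\zeta_3|\,|\zeta_1+2\zeta_3|$ and the Gaussian weight becomes $\exp\!\bigl(-2(\zeta_1^2+\zeta_1\zeta_3+\zeta_3^2)\bigr)$; substituting $\zeta_j\leftrightarrow\sqrt{\mu}(\gamma_j-\widetilde\gamma_{123})$ and using that, under $\gamma_1>\gamma_2>\gamma_3$, the three linear factors equal $\gamma_1-\gamma_3$, $\gamma_1-\gamma_2=2\gamma_1+\gamma_3-3\widetilde\gamma_{123}$ and $\gamma_2-\gamma_3=3\widetilde\gamma_{123}-\gamma_1-2\gamma_3$ (all positive), together with $Z_3(1,0)=2^{5/2}/\pi$, yields \eqref{conditional:cluster}; the final display of case~4 is just part~\ref{aspc:2} of Theorem~\ref{spectral:clustering} restated for $m_i=3$. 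The asymptotic independence of barycenters from spreads, and of distinct clusters, is inherited verbatim.

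Since Theorem~\ref{spectral:clustering}, whose proof already carries the needed HCIZ small-noise asymptotics, does all the analytic work, there is no genuine obstacle: the task is a translation of that theorem into the $m=3$ DTI vocabulary. I expect the most error-prone step to be the $m_i=3$ reduction --- transforming the Vandermonde product after imposing the linear constraint, matching the three linear factors to the eigenvalue ordering, and carrying the chain of $\sqrt{\mu}$-Jacobians and normalizing constants ($Z_{m_i}(1,0)$, $\mbox{Vol}(\mathcal O(m_i))$) correctly through to the densities displayed in \eqref{asymptotic:gaussian:anisotropic}--\eqref{conditional:cluster}.
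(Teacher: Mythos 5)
Your proposal is correct and follows essentially the same route the paper itself takes: the Corollary is not given a separate proof but is understood as the $m=3$ specialization of Theorem~\ref{spectral:clustering}, with the barycenter densities identified with the exact finite-$\mu$ Gaussian laws of the corresponding diagonal functionals and the within-cluster spreads read off from the conditioned-GOE density \eqref{cluster:density:asymptotic}. Your careful bookkeeping in fact exposes two apparent typos in the paper's displayed formulas: the prefactor in \eqref{asymptotic:conditional} (and likewise \eqref{asymptotic:conditional2}) should be $4\mu$ rather than $2\mu$ in order to integrate to one and to be consistent with the paper's own claim that $(\gamma_2-\widetilde\gamma_{23})^2$ is exponential of rate $2\mu$, and the normalizing constant in \eqref{AD:RD:gaussian} should read $\pi^{-1}\sqrt{2\mu^2+3\mu\lambda}$, without the spurious $3\lambda^2/4$ term, as follows from the determinant of the precision matrix implicit in the displayed exponent.
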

\begin{remark} For a totally anisotropic  mean tensor $\bar D$, 
the asymptotic Gaussian density \eqref{asymptotic:gaussian:anisotropic} for the
rescaled eigenvalue fluctuations around their barycenter
  coincides with the  Gaussian eigenvalue density (18)  
  of \cite{basser-pajevic03}. However  in  \cite{basser-pajevic03} it was 
postulated erroneously that the  map $D=\bigl(O G O^{\top}\bigr)\mapsto G$ 
was linear with constant Jacobian and \eqref{asymptotic:gaussian:anisotropic} 
would be the  eigenvalue density of a random tensor with isotropic Gaussian noise,
which is not correct, in the  non-asymptotic case the eigenvalue density is given by \eqref{alt:eigdensity}. 
\end{remark}
\subsection{Axial and Radial diffusivity marginals}
Two eigenvalue statistics that are particularly relevant in DTI are: Axial Diffusivity (AD), which corresponds to the largest $D$-eigenvalue  $\gamma_1$
and it is measured along the principal axis of the
diffusion tensor and is considered a putative axonal damage marker,
and radial
diffusivity (RD), which correponds to $\widetilde \gamma_{23}=(\gamma_2+ \gamma_3)/2$ and   
is measured perpendicular to the principal axis and thought to be
sensitive to the degree of hindrance that diffusing water molecules experience due
to the axonal membrane and myelin sheath. In this sub-section we derive the 
distributions for AD and RD
in dimension $m=3$ when $D$ has the density given in \eqref{isotropic_prior:2}.
When the mean matrix $\bar D$ is prolate, we have shown in Corollary \ref{asymptotic:eigdensity} that in the small noise limit
the joint distribution of AD and RD is asymptotically Gaussian, given in Eq. \eqref{AD:RD:gaussian}.

In the case of $D$ with spherical mean $\bar D= \bar \gamma \mbox{Id}$, 
we can also derive the marginal densities of AD and RD. See also  \cite{chiani}, which contains a recursive
expressions for the distribution of the largest GOE eigenvalue  in arbitrary dimension. 
After changing variables in the joint conditional  eigenvalue density \eqref{conditional:cluster}, we see that $z_i=(\gamma_i-{\widetilde \gamma_{123}})$
are independent from the barycenter ${\widetilde \gamma_{123}}$,
$z_1= (\gamma_1- {\widetilde \gamma_{123}})$ and $(-z_3)=({\widetilde \gamma_{123}}- \gamma_3)$ are identically distributed, with
marginal density
\begin{align*} & q(z_1)=  
 (2\mu)^{5/2}\sqrt{ \frac 3 {\pi} }
 \exp( -   3\mu z_1^2 /2 ){\bf 1} ( z_1> 0)  \times & \\ &  \times 
 \int_{-2 z_1}^{-z_1/2}
 (z_3-z_1)( 2 z_1 + z_3) ( 2 z_3 +z_1)  \exp\bigl( -  (z_3 +z_1/2)^2 2\mu \bigr) dz_3
& \\
&  =
\mu^{3/2} \sqrt{ \frac 6 {\pi} }  
 \biggl(  \frac{9 z_1^2}{ 2 } + \frac{ \exp( -  9\mu z_1^2  /2)-1 }{ \mu} \biggr) \exp( -   3\mu  z_1^2 /2 ){\bf 1} ( z_1> 0)
 \end{align*}
and cumulative distribution function
\begin{align*} &  P\bigl( \gamma_1 - {\widetilde \gamma_{123}} \le t \bigr)= 1- P\bigl(  \gamma_3 -{\widetilde \gamma_{123}} \le - t \bigr) = 
1-P\bigl( \widetilde \gamma_{23}- \widetilde \gamma_{123} \le -t/2 \bigr) 
& \\ & =
  \mu^{3/2}\sqrt{ \frac 6 {\pi} }
 \int_0^t 
 \biggl( \frac{ 9 z^2 }{ 2 } + \frac{  \exp( -9 \mu  z^2 /2)-1 }{ \mu} \biggr)\exp( -   3 \mu z^2  /2 ) dz
&\\ &
=\bigl\{ \Phi(  t \sqrt{3\mu } )+ \Phi( t \sqrt{12\mu} ) -1 \bigr\} -3t \sqrt{ \frac{ 3\mu}{ 2\pi} } 
\exp( - 3\mu t^2  /2 ) & 
 \end{align*} 
 where 
\begin{align*}\phi(t) = \frac 1 {\sqrt { 2\pi} } \exp( -t^2/2 ), \quad \Phi(t)=\int_{-\infty}^t \phi(s)ds \end{align*} denote 
the standard Gaussian density and cumulative distribution function, respectively.
The cumulative distribution function of $\gamma_1$  
is obtained by taking  convolution with the barycenter $\widetilde \gamma_{123}$ distribution ${\mathcal N}\bigl(\bar \gamma, 1/(6\mu + 9 \lambda) \bigr)$,
obtaining 
\begin{align*} & P(  \gamma_1- \bar\gamma \le t)= 1-P( \gamma_3-  \bar \gamma \le -t )=& \\ &
 \int_{-\infty }^t P( \gamma_1  -{\widetilde \gamma_{123}} \le t-x) 
 \exp\biggl( - (6\mu + 9\lambda) x^2/2 \biggr) \sqrt{ \frac{ 6\mu + 9 \lambda} {2\pi} }dx & \\ & 
 \if 0
 = \frac 1 2 \int_{-\infty}^t  \erf(  (t-x)\sqrt{3\mu /2} ) 
  \exp\biggl( - x^2 (6\mu + 9\lambda)/2 \biggr) \sqrt{ \frac{ 6\mu + 9 \lambda} {2\pi} }dx
 &\\& +\frac 1 2 \int_{-\infty}^t  \erf( (t-x) \sqrt{6\mu} )  
  \exp\biggl( - x^2 (6\mu + 9\lambda)/2 \biggr) \sqrt{ \frac{ 6\mu + 9 \lambda} {2\pi} }dx
  &\\&
   -3  \int_{-\infty}^t    (t-x) \sqrt{ \frac{ 3\mu}{ 2\pi} } 
\exp( -(t-x)^2  \mu 3/2  ) 
\exp\biggl( - x^2 (6\mu + 9\lambda)/2 \biggr) \sqrt{ \frac{ 6\mu + 9 \lambda} {2\pi} }dx  & \\ &
\fi
 =\int_{-\infty}^t \bigl[
\Phi(  (t-x)\sqrt{3\mu } )  + \Phi( (t-x) 2\sqrt{3\mu} ) \bigr]
  \exp\biggl( - (6\mu + 9\lambda) x^2/2 \biggr) \sqrt{ \frac{ 6\mu + 9 \lambda} {2\pi} }dx
 &\\&
  +
  \frac{ \mu }{ ( \mu + \lambda ) } \biggl\{
 \frac{ (2\mu+ 3\lambda) t}{  \sqrt{ \mu+ \lambda } }
 \phi\biggl(  \frac{ t  \sqrt{ 2\mu^2+ 3\lambda \mu}  }{ \sqrt{\mu + \lambda } } \biggr)
 \biggl[ 
 \Phi\biggl( \frac{ t (2\mu + 3\lambda) }{ \sqrt{\mu+\lambda} }\biggr) -1 \biggr] 
 +\frac{ \phi( t \sqrt{ 6\mu + 9 \lambda } )  }{\sqrt {2\pi} } \biggr\} & \\
& -  \Phi\bigl( t \sqrt{6\mu + 9 \lambda}\bigr) \; .
  &
\end{align*}
The joint density of $AD$ and $RD$ is given by
\begin{align*}
& q( \gamma_1,  \widetilde \gamma_{23} )=
  \frac{ 4\mu^{3/2} \sqrt{2\mu + 3 \lambda} }{\pi } 
  \biggl(  (\gamma_1- \widetilde \gamma_{23} )^2 + \frac{ \exp( - 2\mu(\gamma_1-
  \widetilde \gamma_{23} )^2  )-1 }{ 2\mu} \biggr)  \times
& \\ & \exp\biggl( - \bigl( \mu + \frac{\lambda} 2 \bigr)\bigl(\gamma_1- \bar \gamma\bigr )^2   - (2\mu +  2\lambda  )  \bigl(\widetilde \gamma_{23}- \bar\gamma\bigr)^2  
 - 2\lambda  \bigl(\gamma_1-\bar\gamma \bigr) \bigl(\widetilde \gamma_{23} -\bar\gamma \bigr)    \biggr)  {\bf 1} ( \gamma_1> \widetilde \gamma_{23} )\; .
&
\end{align*} 
\subsection{Eigenvector asymptotics}
In the settings of Theorem  \ref{spectral:clustering}, where   $D^{(n)}$ and $\bar D$
have respective spectral decompositions  $O^{(n)} G^{(n)} {O^{(n)}}^{\top}$ and $\bar O\bar G \bar O^{\top}$, we
study the asymptotics of 
  $R^{(n)} = \bar O^{\top} O^{(n)} \in {\mathcal O}(m)$.
Omitting the $n$ superscript,   we use the decomposition  
$R=\check R \hat R$,
where
   \begin{eqnarray} \label{block:diagonal structure} &&
  \check R=\begin{pmatrix} \check R_{(1,1)}  &      & &{\text{\large 0} }  \\  &\check  R_{(2,2)} &      & & \\ 
                      &        &  \ddots & \\    {\text{\large 0} } & & & \check  R_{(k,k)}   \end{pmatrix}
                     \; ,
                      \end{eqnarray}
\noindent is  block diagonal with blocks
 $\check R_{(j,j)} 
 \in{\mathcal O}(m_j)$ corresponding to the $m_j$-dimensional eigenspaces of $\bar D$.
 
 These matrices form a  subgroup 
 ${\mathcal K}_{\bar\gamma}\simeq {\mathcal O}(m_1) \times{\mathcal O}(m_2) \times \dots  \times {\mathcal O}(m_k)$,
such that $\check R \bar D \check R^{\top}=\bar D, \; \forall \check R\in {\mathcal K}_{\bar \gamma} $,
and the conditional eigenvector density \eqref{eig_vector:density} is invariant under the action of 
${\mathcal K}_{\bar\gamma}$.

$\hat R \in {\mathcal S} { \mathcal O }(m)$  is a rotation
with Lie matrix exponential representation 
 \begin{align} \label{skew:symmetric:class}\hat R=\exp( \hat S)= \sum_{k=0}^{\infty} \frac{ \hat S^k}{k!} , 
          \mbox{ where }     \hat S=   {\small  \begin{pmatrix}
   0  & \hat  S_{(1,2) }  &  \dots  & \hat S_{ (1,k-1) }  &  \hat S_{(1,k) } \\
 -\hat  S_{(1,2)}^{\top}  &  0  &  \dots     &  \hat  S_{(2,k-1)}  & \hat S_{(2,k)} \\
     &  &  \ddots   &    &  \\
 -\hat  S_{(1,k-1)}^{\top}  &  -\hat S_{ (2,k-1)}^{\top} &  \dots &      0 & \hat S_{(k-1,k)}  \\
  -\hat S_{(1,k)}^{\top} & - \hat S_{(2,k)}^{\top}  &  \dots  & - \hat S_{(k-1, k)}^{\top} &  0
  \end{pmatrix}                         }
 \end{align} 
 \noindent  is 
skew-symmetric, with blocks $\hat S_{(j,l)}=-\hat S_{(l,j)}^{\top} \in \R^{m_j\times m_l}$ for $1\le j<l\le k$, and zero $(m_j\times m_j)$-blocks
 on the diagonal, with $\bigl( m^2 - \sum_{i=1}^k m_i^2\bigr)/2$ free parameters. The subgroup
 \begin{align*}
 {\mathcal C}_{\bar\gamma}=\bigl\{  \exp( \hat S) : \hat  S\mbox{ has the skew-symmetric structure \eqref{skew:symmetric:class} } \bigr\} \;
\end{align*}
 is a complement subgroup of ${\mathcal K}_{\bar\gamma}$ in ${\mathcal O}(m)$.

In dimension $m=3$, $\hat R= \exp( \hat S)$ is a clockwise rotation 
by an angle $\theta= \sqrt{ \hat S_{23}^2+ \hat S_{13}^2 + \hat S_{12}^2 }$
around the unit vector  $ u = ( \hat S_{23}, -\hat S_{13} ,\hat S_{12} )/\theta$. The matrix exponential
$\exp(d \hat S)$ of an infinitesimal $3\times3$ skew symmetric matrix is the composition of  three
 infinitesimal rotations around the Cartesian axes $x,y,z$,
 by the Euler angles $d\hat S_{23}$ (roll), $d\hat S_{13 }$ (pitch), 
 and $d\hat S_{12}$ (yaw), respectively,
 which commute up to infinitesimals of  higher order.

 \begin{theorem}\label{eigenvector:asymptotics}
   In the settings of Theorem  \ref{spectral:clustering}, let
   \begin{align*}
 \bar O^{\top} O^{(n)}=  R^{(n)} = \check R^{(n)}  \hat R^{(n)} =\check R^{(n)}   \exp\bigl(  \hat S^{(n)} )
   \end{align*}
    with $\check R^{(n)}\in {\mathcal K}_{\bar \gamma}$ and  $\hat S^{(n)}$ skew-symmetric. 
  The blocks $\check R^{(n)}_{(i,i)}, i=1,\dots,k$ corresponding to the $\bar D$ eigenspaces
  are asymptotically 
   distributed according to the product of the Haar measures on the respective orthogonal groups ${\mathcal O}(m_i)$,
   with the constraint $\bar O \check R^{(n)} \in {\mathcal O}(m)^+$,
   and asymptotically independent from the eigenvalue fluctuations. 
 
 After rescaling, the entries  $\bigl( \sqrt{a^{(n)}}\hat S^{(n)}_{ij}: \bar \gamma_i > \bar \gamma_j)$ are
   asymptotically mutually independent and independent from $\check R^{(n)}$ and the eigenvalue fluctuations, with
   limiting Gaussian distribution
   \begin{align*}
                 {\mathcal N}\biggl( 0,  \frac 1 { 4 (\bar \gamma_i - \bar \gamma_j )^2 } \biggr) .
   \end{align*}  
 \end{theorem}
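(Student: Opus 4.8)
The plan is to pass to the eigenframe of $\bar D$ and run a two-scale first-order perturbation of the spectral decomposition, keeping the \emph{within-eigenspace} directions (which admit no perturbative expansion and carry the Haar structure) separate from the \emph{between-eigenspace} directions (genuine $O(\epsilon_n)$ perturbations, carrying the Gaussian structure), where $\epsilon_n:=1/\sqrt{a^{(n)}}\to 0$. First, conjugating by the fixed $\bar O$, set $B^{(n)}=\bar O^{\top}D^{(n)}\bar O$ and $H^{(n)}=(B^{(n)}-\bar G)/\epsilon_n$. The covariance $\Sigma(1,\lambda)$ of \eqref{Omega_D:matrix:2nd} is that of a zero-mean isotropic Gaussian matrix, hence invariant under $X\mapsto Q^{\top}XQ$ for fixed $Q\in{\mathcal O}(m)$; so $H^{(n)}\stackrel{law}{\to}\widetilde X:=\bar O^{\top}X\bar O$, again a zero-mean isotropic Gaussian matrix with covariance $\Sigma(1,\lambda)$, each of whose principal sub-blocks is again an isotropic Gaussian matrix of lower dimension. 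In these coordinates $B^{(n)}=R^{(n)}G^{(n)}{R^{(n)}}^{\top}$ with $R^{(n)}=\bar O^{\top}O^{(n)}$, so it suffices to analyse the spectral decomposition of the vanishing perturbation $\bar G+\epsilon_nH^{(n)}$ of the block matrix $\bar G$.

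Second, since the $k$ distinct eigenvalues of $\bar G$ are $O(1)$-separated, for large $n$ the spectrum of $B^{(n)}$ splits into $k$ clusters, the $i$th of diameter $O(\epsilon_n)$ about $\bar\gamma_{\ell_i}$, and standard analytic perturbation theory for the spectral decomposition applies. Within cluster $i$ the eigenvalues of $B^{(n)}$ are $\bar\gamma_{\ell_i}$ plus $\epsilon_n$ times the eigenvalues of the diagonal block $H^{(n)}_{(i,i)}$, up to $O(\epsilon_n^2)=o(\epsilon_n)$, and the eigenvector matrix restricted to the $i$th $\bar D$-eigenspace is the eigenvector matrix of $H^{(n)}_{(i,i)}$ up to $o(1)$; while the tilt of the $i$th eigenspace into the $j$th ($\bar\gamma_i\ne\bar\gamma_j$) is exactly of order $\epsilon_n$ and, to leading order, equals $\epsilon_n(\bar\gamma_j-\bar\gamma_i)^{-1}$ times the corresponding off-block sub-block of $H^{(n)}$ (conjugated by the diagonal blocks). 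Reading this through the decomposition $R^{(n)}=\check R^{(n)}\exp(\hat S^{(n)})$ identifies $\check R^{(n)}_{(i,i)}$, up to $o(1)$ and the ${\mathcal O}(m)^+$ sign/order convention, with the eigenvector matrix of $H^{(n)}_{(i,i)}\stackrel{law}{\to}\widetilde X_{(i,i)}$ and $\bigl(\xi^{(n)}_j-\widetilde\xi^{(n)}_i\bigr)$ with its centred eigenvalues, in agreement with Theorem \ref{spectral:clustering}; and it gives $\hat S^{(n)}_{ij}/\epsilon_n=(\bar\gamma_j-\bar\gamma_i)^{-1}\bigl({\check R^{(n)}}^{\top}H^{(n)}\check R^{(n)}\bigr)_{ij}+o(1)$ for $\bar\gamma_i\ne\bar\gamma_j$.

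Third, I would read off the joint limit from the structure of $\widetilde X$. Write $\widetilde X=\widetilde X_\Delta+\widetilde X_\perp$ for its block-diagonal and off-block parts; since all off-diagonal entries of an isotropic Gaussian matrix are i.i.d.\ ${\mathcal N}(0,1/4)$ and independent of the diagonal, $\widetilde X_\perp$ is independent of $\widetilde X_\Delta$. Each diagonal block $\widetilde X_{(i,i)}$ is itself a zero-mean isotropic Gaussian matrix in dimension $m_i$, so by Section \ref{zero:mean} its eigenvector matrix is distributed as $2^{m_i}{\bf 1}(\cdot\in{\mathcal O}(m_i)^+)H_{m_i}(\cdot)$ and is independent of its eigenvalues; moreover, by the independence of the centred blocks underlying part~\ref{aspc:2} of Theorem \ref{spectral:clustering}, the eigenvector matrices obtained for distinct $i$ are mutually independent and jointly independent of all the eigenvalue fluctuations. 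Hence $\check R^{(n)}_{(i,i)}$ converges in law to the Haar measure on ${\mathcal O}(m_i)$, the global constraint $\bar O\check R^{(n)}\in{\mathcal O}(m)^+$ merely selecting the representative, independently over $i$ and of the eigenvalue fluctuations. Finally $\check R^{(n)}$ is block diagonal and, in the limit, a measurable function of $\widetilde X_\Delta$; conditioning on it, left and right multiplication by orthogonal matrices preserves the law of a Gaussian matrix with i.i.d.\ entries, so the off-block entries of ${\check R^{(n)}}^{\top}\widetilde X_\perp\check R^{(n)}$ remain i.i.d.\ ${\mathcal N}(0,1/4)$ and are independent of $\check R^{(n)}$ and of $\widetilde X_\Delta$. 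Dividing by $(\bar\gamma_j-\bar\gamma_i)$ then shows that $\bigl(\hat S^{(n)}_{ij}/\epsilon_n:\bar\gamma_i>\bar\gamma_j\bigr)$ are asymptotically mutually independent, independent of $\check R^{(n)}$ and of the eigenvalue fluctuations, with limit ${\mathcal N}\bigl(0,\tfrac{1}{4(\bar\gamma_i-\bar\gamma_j)^2}\bigr)$, which is the assertion. As a consistency check, in the exact isotropic Gaussian case the same conclusions follow by Laplace asymptotics on the conditional density \eqref{eig_vector:density}: using $\sum_iR_{ij}^2=\sum_jR_{ij}^2=1$, its exponent equals $c(\gamma)-\mu\sum_{i,j}(\bar\gamma_i-\gamma_j)^2R_{ij}^2$, which on $\{\gamma_j=\bar\gamma_j+O(\mu^{-1/2})\}$ is flat in $\check R$ near ${\mathcal K}_{\bar\gamma}$ and reduces to $-2\sum_{i'<j'}(\bar\gamma_{i'}-\bar\gamma_{j'})^2\|\sqrt{\mu}\,\hat S_{(i',j')}\|_F^2+o(1)$.

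The hard part is making the two-scale expansion rigorous as a statement of weak convergence: the reduced resolvents are bounded, but the resolvent of the \emph{restricted} block is not, because within a cluster the eigenvalue gaps are only $O(\epsilon_n)$, so a naive perturbation series in $\epsilon_n$ diverges in the within-block directions. The remedy is exactly the $\check R/\hat S$ splitting: one never expands the within-block part but instead diagonalizes $H^{(n)}_{(i,i)}$ exactly and invokes the continuous mapping theorem on the full-measure event that all $k$ blocks of $\widetilde X$ have simple spectrum, where the ordered eigenvalues and the eigenvector matrices are continuous functions of the matrix; the off-block directions are controlled by the first-order expansion above, whose remainder is genuinely $o(\epsilon_n)$ because the relevant spectral gaps there are $O(1)$. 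A secondary nuisance is the bookkeeping of the descending-order and ${\mathcal O}(m)^+$ conventions, under which $\check R^{(n)}_{(i,i)}$ converges to Haar measure only after conditioning on the global sign/order constraint; these conventions flip signs of eigenvector columns but affect neither the independence nor the Gaussian-limit statements.
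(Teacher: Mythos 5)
Your proposal is correct, but it follows a genuinely different route from the paper. The paper works with the exact joint eigenvalue/eigenvector density \eqref{HCIZ:firststep}: writing $R=\check R\exp(\hat S)$ it observes that the density is $\check R$-independent, uses the wedge-product factorization $\bigl(R^{\top}dR\bigr)^{\wedge}=\bigl(\hat R^{\top}d\hat R\bigr)^{\wedge}\prod_i\bigl(\check R^{\top}_{(i,i)}d\check R_{(i,i)}\bigr)^{\wedge}$ to show that $\check R$ is Haar and independent of $(\gamma,\hat S)$, and then applies a Laplace approximation based on the second-order expansion \eqref{R:expansion} of the matrix exponential to extract the Gaussian limit \eqref{angle:fluctuations} for the rescaled $\hat S$; the same Laplace computation delivers Theorem \ref{HCIZ-asymptotics} as a by-product. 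You instead conjugate to the eigenframe of $\bar D$, run first-order spectral perturbation theory to identify $\check R^{(n)}_{(i,i)}$ with the eigenvector matrix of the diagonal block $H^{(n)}_{(i,i)}$ and $\hat S^{(n)}_{ij}/\epsilon_n$ with $(\bar\gamma_j-\bar\gamma_i)^{-1}\bigl(\check R^{\top}H^{(n)}\check R\bigr)_{ij}$, and then read off the joint limit from structural properties of the isotropic Gaussian limit $\widetilde X$: each block marginal is again isotropic Gaussian of lower dimension (so its eigenvector matrix is Haar and independent of its eigenvalues), the off-block entries are i.i.d.\ ${\mathcal N}(0,1/4)$ and independent of the block-diagonal part, and conjugation by a fixed block-diagonal orthogonal matrix preserves their law. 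This is more probabilistic, bypasses the Jacobian/wedge-product calculus and the HCIZ integral entirely, and makes the independence claims fall out transparently from the $\widetilde X_{\Delta}/\widetilde X_{\perp}$ splitting; the price is the analytic work of making the two-scale perturbation rigorous around the degenerate eigenvalues, which you correctly diagnose (the within-block reduced resolvent is unbounded, so a naive series diverges) and handle by never expanding within a block, diagonalizing $H^{(n)}_{(i,i)}$ exactly, and invoking the continuous mapping theorem on the simple-spectrum event, while the off-block directions are controlled by $O(1)$ gaps. Both routes are sound; the paper's is self-contained within its density-based framework, while yours is arguably more conceptual and would adapt more readily to non-isotropic limits.
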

 {\bf Remark}: Theorem  \ref{eigenvector:asymptotics}
  extends  Theorem 4.1 in \cite{schwartzman} for $\bar D$ with non-negative distinct eigenvalues,
 given also in \cite{pajevicbasser2010},
  to the case with repeated eigenvalues.  
\subsection{Second order approximation of the  HCIZ-integral}
 \begin{theorem}\label{HCIZ-asymptotics} Let $\gamma,\bar\gamma\in \R^m$ ordered vectors, such that
   the coordinates $(\gamma_1 >\gamma_2 > \dots >\gamma_m)$ are distinct, while the $\bar\gamma$ coordinates may
   coincide, with multiplicities $m_i=( \ell_i -\ell_{i-1})$ and
 \begin{align*} 
  \bar \gamma_1 = \dots =\bar \gamma_{\ell_1}> \bar \gamma_{\ell_1+1} = \dots 
 = \bar \gamma_{\ell_2} > \dots  > \bar \gamma_{\ell_{k-1}+1} = \dots = \bar\gamma_{\ell_k}\; ,
  \end{align*}
 for $0=\ell_0<\ell_1< \dots<\ell_k=m,  \quad 1\le k \le m$. Then, as $n\to\infty$, 
 \begin{align} \label{HCIZ:asymptotic:eq} &
 \lim_{n\to\infty} 
 {\mathcal I}_m( n \gamma, \bar \gamma) 
   \exp( -n \gamma\cdot  \bar\gamma) n^{\bigl( m^2 -\sum\limits_{i=1}^k m_i^2 \bigr)/4}= & \\ \nonumber &
   \frac{ \prod\limits_{l=1}^{m} \Gamma( l/2) }
   { \prod\limits_{i=1}^{k}\prod\limits_{l=1}^{m_i}  \Gamma( l/2) }
  \prod\limits_{i=1}^{k-1} \prod\limits_{j=\ell_{i-1}+1}^{\ell_i} \prod_{h=\ell_i+1}^m 
  \bigl[  ( \gamma_j - \gamma_h)  
  (\bar \gamma_j - \bar \gamma_h)  \bigr]^{-1/2} \; .
 &
 \end{align}
 \end{theorem}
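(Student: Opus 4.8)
The plan is to evaluate $\mathcal{I}_m(n\gamma,\bar\gamma)=\int_{\mathcal{O}(m)}\exp\!\big(nf(O)\big)\,H_m(dO)$ with $f(O)=\tr\!\big(O^{\top}\diag(\gamma)\,O\,\diag(\bar\gamma)\big)=\sum_{ij}O_{ij}^{2}\gamma_i\bar\gamma_j$ by Laplace's method on the compact group $\mathcal{O}(m)$. The first step is to locate the maximum of $f$. Writing $v_j=\sum_i O_{ij}^2\gamma_i$ for the diagonal of $O^{\top}\diag(\gamma)O$, so that $\sum_j v_j=\sum_i\gamma_i$, and Abel-summing $f=\sum_j\bar\gamma_j v_j$ across the blocks of equal $\bar\gamma$-values, the task becomes maximizing the partial block-sums $\sum_{j\le\ell_i}v_j=\tr\!\big(P^{\top}\diag(\gamma)P\big)$ over $m\times\ell_i$ matrices $P$ with orthonormal columns; by Ky Fan's maximum principle each such sum is at most $\gamma_1+\dots+\gamma_{\ell_i}$, with equality if and only if the span of the first $\ell_i$ columns of $O$ equals $\mathrm{span}(e_1,\dots,e_{\ell_i})$ (here the strict ordering of $\gamma$ makes the optimal subspace unique). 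Imposing this for every $i$ forces $O$ to be block-diagonal with respect to the $\bar D$-eigenspaces, so $\max_O f=\gamma\cdot\bar\gamma$ and the maximizer set is exactly the subgroup $\mathcal{K}_{\bar\gamma}\simeq\mathcal{O}(m_1)\times\dots\times\mathcal{O}(m_k)$ — a single coset of $\mathcal{O}(m)/\mathcal{K}_{\bar\gamma}$.

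Since $\check R\,\diag(\bar\gamma)\,\check R^{\top}=\diag(\bar\gamma)$ for $\check R\in\mathcal{K}_{\bar\gamma}$, the integrand is constant on right cosets, $f(O\check R)=f(O)$, so the integral descends to the quotient: $\mathcal{I}_m(n\gamma,\bar\gamma)=\int_{\mathcal{O}(m)/\mathcal{K}_{\bar\gamma}}\exp\!\big(nf(\dot O)\big)\,\nu(d\dot O)$, where $\nu$ is the $\mathcal{O}(m)$-invariant probability measure. Viewing $\mathcal{O}(m)\to\mathcal{O}(m)/\mathcal{K}_{\bar\gamma}$ as a Riemannian submersion whose fibres are all isometric to $\mathcal{K}_{\bar\gamma}$, one has $\mathrm{Vol}(\mathcal{O}(m))=\mathrm{Vol}(\mathcal{K}_{\bar\gamma})\cdot\mathrm{Vol}(\mathcal{O}(m)/\mathcal{K}_{\bar\gamma})$, so $\nu=\big(\mathrm{Vol}(\mathcal{K}_{\bar\gamma})/\mathrm{Vol}(\mathcal{O}(m))\big)\,d\mathrm{vol}$. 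Near the unique maximizing coset $\dot I$ I would coordinatize the quotient by the complement subgroup $\mathcal{C}_{\bar\gamma}$ of \eqref{skew:symmetric:class}: $\hat S\mapsto\exp(\hat S)\mathcal{K}_{\bar\gamma}$ is a local diffeomorphism onto a neighbourhood of $\dot I$ parametrized by the off-block-diagonal skew entries $(\hat S_{jh}:\bar\gamma_j>\bar\gamma_h)$, and there $\nu(d\dot O)=\big(\mathrm{Vol}(\mathcal{K}_{\bar\gamma})/\mathrm{Vol}(\mathcal{O}(m))\big)\bigwedge_{\bar\gamma_j>\bar\gamma_h}d\hat S_{jh}\,\big(1+O(|\hat S|)\big)$.

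The third step is the Hessian of $f$ at $\dot I$. Expanding $f(\exp\hat S)=\tr\!\big(e^{-\hat S}\diag(\gamma)e^{\hat S}\diag(\bar\gamma)\big)$ to second order, the linear term vanishes (it equals $\tr(\diag(\gamma)[\hat S,\diag(\bar\gamma)])=0$) and the quadratic term equals $-\sum_{j<h}(\gamma_j-\gamma_h)(\bar\gamma_j-\bar\gamma_h)\hat S_{jh}^{2}$; restricted to the complement directions this is $-\sum_{\bar\gamma_j>\bar\gamma_h}(\gamma_j-\gamma_h)(\bar\gamma_j-\bar\gamma_h)\hat S_{jh}^{2}$, which is negative definite because $\gamma$ is strictly decreasing and $\bar\gamma$ drops strictly between blocks — so on the quotient the maximum is nondegenerate, with exactly $N=(m^{2}-\sum_i m_i^{2})/2$ contracting directions. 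A standard Laplace estimate now finishes: the complement of a small neighbourhood of $\dot I$ contributes at most a constant times $e^{n(\gamma\cdot\bar\gamma-\delta)}$ and is negligible, while on the neighbourhood the substitution $\hat S=t/\sqrt{n}$ together with dominated convergence (absorbing the cubic remainder of $f$ and the $O(|\hat S|)$ factor in $\nu$) gives
\[
\mathcal{I}_m(n\gamma,\bar\gamma)\sim \frac{\mathrm{Vol}(\mathcal{K}_{\bar\gamma})}{\mathrm{Vol}(\mathcal{O}(m))}\,e^{n\gamma\cdot\bar\gamma}\prod_{\bar\gamma_j>\bar\gamma_h}\sqrt{\frac{\pi}{\,n(\gamma_j-\gamma_h)(\bar\gamma_j-\bar\gamma_h)\,}}\,.
\]
Inserting $\mathrm{Vol}(\mathcal{K}_{\bar\gamma})=\prod_i\mathrm{Vol}(\mathcal{O}(m_i))$ and the explicit formula $\mathrm{Vol}(\mathcal{O}(m))=2^{m}\pi^{m(m+1)/4}\prod_{l=1}^{m}\Gamma(l/2)^{-1}$ from Section \ref{section:spectral}, the powers of $2$ cancel (as $\sum_i m_i=m$) and the powers of $\pi$ combine with the $\pi^{N/2}$ from the Gaussian integrals to leave precisely the Gamma ratio $\prod_{l=1}^{m}\Gamma(l/2)/\prod_{i=1}^k\prod_{l=1}^{m_i}\Gamma(l/2)$ and the exponent $n^{N/2}=n^{(m^{2}-\sum_i m_i^{2})/4}$ of \eqref{HCIZ:asymptotic:eq}; the product over $\{\bar\gamma_j>\bar\gamma_h\}$ is exactly $\prod_{i=1}^{k-1}\prod_{j=\ell_{i-1}+1}^{\ell_i}\prod_{h=\ell_i+1}^{m}$.

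I expect the main obstacle to be the bookkeeping in the second step: pinning down the normalization of the invariant measure on $\mathcal{O}(m)/\mathcal{K}_{\bar\gamma}$, i.e. verifying that the wedge-product chart coming from $\mathcal{C}_{\bar\gamma}$ is compatible with the coset fibration so that the leading constant is precisely $\mathrm{Vol}(\mathcal{K}_{\bar\gamma})/\mathrm{Vol}(\mathcal{O}(m))$ with no spurious Jacobian. The uniqueness-of-maximizer claim and the Hessian computation are, by contrast, routine once the Ky Fan characterization is in place.
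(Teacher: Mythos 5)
Your proof is correct, and the analytic core is the same as the paper's: both use the decomposition $R=\check R\hat R$ with $\check R\in\mathcal{K}_{\bar\gamma}$, the same complement subgroup $\mathcal{C}_{\bar\gamma}$ of \eqref{skew:symmetric:class}, the same Hessian $-\sum(\gamma_j-\gamma_h)(\bar\gamma_j-\bar\gamma_h)\hat S_{jh}^2$, the same Gaussian integrals, and the same volume identity $\mathrm{Vol}(\mathcal{O}(m))=\mathrm{Vol}(\mathcal{K}_{\bar\gamma})\cdot\mathrm{Vol}(\mathcal{C}_{\bar\gamma})$. What differs is the logical route. You apply Laplace directly to the spherical integral $\mathcal{I}_m(n\gamma,\bar\gamma)$ over the quotient $\mathcal{O}(m)/\mathcal{K}_{\bar\gamma}$. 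The paper instead proves this theorem \emph{indirectly}: it computes the ordered-eigenvalue density $q^{(n)}_{\bar\gamma}(\gamma)$ of an isotropic Gaussian matrix with precision $n$ in two ways — once exactly via \eqref{alt:eigdensity}, where $\mathcal{I}_m$ appears as an explicit factor, and once asymptotically via the same Laplace computation applied to the joint density of $(\gamma,\hat S)$ (yielding \eqref{laplace:marginal}) — and then equates the two expressions to solve for the asymptotics of $\mathcal{I}_m$. Your direct route is more self-contained and clarifies that the theorem is a pure statement about Laplace's method on a homogeneous space; the paper's indirect route is economical because the same Laplace estimate is simultaneously doing duty as the proof of Theorems~\ref{spectral:clustering} and~\ref{eigenvector:asymptotics}, and it sidesteps the quotient-measure bookkeeping you flagged as your main concern (the paper does it inside the group, via the wedge-product factorization $(R^\top dR)^\wedge=(\hat R^\top d\hat R)^\wedge\bigwedge_i(\check R^\top_{(i,i)}d\check R_{(i,i)})^\wedge$, which is exactly the compatibility you were worried about and which holds with no spurious Jacobian). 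One small remark: you invoke Ky Fan's maximum principle to identify the maximizer set, whereas the paper identifies it via the doubly stochastic representation $\tr(G\hat R^\top\bar G\hat R)=\sum_{ij}\bar\gamma_i\gamma_j\hat R_{ij}^2$ and Abel summation across blocks (Eq.~\eqref{trace:expression}); the two arguments are equivalent, but Ky Fan makes the uniqueness of the optimal flags cleaner.
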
 
 \begin{remark}
 Theorem \ref{HCIZ-asymptotics} was proven by \cite{anderson}(see also \cite[Thm. 9.5.2.]{muirhead})
 in the case of non-negative eigenvalues without multiplicities.
\end{remark}
 \section{TESTING THE SPHERICITY HYPOTHESIS} \label{section:sphericity} 
In DTI, it is often desirable to establish different symmetries of the underlying tensor field. One of the often used
tests is the test of isotropy of the underlying mean diffusion tensor \cite{basser1994b}. Here we also develop one such test and we
call it a test of sphericity, to avoid confusion with the ``isotropy'' of the
precision tensor. 
Consider a sequence of 
random symmetric matrices $D^{(n)}$ such that
  $ \sqrt{ a^{(n)}  }\bigl( D^{(n)}- \bar D\bigr)\stackrel{law}{\to} X$,
  where  the limit is a zero mean Gaussian symmetric matrix, $\bar D$ is deterministic and
 $a^{(n)}\to\infty$ is a scaling sequence. For example, in Section \ref{section:rician} the scaling sequence is given by the number of
 gradients in the DTI measurement.
 \if 0
 Let $(\gamma^{(n)}_1> \gamma^{(n)}_2> \dots > \gamma^{(n)}_m)$
 and $(\bar \gamma_1\ge \bar\gamma_2\ge \bar\gamma_3)$
 the ordered
 eigenvalues of $D^{(n)}$ and $\bar D$, respectively. \fi
 In order to test the  sphericity hypothesis
  \begin{align*}
H_0:\quad  \bar D= \bar\gamma\; \Id \mbox{ for some unknown $\bar \gamma \in \R$, }
\end{align*}
we introduce the sampled eigenvalue central moments  
\begin{subequations}
\begin{align*}
  \kappa_1(D)&= \frac 1 m \sum_{i=1}^m \gamma_i = \frac 1 m \tr(D) , 
&\\ \kappa_r(D) & =\frac 1 m \sum_{i=1}^m (\gamma_i -\kappa_1(D) )^r   
  = \frac 1 m \tr\biggl( \biggr( D- \frac{\tr(D)} m \mbox{Id} \biggr)^r \biggr) & \\ &= 
   \sum_{q=0}^r \binom{r}{q} \frac{(-1)^q}{m^{q+1}}\tr ( D^{r-q}) \tr(D)^q\; ,
   2\le r \in \N. &
\end{align*} 
\end{subequations}
where $ \gamma_i$ are the eigenvalues of $D$. 
 \begin{lemma}  \label{homogenous:polynomial}  $\kappa_r(D)$ is a homogenous polynomial of degree $r$ in the matrix entries, satisfying 
 $\forall  c \in \R$
\begin{align}\label{invariance:property}\kappa_1(D+c\;\Id)=\kappa_1(D)+c, \quad \kappa_r(D+c\;\Id)= \kappa_r(D) \quad ,\; r \ge 2 \; . 
\end{align}   This implies that 
  the derivatives  satisfy
  $\nabla^{\ell} \kappa_r(Id )=0$ $\forall  0 \le \ell < r$, 
 
 \noindent while $\nabla^{r} \kappa_r(D)= \nabla^{r} \kappa_r(0 )$
  are constant tensors such that
  \begin{align*}   &
        \kappa_r(D)  = \frac 1 {r!} \nabla^{r} \kappa_r(0) \underbrace{ D \otimes \dots \otimes D}_{ \mbox{ $r$-times } }  
  \;   , \quad   \tr\bigl(  \nabla^{\ell} \kappa_r(D) \bigr)= 0, \quad\forall r \ge 2,\;  1\le \ell \le r \; . 
       & \end{align*}
  \end{lemma}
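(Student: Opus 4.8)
The plan is to deduce everything from two elementary facts: $\kappa_r$ is a polynomial in the entries of $D$, and (for $r\ge 2$) it is invariant under the shift $D\mapsto D+c\Id$. Throughout, $\nabla^\ell\kappa_r(D)$ denotes the symmetric $\ell$-th differential of $\kappa_r$ at $D$, viewed as a symmetric $\ell$-linear form on symmetric matrices, and $\tr(\nabla^\ell\kappa_r(D))$ its contraction over one pair of matrix indices, equivalently the evaluation of one argument at $\Id$.

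\textbf{Homogeneity and invariance.} In the displayed expansion $\kappa_r(D)=\sum_{q=0}^r\binom{r}{q}\frac{(-1)^q}{m^{q+1}}\tr(D^{r-q})\tr(D)^q$, each summand is the product of $\tr(D^{r-q})$, homogeneous of degree $r-q$ in the $D_{ij}$, with $\tr(D)^q$, homogeneous of degree $q$; hence every summand, and therefore $\kappa_r$, is homogeneous of degree $r$. For \eqref{invariance:property}, $\kappa_1(D+c\Id)=\frac1m\tr(D)+c=\kappa_1(D)+c$, while for $r\ge 2$ one has $(D+c\Id)-\kappa_1(D+c\Id)\Id=D-\kappa_1(D)\Id$, so $\kappa_r(D+c\Id)=\frac1m\tr\bigl((D-\kappa_1(D)\Id)^r\bigr)=\kappa_r(D)$. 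Taking $D=0$ gives $\kappa_r(c\Id)=\kappa_r(0)=0$ for all $c$ and $r\ge 2$, since a homogeneous polynomial of positive degree vanishes at the origin.

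\textbf{Trace identities.} Differentiating the invariance $\kappa_r(D+c\Id)=\kappa_r(D)$ ($r\ge 2$) in $c$ at $c=0$ gives $\nabla\kappa_r(D)[\Id]=0$ for every $D$. Differentiating this a further $\ell-1$ times in $D$ along arbitrary symmetric directions $H_2,\dots,H_\ell$ yields $\nabla^\ell\kappa_r(D)[\Id,H_2,\dots,H_\ell]=0$ for all $D$ and $1\le\ell\le r$; by symmetry of $\nabla^\ell\kappa_r(D)$ the slot holding $\Id$ is immaterial, and this is precisely $\tr(\nabla^\ell\kappa_r(D))=0$.

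\textbf{Taylor expansions.} Since $\kappa_r$ is a polynomial of degree $r$, its Taylor expansion about $0$ terminates, and homogeneity of degree $r$ leaves only the top term: $\kappa_r(D)=\frac1{r!}\nabla^r\kappa_r(0)[D^{\otimes r}]$, where $\nabla^r\kappa_r$ is the $D$-independent $r$-th derivative of a degree-$r$ polynomial, so $\nabla^r\kappa_r(D)=\nabla^r\kappa_r(0)$, and $\nabla^\ell\kappa_r(0)=0$ for $\ell\ne r$. For the vanishing at $\Id$, expand about $\Id$ instead: $\kappa_r(\Id+H)=\sum_{\ell=0}^r\frac1{\ell!}\nabla^\ell\kappa_r(\Id)[H^{\otimes\ell}]$. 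By invariance with $c=1$ the left side equals $\kappa_r(H)=\frac1{r!}\nabla^r\kappa_r(0)[H^{\otimes r}]$. Equating the two as polynomials in the free symmetric matrix $H$ and matching homogeneous components of each degree $0\le\ell<r$ forces $\nabla^\ell\kappa_r(\Id)[H^{\otimes\ell}]=0$ for all symmetric $H$; polarization then gives $\nabla^\ell\kappa_r(\Id)=0$, while the top component re-confirms $\nabla^r\kappa_r(\Id)=\nabla^r\kappa_r(0)$.

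\textbf{Main difficulty.} There is no genuine obstacle; the only care required is notational bookkeeping — fixing that $\nabla$ is differentiation with respect to the symmetric matrix $D$, that the "trace" of a derivative tensor is contraction of one index pair (equivalently, evaluation along $\Id$), and that passing from the vanishing of the diagonal polynomial $H\mapsto T[H^{\otimes\ell}]$ to the vanishing of the underlying symmetric tensor $T$ is legitimate by polarization over $\R$.
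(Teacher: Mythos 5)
The paper states Lemma \ref{homogenous:polynomial} without an explicit proof (the appendix only covers Theorems \ref{spectral:clustering}, \ref{eigenvector:asymptotics}, and \ref{HCIZ-asymptotics}), so there is nothing in the paper to compare your argument against. Your proof is correct and complete: the binomial expansion gives homogeneity of degree $r$; direct substitution of $D+c\,\Id$ into $\kappa_r(D)=\frac1m\tr\bigl((D-\kappa_1(D)\Id)^r\bigr)$ gives shift-invariance; differentiating the invariance relation in $c$ and then $\ell-1$ further times in $D$ correctly yields $\tr(\nabla^\ell\kappa_r(D))=0$ upon identifying contraction of one index pair with evaluation along $\Id$; and the Taylor expansion about $\Id$, equated against the homogeneous-degree-$r$ form $\kappa_r(H)=\frac1{r!}\nabla^r\kappa_r(0)[H^{\otimes r}]$, kills the lower-order derivatives at $\Id$ via polarization. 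The two small bookkeeping points you flag — the meaning of $\tr(\nabla^\ell\kappa_r)$ and the legitimacy of polarization for a symmetric $\ell$-linear form over $\R$ — are indeed the only places where care is needed, and you handle both. This is exactly the elementary argument the authors are implicitly relying on when they call the lemma a direct consequence of \eqref{invariance:property}.
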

  \begin{corollary} Let $D^{(n)}$ be a sequence of $m\times m$ symmetric random matrices and
  $X$  a zero mean symmetric Gaussian matrix 
  such that, for some $\bar\gamma \in \R$ and scaling
  sequence $a^{(n)}\to\infty$,
  \begin{align*}
   \sqrt{a^{(n)}} \bigl( D^{(n)}- \bar\gamma \; \Id \bigr) \stackrel{law}{\longrightarrow} X \; .  \end{align*}
Then
  \begin{align*}
   \bigl( \sqrt{ a^{(n)} }\bigl (D^{(n)}-\bar\gamma \Id\bigr), \bigl(a^{(n)} \bigr)^{r/2} \kappa_r\bigl(     D^{(n)} \bigr) \; : \;
   2\le r \le m \bigr) \stackrel{law}{\longrightarrow }  \bigl( \kappa_r(X)   \; :\; 1\le r \le m \bigr). 
  \end{align*} 
  When the covariance of  $X$  is   isotropic, $(\kappa_r(X): 2\le r \le m )$ are stochastically independent from $\kappa_1(X)$.
  \end{corollary}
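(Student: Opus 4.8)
The plan is to reduce the statement to the continuous mapping theorem, exploiting the translation- and scaling-invariance of the central moment maps $\kappa_r$ recorded in Lemma~\ref{homogenous:polynomial}. Put $Y^{(n)} := \sqrt{a^{(n)}}\bigl(D^{(n)}-\bar\gamma\,\Id\bigr)$, so that by hypothesis $Y^{(n)}\stackrel{law}{\longrightarrow} X$ and $D^{(n)} = \bar\gamma\,\Id + (a^{(n)})^{-1/2}Y^{(n)}$. The first step is to turn the rescaled moments into exact pathwise identities: for $r\ge 2$ the invariance \eqref{invariance:property} gives $\kappa_r(D^{(n)}) = \kappa_r\bigl((a^{(n)})^{-1/2}Y^{(n)}\bigr)$, and since $\kappa_r$ is homogeneous of degree $r$ in the matrix entries this equals $(a^{(n)})^{-r/2}\kappa_r(Y^{(n)})$, whence $(a^{(n)})^{r/2}\kappa_r(D^{(n)}) = \kappa_r(Y^{(n)})$ for every $n$; similarly $\sqrt{a^{(n)}}\bigl(\kappa_1(D^{(n)})-\bar\gamma\bigr) = \kappa_1(Y^{(n)})$. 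This eliminates any need for a delta-method or remainder estimate.

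Next I would invoke the continuous mapping theorem. Each $\kappa_r$ is a polynomial in the matrix entries, hence continuous, so the map $Y\mapsto\bigl(Y,\kappa_1(Y),\dots,\kappa_m(Y)\bigr)$ is continuous on the space of symmetric matrices; applying it to $Y^{(n)}\stackrel{law}{\longrightarrow}X$ gives the joint convergence $\bigl(Y^{(n)},\kappa_1(Y^{(n)}),\dots,\kappa_m(Y^{(n)})\bigr)\stackrel{law}{\longrightarrow}\bigl(X,\kappa_1(X),\dots,\kappa_m(X)\bigr)$. Substituting the pathwise identities of the previous step rewrites the left-hand side as the vector in the corollary, proving the convergence.

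For the independence claim, take $X$ zero-mean Gaussian with isotropic covariance, i.e. with density \eqref{isotropic_prior:2} for $\bar D={\bf 0}$, and write $X = X^0 + s\,\Id$ with $s = \tr(X)/m = \kappa_1(X)$ and $X^0$ the traceless part. Then $\kappa_r(X) = \frac1m\tr\bigl((X^0)^r\bigr)$ for $r\ge 2$, so $(\kappa_r(X):2\le r\le m)$ is a function of $X^0$ alone, and it suffices to show $X^0$ is independent of $s$. Since $\tr(X^2) = \tr((X^0)^2) + m s^2$ and $(\tr X)^2 = m^2 s^2$, the density factorizes as a function of $X^0$ times a function of $s$; as the linear bijection $X\leftrightarrow(X^0,s)$ has constant Jacobian, $X^0$ and $s$ are independent, which gives the claim. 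I expect the only genuine content to be the first step --- recognizing that invariance together with homogeneity makes the rescaled sample moments \emph{exactly}, not merely asymptotically, equal to the moments of $Y^{(n)}$; everything else is the continuous mapping theorem and a one-line density factorization.
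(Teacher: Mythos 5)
Your proof is correct and takes essentially the same route as the paper's: translation invariance and degree-$r$ homogeneity from Lemma~\ref{homogenous:polynomial} turn the rescaled sample moments into exact pathwise identities, the continuous mapping theorem then gives joint convergence, and independence of $(\kappa_r(X):r\ge 2)$ from $\kappa_1(X)$ follows from the factorization of the isotropic Gaussian density under the orthogonal decomposition $X = X^0 + \kappa_1(X)\,\Id$. The paper states this in one line (``continuous mapping theorem together with \eqref{invariance:property}'', plus the observation that the conditional law of $X-\kappa_1(X)\Id$ given $\kappa_1(X)$ is a fixed zero-mean isotropic Gaussian); you have simply made both steps fully explicit.
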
  
  \begin{proof} For the first statement we apply the continuous mapping theorem together with \eqref{invariance:property}.
  If $X$ has zero mean isotropic Gaussian distribution, 
  the conditional distribution of  $(X-\kappa_1(X) \Id)$  given $\kappa_1(X)$  is also 
  zero-mean isotropic Gaussian and does not depend on the value of $\kappa_1(X)$.
 \end{proof}
   To test the sphericity hypothesis with $\bar\gamma \ne 0$ it is natural to
   use  statistics of the form
   \begin{align*}
   \tau^{(n)}=\tau\bigl(\kappa_1\bigl(D^{(n)}\bigr), \bigl( a^{(n)}\bigr)^{r/2} \kappa_r( D^{(n)}): 2\le r \le m\bigr)\; ,
   \end{align*}
   and calibrate the test against the 
    distribution of 
    \begin{align}\label{asymptotic:calibration}\tau^{(\infty)}=\tau\bigl(c  ,\kappa_r(X): 2\le r \le m\bigr)\; ,\end{align}
     evaluated at  $c=\kappa_1\bigl(D^{(n)}\bigr)$.  However,
without additional assumptions on the  covariance structure of $X$
the probability density functions of $\kappa_r(X)$ for $r\ge 2$ do not have closed form expressions
and can be only computed  numerically, for example by  Monte Carlo simulations.
Note also that,  
since  $\nabla^{\ell} \kappa_r( \Id )=0$  $\forall r\ge 2, 0 \le \ell < r$,
we are dealing 
with a singular hypothesis testing problem  \cite{drton2009,drton,watanabe},
where the constraints $\{ \kappa_r( \bar D )=0, r\ge 2 \}$ which we are 
testing for  are singular at the true parameter $\bar D=\bar\gamma\;\Id$, consequently 
any smooth sphericity  statistics $\tau^{(n)}$ 
will follow non-Gaussian higher order asymptotics.
\if 0
  \begin{eqnarray*}
&& \nabla^{\ell} \kappa_r(D)=\sum_{q=0}^r\binom{r}{q}  \frac{(-1)^q}{m^{q+1}}
 \nabla^{\ell} \biggl( \tr( D^{r-q} ) \tr( D)^q \biggr)    \\ &&
 = r! \sum_{q=0}^r \sum_{s=0}^{\ell\wedge q } \frac{  (-1)^q           }{ m^{q+1} (r+s-q-\ell)!      }
 \tr(D)^{q-s}
 D^{r+s-q-\ell}\widetilde\otimes \ONE^{\otimes (s-1) } \widetilde \otimes \mbox{Id}^{\otimes(\ell-s) }
 \\
 &&
 r! \sum_{q=0}^r \sum_{t=0\vee(q-\ell) }^{ q \wedge( r- \ell) } \frac{  (-1)^q           }{ m^{q+1} (r-t-\ell)!      }
 \tr(D)^{t}
 D^{r-t-\ell}\widetilde\otimes \ONE^{\otimes (q-t-1) } \widetilde \otimes \mbox{Id}^{\otimes(\ell+t-q) }
 \\ &&
  r!     \sum_{t=0}^{(r- \ell)\vee 0} \sum_{q=t}^{t+\ell}
  \frac{  (-1)^q           }{ m^{q+1} (r-t-\ell)!      }
 \tr(D)^{t}
 D^{r-t-\ell}\widetilde\otimes \ONE^{\otimes (q-t-1) } \widetilde \otimes \mbox{Id}^{\otimes(\ell+t-q) }
 \\ &&
 \\ &&
 \nabla^{\ell} \biggl( \tr( D^{r-q} ) \tr( D)^q \biggr)  
 \\ && = \sum_{s=0}^{\ell} \binom{\ell}{s}\frac{q! (r-q)! }{(q-s)!(r+s-q-\ell)! }\trD)^{q-s}
 D^{r+s-q-\ell}\widetilde\otimes \ONE^{\otimes (s-1) } \widetilde \otimes \mbox{Id}^{\otimes(\ell-s) }
 \end{eqnarray*}
 \fi 
 We proceed now  in dimension $m=3$, 
assuming that the Gaussian matrix limit  $X$ has  
zero mean and isotropic precision matrix $A(1,\lambda)$ with $\lambda> -2/3$,
to compute explicitly the asymptotic density of  some commonly used sphericity statistics based
on eigenvalues sample mean, variance and skewness. 
\begin{lemma}\label{sphericity:test:thm}In the settings of Theorem \ref{spectral:clustering}, under the sphericity hypothesis $H_0$,
the test statistics 
\begin{multline} \label{test:statistics:T}
\tau_1^{(n)}= \sqrt{ a^{(n)} }\bigl(\kappa_1( \gamma^{(n)}) -\kappa_1(\bar \gamma) \bigr ),
\tau_2^{(n)}=6 a^{(n)}\kappa_2( \gamma^{(n)} )    , 
\tau_3^{(n)} =  \sqrt 2 \kappa_3( \gamma^{(n)} )  \kappa_2( \gamma^{(n)} )^{-3/2} , 
\end{multline}
are asymptotically independent, with limiting distributions
\begin{align} \label{limiting:test:distribution}
 \tau_1^{(n)} \stackrel{law}{\longrightarrow} {\mathcal N}\bigl( 0, 1/(6+9\lambda) \bigr),
 \quad \tau_2^{(n)}  \stackrel{law}{\longrightarrow} \chi^2_{5}, \quad \tau^{(n)}_3  
 \stackrel{law}{\longrightarrow} \mbox{Uniform}([-1,1])
 \; .
\end{align}
In dimension   $m$
\begin{eqnarray*} 
\biggl(( 2 m + \lambda m^2 )
  \bigl\{\kappa_1( \gamma^{(n)} ) -  \kappa_1(\bar \gamma)  
\bigr\}^2   \; , \;    
2m \kappa_2(\gamma^{(n)})   \biggr) a^{(n)} 
\stackrel{law}{\longrightarrow} \bigl(  \chi^2_1 \;, \; \chi^2_{  (m+1)m/2-1  }  \bigr) 
 \end{eqnarray*} 
with asymptotically independent components.
\end{lemma}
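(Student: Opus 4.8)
The plan is to transport all three statistics to the Gaussian limit $X$ and then read off the laws of the relevant rotation invariants. Set $Z^{(n)}=\sqrt{a^{(n)}}\bigl(D^{(n)}-\bar\gamma\,\Id\bigr)$, so that by hypothesis \eqref{weak:convergence:assumption} we have $Z^{(n)}\stackrel{law}{\to}X$ with $\mbox{vec}(X)$ zero-mean Gaussian of covariance $\Sigma(1,\lambda)$. The shift and scaling identities of Lemma~\ref{homogenous:polynomial} — namely $\kappa_1(\gamma+c\mathbf 1)=\kappa_1(\gamma)+c$, $\kappa_r(\gamma+c\mathbf 1)=\kappa_r(\gamma)$ and $\kappa_r(c\gamma)=c^{r}\kappa_r(\gamma)$ for $r\ge 2$ — give $\tau_1^{(n)}=\kappa_1(Z^{(n)})$, $\tau_2^{(n)}=2m\,\kappa_2(Z^{(n)})$ (with $2m=6$ for $m=3$), and $\tau_3^{(n)}=\sqrt2\,\kappa_3(Z^{(n)})\,\kappa_2(Z^{(n)})^{-3/2}$, the last being scale invariant so that the factors $a^{(n)}$ cancel. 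Since $D\mapsto(\kappa_1(D),\kappa_2(D),\kappa_3(D))$ is continuous and $\kappa_2(X)>0$ almost surely (the eigenvalues of $X$ are a.s.\ distinct), the continuous mapping theorem yields joint convergence of $(\tau_1^{(n)},\tau_2^{(n)},\tau_3^{(n)})$ to $\bigl(\kappa_1(X),\,2m\,\kappa_2(X),\,\sqrt2\,\kappa_3(X)\,\kappa_2(X)^{-3/2}\bigr)$; this is precisely the single-cluster ($k=1$) instance of Theorem~\ref{spectral:clustering}. It then remains to compute the law of this limit.

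Decompose $X=\kappa_1(X)\,\Id+\hat X$ with $\hat X=X-\tfrac1m\tr(X)\Id$ traceless. Because the precision matrix $A(\mu,\lambda)$ of \eqref{Omega_D:matrix:2nd} is block diagonal, the off-diagonal entries $X_{ij}$ ($i<j$) are independent $\mathcal N(0,\tfrac14)$ and independent of the diagonal block, whose covariance is $\Sigma_{ii,jj}=\tfrac12\delta_{ij}-\tfrac{\lambda}{2(2+\lambda m)}$; summing over $i,j$ gives $\mathrm{Var}(\tr X)=m/(2+\lambda m)$, hence $\kappa_1(X)=\tfrac1m\tr X\sim\mathcal N\bigl(0,1/(2m+\lambda m^2)\bigr)$, which is $\mathcal N(0,1/(6+9\lambda))$ when $m=3$ — proving the limit law of $\tau_1^{(n)}$. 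Moreover $\hat X_{\mathrm{diag}}=(\Id-\tfrac1m\mathbf 1\mathbf 1^\top)X_{\mathrm{diag}}$ has covariance $\tfrac12(\Id-\tfrac1m\mathbf 1\mathbf 1^\top)$, independent of $\lambda$, while $\mathbf 1^\top(\Id-\tfrac1m\mathbf 1\mathbf 1^\top)=0$ shows $\tr X$ is uncorrelated with $\hat X_{\mathrm{diag}}$; being jointly Gaussian and using the diagonal/off-diagonal independence, $\kappa_1(X)$ is independent of $\hat X$, and $\hat X$ is distributed as the traceless part of the standard GOE with precision $A_m(1,0)$. In particular its ordered eigenvalues have the degenerate density \eqref{cluster:density:asymptotic}, proportional to $\exp\bigl(-\sum_j\hat\gamma_j^2\bigr)\prod_{j<h}|\hat\gamma_j-\hat\gamma_h|$ on the chamber $\{\hat\gamma_1>\dots>\hat\gamma_m,\ \sum_j\hat\gamma_j=0\}$.

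For $\tau_2^{(n)}$ write $2m\,\kappa_2(X)=2\tr(\hat X^2)=2\sum_i\hat X_{ii}^2+4\sum_{i<j}X_{ij}^2$. The off-diagonal part is $\sum_{i<j}(2X_{ij})^2\sim\chi^2_{m(m-1)/2}$, and $\sqrt2\,\hat X_{\mathrm{diag}}$ is a standard Gaussian on the $(m-1)$-dimensional hyperplane $\{\sum_i=0\}$, so $2\sum_i\hat X_{ii}^2\sim\chi^2_{m-1}$, independent of the off-diagonal part; hence $\tau_2^{(n)}\stackrel{law}{\to}\chi^2_{(m-1)+m(m-1)/2}=\chi^2_{(m+1)m/2-1}$, which is $\chi^2_5$ for $m=3$. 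Independence of $\tau_2^{(n)}$ from $\tau_1^{(n)}$ in the limit is immediate from $\kappa_1(X)\perp\hat X$, and combining with $(2m+\lambda m^2)\kappa_1(X)^2\sim\chi^2_1$ proves the displayed statement in general dimension $m$.

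The only nontrivial part is $\tau_3^{(n)}$ for $m=3$. Parametrise the plane $\{\hat\gamma_1+\hat\gamma_2+\hat\gamma_3=0\}$ in polar form by $\hat\gamma_i=\sqrt{2/3}\,\rho\cos\!\bigl(\psi-\tfrac{2\pi(i-1)}{3}\bigr)$, $\rho\ge0$; then $\sum_i\hat\gamma_i=0$ automatically, $\sum_i\hat\gamma_i^2=\rho^2$, and $\cos^3\theta=\tfrac34\cos\theta+\tfrac14\cos3\theta$ gives $\sum_i\hat\gamma_i^3=\tfrac34(2/3)^{3/2}\rho^3\cos3\psi$, whence $\kappa_2(X)=\rho^2/3$ and $\sqrt2\,\kappa_3(X)\kappa_2(X)^{-3/2}=\cos3\psi$. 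By $\cos A-\cos B=-2\sin\tfrac{A+B}{2}\sin\tfrac{A-B}{2}$ each eigenvalue gap equals $\pm\sqrt2\,\rho\sin(\psi-\mathrm{const})$, and the three gaps multiply to $\tfrac1{\sqrt2}\rho^3|\sin3\psi|$ via the identity $|\sin\psi\,\sin(\psi-\tfrac\pi3)\,\sin(\psi-\tfrac{2\pi}{3})|=\tfrac14|\sin3\psi|$; since the planar Lebesgue element is $\rho\,d\rho\,d\psi$, the density \eqref{cluster:density:asymptotic} factorises as $\rho^{4}e^{-\rho^2}d\rho\times|\sin3\psi|\,d\psi$. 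Thus $\rho\perp\psi$, and on the ordered Weyl chamber — an arc of length $\pi/3$ on which $\sin3\psi$ keeps a fixed sign, so that $3\psi$ ranges over an interval of length $\pi$ with density proportional to $|\sin(3\psi)|$ — a one-line change of variables gives $\cos3\psi\sim\mathrm{Uniform}([-1,1])$. Since $\tau_2^{(n)}$ converges to a function of $\rho$ and $\tau_3^{(n)}$ to a function of $\psi$ alone, they are asymptotically independent, and together with $\kappa_1(X)\perp\hat X$ all three limits are mutually independent. The main obstacle is precisely this last step: aligning the polar parametrisation, the gap/Vandermonde identity and the chamber bookkeeping so that the angular density is an exact half-period of $|\sin(3\psi)|$, which is what makes $\cos3\psi$ uniform.
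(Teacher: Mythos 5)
Your proof is correct, and it takes a genuinely different route from the paper's. The paper works directly with the exact (single-cluster) eigenvalue density, changes variables $(\gamma_1,\gamma_2,\gamma_3)\mapsto(\kappa_1,\kappa_2,\kappa_3)$, and observes that the Jacobian of this map equals $\tfrac{2}{9}V(\gamma)$, so the Vandermonde factor cancels and the density of $(\kappa_1,\kappa_2,\kappa_3)$ collapses at once to a product of a Gaussian in $\kappa_1$, a Gamma$(5/2,3\mu)$ in $\kappa_2$, and a factor that is flat in $\kappa_3$ on the admissible interval $[-\kappa_2^{3/2}/\sqrt2,\kappa_2^{3/2}/\sqrt2]$; one final rescaling $\kappa_3\mapsto\tau_3=\sqrt2\,\kappa_3\kappa_2^{-3/2}$ then gives the full statement in one stroke. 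You instead pass to the Gaussian limit matrix $X$, split it into its trace part $\kappa_1(X)\Id$ and traceless part $\hat X$, exploit the block structure of $\Sigma(1,\lambda)$ to get independence and the $\chi^2$ count, and then analyse $\tau_3$ by a polar parametrisation $\hat\gamma_i=\sqrt{2/3}\,\rho\cos(\psi-2\pi(i-1)/3)$ of the traceless eigenvalue plane, where the Vandermonde becomes $\propto\rho^3|\sin3\psi|$ and the statistic itself becomes exactly $\cos3\psi$. What the paper's argument buys is economy: the Jacobian cancellation delivers the whole factorised density at once and one does not need to know the support of $\kappa_3$ given $\kappa_2$ by a separate optimisation, because it is visible from the degenerate density. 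What your argument buys is transparency and a clean extension to the general-$m$ part of the lemma (the trace/traceless decomposition and the diagonal/off-diagonal $\chi^2$ count work verbatim for all $m$), plus a concrete geometric reading of the skewness statistic as the cosine of the triple angle in the traceless plane — which is also why its limit law is uniform, since on the Weyl chamber $3\psi$ sweeps out a half-period of $\sin$ with density $\propto|\sin3\psi|=|d(\cos3\psi)/d\psi|$. One small stylistic note: your derivation explicitly uses the homogeneity $\kappa_r(cD)=c^r\kappa_r(D)$, which is not written in Lemma~\ref{homogenous:polynomial} but is immediate from the fact that $\kappa_r$ is a homogeneous polynomial of degree $r$; it would be worth stating to keep the argument self-contained.
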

\begin{proof} 
We start from the asymptotic eigenvalue density \eqref{alt:eigdensity}, which under  $H_0$ is given by
 \begin{multline*}  
 q_{\bar \gamma}( \gamma_1,\gamma_2, \gamma_3)=  \\
 \frac{ 4 \mu^{5/2} \sqrt{ 2\mu +3\lambda }}{  \pi} \;  V( \gamma_1,\gamma_2,\gamma_3 ) 
  \exp\biggl( - \frac{ (6\mu +9\lambda) }2  \bigl(\kappa_1(\gamma)-  \kappa_1(\bar \gamma) \bigr)^2
   - \mu \sum_{i=1}^3 \bigl(\gamma_i - \kappa_1(\gamma) \bigr)^2 \biggr)
  \end{multline*}
 and apply the Continuous Mapping Theorem \cite{van_der_vaart} to the smooth bijection 
\begin{align*} & (\gamma_1,\gamma_2,\gamma_3)\mapsto (\kappa_1,\kappa_2,\kappa_3)  \mbox{
with Jacobian }
& \\ &
  \biggl[ \frac{ \partial \kappa_i }{\partial \gamma_j} \biggr]=  \left( {\small 
  \begin{matrix} 1/3 &   (\gamma_1- \kappa_1) 2/3  &   (\gamma_1- \kappa_1)^2 -\kappa_2 \\
  1/3 &    (\gamma_2- \kappa_1) 2/3  &   (\gamma_2- \kappa_1)^2 -\kappa_2  \\
1/3 &    (\gamma_3- \kappa_1)  2/3 & (\gamma_3- \kappa_1)^2 -\kappa_2 \end{matrix} }\right)
\mbox{ satisfying }\det\biggl(
\biggl[ \frac{ \partial \kappa_i }{\partial \gamma_j} \biggr]\biggr)= V(\gamma)2/9.
&\end{align*}
By changing variables the Vandermonde determinant cancels out, and the resulting joint central moments density is given by
\begin{align*}
  q( \kappa_1,\kappa_2,\kappa_3)= 
  \frac{ 18 \mu^{5/2} \sqrt{ 2\mu +3\lambda }}{  \pi} 
  \exp\biggl( - \frac{ (6\mu +9\lambda) }2  \bigl(\kappa_1-   \kappa_1(\bar\gamma) \bigr)^2
   - 3\mu \kappa_2 \biggr).
\end{align*}
It follows by an optimization argument that the support of the $\kappa_3$ conditional 
distribution given $\kappa_2$ is the interval $[ - \kappa_2^{3/2} /\sqrt 2  ,  \kappa_2^{3/2} /\sqrt  2]$. 
We do a further change of variables setting $\kappa'=(\kappa_1,\kappa_2, \tau_3)$ with  $\tau_3= \kappa_3 \kappa_2^{-3/2}\sqrt{2}$, 
obtaining
\begin{align}
 & q( \kappa_1,\kappa_2,\tau_3)= 
  \label{test:statistics:distribution}
  \sqrt{\frac{6 \mu + 9 \lambda }{2\pi} }  \exp\biggl( -\frac{ 6\mu + 9 \lambda}
  2 (\kappa_1-  \kappa_1(\bar \gamma) )^2 \biggr) d\kappa_1 
\times   & \\ \nonumber &
  {\bf 1}( \kappa_2 \ge 0)
  \frac{ (3\mu)^{5/2} }{ \Gamma(5/2) }\exp\bigl( - 3 \mu \kappa_2 \bigr) 
  \kappa_2^{3/2} 
  d\kappa_2  \times   {\bf 1}(  |  \tau_3 |\le 1 ) \frac{ d \tau_3 } 2   , 
&\end{align}
 which factorizes as the distribution of independent random variables 
 
 \noindent $\kappa_1\sim{\mathcal N}\bigl(\kappa_1(\bar\gamma),1/(6\mu+9\lambda)\bigr)$, 
 $\kappa_2 \sim \bigl( \chi_{5}^2 /(6\mu) \bigr) $
 and $\tau_3$  uniformly distributed on $[-1,1]$
 \end{proof}
 Related ellipticity and sphericity 
 measures are {\it Fractional Anisotropy} \cite{basser1995}
 \begin{align*} \mbox{FA}=
  \sqrt{ \frac { 3 \tr(D^2) - \tr(D)^2 } { 2 \tr(D^2 ) }}=
  \sqrt{  \frac{ 3\kappa_2 }{ 2 (\kappa_1^2+ \kappa_2) }  }  \;,
 \end{align*}
 {\it Relative Anisotropy} \cite{basser-pajevic03}
 \begin{align*}
   \mbox{RA}=\sqrt{\frac { 3 \tr(D^2) - \tr(D)^2  }{  \tr(D)^2     } }= \frac{\sqrt{ \kappa_2} } { | \kappa_1| } \; ,
 \end{align*}
 and   
 {\it Volume Ratio} \cite{pierpaoli_et_al:1994}
 \begin{align*} 
 \mbox{VR}= 27 \;\frac{ \det\bigl(D \bigr) }
 {\tr\bigl( D \bigr)^3 }=\frac{ \gamma_1 \gamma_2 \gamma_3 }{  \kappa_1(\gamma)^3 }\;, \quad \mbox{ where }
\gamma_1\gamma_2 \gamma_3 =\kappa_3(\gamma) + \kappa_1(\gamma) ^3-  \frac 3 2 \kappa_1(\gamma) \kappa_2(\gamma)\; .
 &\end{align*} 
\begin{corollary} \label{volume:ratio:corollary} 
In the settings of  Theorem \ref{spectral:clustering} with dimension $m=3$, under the sphericity hypothesis $H_0$,
there are two  possible asymptotic regimes: 
\begin{enumerate} 
\item  \label{regime:1} when $\bar D=0$ the sequence of statistics
\begin{align} \label{asymptotic:regime:1}&
\biggl(
\mbox{FA}(\gamma^{(n)}) \; ,\; \mbox{RA}( \gamma^{(n)}) , \;  \bigl( 1-\mbox{VR} ( \gamma^{(n)}  ) \bigr) ,  \bigl(
\tau_1^{(n)} \bigr)^2, \tau_2^{(n)}, \tau_3^{(n)}
\biggr)
\end{align}
converges jointly in distribution to the random vector
\begin{multline}\label{zero:regime}
 \biggl( \sqrt{    \frac{  3\chi_2^5 }{ 2\chi_5^2+  \chi_1^2  12/(9\lambda+6)    } }
, \sqrt{\frac {( 3\lambda+ 2)} 2 \frac{\chi_5^2}{\chi_1^2}}, \\   \nonumber
\frac{ (9\lambda+6) } 4   \frac{ \chi^2_5}{\chi_1^2 }
 -  \biggl\{ 
 \biggl (  {3\lambda} +2\biggr)  \frac{ \chi^2_5}{\chi_1^2 } \biggr\}^{3/2} \frac U 4
 , \frac{ \chi_1^2}{ 9\lambda+6}  ,\chi_5^2 ,  U\biggr) 
\end{multline}
with independent $\chi^2_1,\chi_5^2$ and $U\sim\mbox{Uniform}[-1,1]$.
\item  \label{regime:2} Otherwise,  the rescaled statistics
\begin{align}\label{FA:2 }\tau^{(n)}_4=&
2\sqrt{ a^{(n)} }\;|\kappa_1(\gamma^{(n)} )|\;\mbox{FA}(\gamma^{(n)})\simeq 
2\sqrt{ a^{(n)} }\;|\kappa_1(\bar\gamma)|\;\mbox{FA}(\gamma^{(n)})  
& \\  \label{asymptotic:regime:2} 
\tau_5^{(n)}= & 4 a^{(n)}  k_1( \gamma^{(n)})^2 \bigl( 1- \mbox{VR} ( \gamma^{(n)} )
  \bigr) \simeq 4 a^{(n)}  k_1( \bar \gamma  )^2 \bigl( 1- \mbox{VR}( \gamma^{(n)} ) \bigr) 
  & \\  \tau^{(n)}_6= &
 - 4 a^{(n)}  k_1(  \gamma^{(n)} )^2 \log \big\vert \mbox{VR} ( \gamma^{(n)}  ) \big\vert\simeq
 - 4 a^{(n)}  k_1(  \bar\gamma )^2 \log \big\vert \mbox{VR} ( \gamma^{(n)}  ) \big\vert 
\end{align}
are asymptotically equivalent
with 
\begin{align*}
\big \vert \bigl(\tau_4^{(n)} \bigr)^2 - \tau_2^{(n)} \big\vert \stackrel{ P}{\longrightarrow}0 , \;
\big\vert\tau_5^{(n)} - \tau_6^{(n)}  \big\vert \stackrel{ P}{\longrightarrow} 0,\mbox{ and }
\big\vert\tau_5^{(n)}- \tau_2^{(n) } \big\vert \stackrel{ P}{\longrightarrow} 0 
\end{align*}
in probability, and $\tau_2^{(n)} \stackrel{law}{\to} \chi^2_5$.
\end{enumerate}
\end{corollary}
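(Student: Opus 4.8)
The plan is to funnel everything through the joint limit law of the three basic statistics $\tau_1^{(n)},\tau_2^{(n)},\tau_3^{(n)}$ of \eqref{test:statistics:T}, which Lemma \ref{sphericity:test:thm} already supplies: under $H_0$ they converge jointly in distribution to independent random variables $\tau_1\sim{\mathcal N}(0,1/(6+9\lambda))$, $\tau_2\sim\chi^2_5$, $\tau_3\sim\mbox{Uniform}[-1,1]$. First I would record the elementary identities, valid for every $n$,
\begin{align*}
a^{(n)}\kappa_2(\gamma^{(n)})=\tfrac16\tau_2^{(n)}, \qquad
(a^{(n)})^{3/2}\kappa_3(\gamma^{(n)})=\tfrac1{\sqrt2}\tau_3^{(n)}\bigl(\tfrac16\tau_2^{(n)}\bigr)^{3/2}, \qquad
\sqrt{a^{(n)}}\,\kappa_1(\gamma^{(n)})=\tau_1^{(n)}+\sqrt{a^{(n)}}\,\kappa_1(\bar\gamma),
\end{align*}
which come straight from the definitions of $\tau_2^{(n)},\tau_3^{(n)}$ in \eqref{test:statistics:T} and which, by the homogeneity in Lemma \ref{homogenous:polynomial}, let me express every scale-invariant eigenvalue functional through $(\tau_1^{(n)},\tau_2^{(n)},\tau_3^{(n)})$.

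\textbf{Regime 1 ($\bar D=0$, so $\kappa_1(\bar\gamma)=0$).} Here $\sqrt{a^{(n)}}\kappa_1(\gamma^{(n)})=\tau_1^{(n)}$. Using $\gamma_1\gamma_2\gamma_3=\kappa_3+\kappa_1^3-\tfrac32\kappa_1\kappa_2$ and multiplying numerators and denominators by the appropriate power of $a^{(n)}$, I obtain the exact identities
\begin{align*}
\mbox{FA}(\gamma^{(n)})=\Bigl(\tfrac{3\tau_2^{(n)}}{12(\tau_1^{(n)})^2+2\tau_2^{(n)}}\Bigr)^{1/2}, \qquad
\mbox{RA}(\gamma^{(n)})=\frac{(\tau_2^{(n)}/6)^{1/2}}{|\tau_1^{(n)}|}, \qquad
1-\mbox{VR}(\gamma^{(n)})=\frac{\tau_2^{(n)}}{4(\tau_1^{(n)})^2}-\frac{\tau_3^{(n)}(\tau_2^{(n)}/6)^{3/2}}{\sqrt2\,(\tau_1^{(n)})^3},
\end{align*}
while the remaining three entries of \eqref{asymptotic:regime:1} are literally $(\tau_1^{(n)})^2$, $\tau_2^{(n)}$, $\tau_3^{(n)}$. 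All six are continuous functions of $(\tau_1^{(n)},\tau_2^{(n)},\tau_3^{(n)})$ on $\{\tau_1\neq0\}$, and the limiting triple lies there almost surely because $\tau_1$ is a non-degenerate Gaussian; the continuous mapping theorem then yields joint convergence to the image vector. Setting $\chi^2_1:=(6+9\lambda)\tau_1^2$, $\chi^2_5:=\tau_2$, $U:=\tau_3$, so that $(\tau_1^{(n)})^2\to\chi^2_1/(9\lambda+6)$, and simplifying algebraically produces \eqref{zero:regime}; in the $\mbox{VR}$–limit the factor $\tau_3\,\mbox{sign}(\tau_1)$ appears, which is again uniform on $[-1,1]$ because $\mbox{sign}(\tau_1)$ is a symmetric sign independent of $(|\tau_1|,\tau_2,\tau_3)$.

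\textbf{Regime 2 ($\bar\gamma\neq0$).} From the identities above, $\kappa_1(\gamma^{(n)})\stackrel{P}{\to}\bar\gamma\neq0$, $\kappa_2(\gamma^{(n)})=O_P(1/a^{(n)})\stackrel{P}{\to}0$ and $\kappa_3(\gamma^{(n)})=O_P((a^{(n)})^{-3/2})\stackrel{P}{\to}0$, while $a^{(n)}\kappa_2(\gamma^{(n)})=\tau_2^{(n)}/6$ stays bounded in probability. Squaring the definition of $\tau_4^{(n)}$ and inserting $\mbox{FA}^2=\tfrac{3\kappa_2}{2(\kappa_1^2+\kappa_2)}$ gives
\begin{align*}
(\tau_4^{(n)})^2=\tau_2^{(n)}\,\frac{\kappa_1(\gamma^{(n)})^2}{\kappa_1(\gamma^{(n)})^2+\kappa_2(\gamma^{(n)})},
\end{align*}
and since the ratio tends to $1$ in probability, Slutsky gives $(\tau_4^{(n)})^2-\tau_2^{(n)}\stackrel{P}{\to}0$. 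From $1-\mbox{VR}=\tfrac{3\kappa_2}{2\kappa_1^2}-\tfrac{\kappa_3}{\kappa_1^3}$ one gets $\tau_5^{(n)}=\tau_2^{(n)}-4a^{(n)}\kappa_3(\gamma^{(n)})/\kappa_1(\gamma^{(n)})$, whose last term equals $4(a^{(n)})^{-1/2}(a^{(n)})^{3/2}\kappa_3(\gamma^{(n)})/\kappa_1(\gamma^{(n)})\stackrel{P}{\to}0$, so $\tau_5^{(n)}-\tau_2^{(n)}\stackrel{P}{\to}0$; the same computation shows $1-\mbox{VR}(\gamma^{(n)})=O_P(1/a^{(n)})$ (indeed $a^{(n)}(1-\mbox{VR})\to\chi^2_5/(4\bar\gamma^2)$), hence $\mbox{VR}\to1$ and the expansion $-\log\mbox{VR}=(1-\mbox{VR})+O((1-\mbox{VR})^2)$ gives
\begin{align*}
\tau_6^{(n)}=4a^{(n)}\kappa_1(\gamma^{(n)})^2\bigl(1-\mbox{VR}(\gamma^{(n)})\bigr)+O_P\bigl(a^{(n)}(1-\mbox{VR}(\gamma^{(n)}))^2\bigr)=\tau_5^{(n)}+o_P(1),
\end{align*}
so $\tau_5^{(n)}-\tau_6^{(n)}\stackrel{P}{\to}0$. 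Replacing $\kappa_1(\gamma^{(n)})$ by $\kappa_1(\bar\gamma)$ in $\tau_4^{(n)},\tau_5^{(n)},\tau_6^{(n)}$ is legitimate by Slutsky since $\kappa_1(\gamma^{(n)})/\kappa_1(\bar\gamma)\stackrel{P}{\to}1$ and the three statistics are $O_P(1)$. Finally $\tau_2^{(n)}\stackrel{law}{\to}\chi^2_5$ is part of Lemma \ref{sphericity:test:thm}.

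\textbf{Main obstacle.} The one genuinely delicate point is Regime 1, where $\mbox{FA}$, $\mbox{RA}$ and $\mbox{VR}$ are all asymptotically of the indeterminate form $0/0$: the argument works only because $\tau_1^{(n)}$ converges to a non-degenerate Gaussian, placing the limit in the continuity set of the maps above so that the continuous mapping theorem applies, and one must additionally track the coupling between $\mbox{sign}(\tau_1^{(n)})$ and $\tau_3^{(n)}$ that enters the $\mbox{VR}$–limit in order to recover the stated uniform factor. In Regime 2 the only care needed is the rate bookkeeping — $\kappa_2(\gamma^{(n)})=O_P(1/a^{(n)})$ and $\kappa_3(\gamma^{(n)})=O_P((a^{(n)})^{-3/2})$, hence $1-\mbox{VR}(\gamma^{(n)})=O_P(1/a^{(n)})$ — which is precisely what kills the quadratic remainder in the expansion of $\log\mbox{VR}$ and makes $\tau_6^{(n)}$ asymptotically equivalent to $\tau_5^{(n)}$, and hence to $\tau_2^{(n)}\stackrel{law}{\to}\chi^2_5$.
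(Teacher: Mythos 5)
Your route --- funnel every statistic through the joint limit law of $(\tau_1^{(n)},\tau_2^{(n)},\tau_3^{(n)})$ supplied by Lemma~\ref{sphericity:test:thm}, use the invariance of Lemma~\ref{homogenous:polynomial} to express $\mbox{FA}$, $\mbox{RA}$, $\mbox{VR}$ as continuous functions of these three, and invoke the continuous mapping theorem on $\{\tau_1\neq 0\}$ --- is exactly the route the Corollary rests on, and your algebraic identities for $\mbox{FA}$, $\mbox{RA}$, $1-\mbox{VR}$ and your Regime~2 rate bookkeeping ($\kappa_2=O_P(1/a^{(n)})$, $\kappa_3=O_P((a^{(n)})^{-3/2})$, hence $1-\mbox{VR}=O_P(1/a^{(n)})$ and the $\log$-expansion) all check out.

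The one genuine sticking point, which you notice but do not fully resolve, is the sign in the $\mbox{VR}$-limit of Regime~1. You correctly obtain
\begin{align*}
1-\mbox{VR}(\gamma^{(n)})\;\stackrel{law}{\longrightarrow}\; \frac{\tau_2}{4\tau_1^2}-\frac{\tau_3\,(\tau_2/6)^{3/2}}{\sqrt{2}\,\tau_1^{3}}\;;
\end{align*}
since $\tau_1^3=\mbox{sign}(\tau_1)\,|\tau_1|^3$, the substitution $\chi_1^2=(6+9\lambda)\tau_1^2$, $\chi_5^2=\tau_2$, $U=\tau_3$ turns coordinate~3 into
\begin{align*}
\frac{9\lambda+6}{4}\,\frac{\chi_5^2}{\chi_1^2}-\frac{U\,\mbox{sign}(\tau_1)}{4}\Bigl((3\lambda+2)\frac{\chi_5^2}{\chi_1^2}\Bigr)^{3/2},
\end{align*}
not the form displayed in \eqref{zero:regime} with $U$ alone. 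Your remark that $U\,\mbox{sign}(\tau_1)$ is again $\mbox{Uniform}[-1,1]$ repairs the \emph{marginal} of coordinate~3, but not the stated \emph{joint} law: since $|U\,\mbox{sign}(\tau_1)|=|U|$, coordinates~3 and~6 stay coupled in a way that \eqref{zero:regime}, read with a single $U$, does not reproduce. Either the limit vector must be expressed as a function of four independent variables $(\chi_1^2,\chi_5^2,U,\varepsilon)$ with $\varepsilon$ an independent symmetric sign, or --- cleaner --- written directly in terms of $(\tau_1,\tau_2,\tau_3)$ so the sign of $\tau_1$ is not destroyed by squaring. This is a slip in the corollary's displayed limit rather than a flaw in your method, but you should not assert that ``simplifying algebraically produces \eqref{zero:regime}''; it produces the corrected expression above, and the discrepancy is precisely the $\mbox{sign}(\tau_1)$ you were tracking.
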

 \begin{remark} 
Corollary \ref{volume:ratio:corollary} generalizes Thm.8.3.7 in \cite{muirhead}
 on VR asymptotics without positivity assumptions.
 In order to use the VR statistics to test the isotropy of the mean $\bar D$,
one should first test the hypothesis $\kappa_1(\bar\gamma)=0$, under which
\begin{align*}
  (9\lambda+6) a^{(n)}  \kappa_1(\gamma^{(n)} )^2 \stackrel{law}{\longrightarrow} \chi_1^2 \; .
\end{align*}
If this  hypothesis is accepted, we assume that we are
in the asymptotic regime \eqref{regime:1}
and construct a conditional sphericity test by using
the conditional distribution of $\mbox{VR}(\gamma^{(n)})$ 
given  $\bigl\{ a^{(n)}\kappa_1(\gamma^{(n)})^2=t\bigr\}$, which converges in distribution to the law of
\begin{align*} 1 +\biggl( 
 \frac{  \chi_5^2 } {3 t} \biggr)^{3/2} \frac U 4 - \frac{ \chi^2_5}{ 4 t}\; ,
 \end{align*}
with $\chi_5^2$ independent from $U\sim\mbox{Uniform}[-1,1]$.
If the hypothesis $\kappa_1(\bar\gamma)=0$ is rejected we use the 
rescaled volume ratio statistics $\tau_5^{(n)}$ in \eqref{asymptotic:regime:2}.

 Eigenvalue central moment statistics have been considered earlier in the DTI literature,
the distribution of $\tr(D)$ for $D$ isotropic Gaussian is derived in \cite{basser:pajevic2003},
the variance is discussed in
\cite{basser1995},\cite{ibrahim},\cite{schwartzman}, and skewness in \cite{basser1997}.
 Note that under $H_0$  
the limit laws of $\tau_2^{(n)}, \tau_3^{(n)}$ are parameter free. However 
evaluating $\tau_2^{(n)}$ requires knowledge of the scaling sequence normalization,
    while  $\tau_3^{(n)}$ does not.
    \if 0
    \item  $\tau_2^{(n)}$ can be interpreted also as a two-sided test statistics.
      A realization in the left-tail rejection region,  corresponds to  eigenvalues
       which are too close to each other  relatively
        to the noise level, without feeling the reciprocal repulsion.
  \fi   
   $\vert \tau_3^{(n)}\vert$ can be used as  two-sided test statistics,
      accepting the sphericity hypothesis with confidence level $\alpha$ 
       when $\vert\tau_3^{(n)} \vert \in \bigl ( (1-\alpha)/2, (1+\alpha)/2 \bigr)$. 
       The left-tail rejection region corresponds to
       the anomalous situation with $(\gamma_1^{(n)}-\gamma_2^{(n)} ) \simeq (\gamma_2^{(n)}-\gamma_3^{(n)})$, and
         the right tail corresponds to 
         $\gamma_1^{(n)} \simeq \gamma_2^{(n)} \gg \gamma_3$ or $\gamma_1^{(n)} \gg \gamma_2^{(n)} \simeq \gamma_3^{(n)}$.
We can test for symmetries with a 
sequence of confidence levels $p^{(n)}=\P\bigl( \chi^2_5 < c^{(n)}\bigr)$, with $c^{(n)}\to\infty$
and $c^{(n)}/a^{(n)} \to 0$,
   and construct an asymptotically superefficient eigenvalue estimator $\hat \gamma^{(n)}$:
\begin{enumerate}   \item
If  $\kappa_2\bigl(  \gamma^{(n)} \bigr) < c^{(n)} /\bigl( 6 a^{(n)} \bigr) $,
 accept the isotropy hypothesis and set 
 $\hat \gamma_1^{(n)} = \hat \gamma_2^{(n)} = \hat \gamma_3^{(n)}= \kappa_1\bigl( \gamma^{(n)} \bigr)$

 \item
else if 
  \begin{align*} \bigl(\gamma_1^{(n)} -\gamma_2^{(n)} \bigr)^2  a^{(n)} < - 2 \log( 1- p^{(n)} )  \; , \end{align*}
 accept the oblate  tensor hypothesis and
set $ \hat \gamma_1^{(n)} = \hat \gamma_2^{(n)}= (\gamma_1^{(n)} + \gamma_2^{(n)})/2 > \hat \gamma_3^{(n)}= \gamma_3^{(n)}$, 

\item 
else if 
 \begin{align*}\bigl(\gamma_2^{(n)} -\gamma_3^{(n)} \bigr)^2  a^{(n)}  < -2 \log( 1- p^{(n)} ) \; ,
 \end{align*}
accept the prolate diffusion tensor hypothesis and
set $ \hat \gamma_1^{(n)}= \gamma_1^{(n)}<  \hat \gamma_2^{(n)} = \hat \gamma_3^{(n)}= (\gamma_2^{(n)} + \gamma_3^{(n)})/2$, 

\item 
otherwise reject  the hypothesis that the tensor has symmetries
and use  the unmodified estimator $\hat\gamma^{(n)}=\gamma^{(n)}$.  
\end{enumerate}
The situation with mean matrix $\bar D=0$ arises in two-sample problems. 
Consider two $m\times m$ symmetric random matrices $D',D''$, which  are measured with independent and  isotropic Gaussian noises, with 
precision matrices $A(\mu',\lambda')$ and $A(\mu'',\lambda'')$,  and means $\bar D',\bar D''$,
respectively. 
Their difference $D=( D'-D'') $ is again symmetric Gaussian with mean 
$\bar D=( \bar  D'-\bar D'') $ and isotropic precision matrix $A(\mu, \lambda )$, with parameters
\begin{align*}&\mu=\frac{\mu'\mu''}{\mu'+\mu''},\;\lambda=\frac{ 2\alpha \mu  }{ \mu'+\mu''- m \alpha}, 
\; \alpha= \frac{ \lambda' \mu''} { 2\mu'+ m \lambda'} +   \frac{ \lambda'' \mu'} { 2\mu''+ m \lambda''}\; .
&\end{align*}
In order to test the hypothesis  $\bar D'=\bar D''$, one could use the statistics
\begin{align}\label{chi26:statistics}
 \bigl\{ 2 m\mu \kappa_2( D ) + \bigl( 2m \mu+ \lambda m^2 \bigr)\kappa_1( D )^2   \bigr\}\sim\chi^2_{(m+1)m/2}.
\end{align}
Testing  equality in distribution of two sample matrix eigenvalues and eigenvectors separately 
has been discussed in  \cite{schwartzman}, under the hypothesis of asymptotically  Gaussian and isotropic error,
generalized in \cite{schwartzman2010} to non-isotropic error covariances.
\end{remark}
\section{ASYMPTOTIC STATISTICS IN DTI UNDER RICIAN NOISE}  \label{section:rician}
We consider an ideal DTI experiment with measurements following the Rician likelihood
\begin{align} \label{observation:density}
 p_{S,\eta^2}( Y )= \frac{ Y}{\eta^2}
 \exp\biggl( - \frac{ Y^2+ S^2}{ 2\eta^2} \biggr)   I_0\bigl(  Y S / \eta^2 \bigr ), 
\end{align}
where  $S$ is the signal,
 $Y$ the observation, $\eta^2$ the  noise parameter,
and  $I_{\ell}(z)$  is the modified Bessel function of first kind of order $\ell$.
 The signal is determined by the 2nd-order tensor model
 \begin{align} \label{signal:eq}
 S=S(g,D)=\rho \exp( - g D g^{\top} ) \;,  \rho > 0, \; g\in \R^3, \; D\in \R^{3\times 3}\; ,
 \end{align}
where $D$ is the (symmetric)  diffusion tensor,  $\rho$ is the unweighted reference signal,
and $g$ is the applied magnetic field gradient.
The function $g\mapsto S(g,D)/\rho$ is interpreted as the Fourier transform
of the displacement distribution of a water molecule  undergoing Gaussian diffusion in
an unit time interval, and the problem is to estimate the diffusion tensor $D$ from the
noisy spectral measurements $Y$. 
 For fixed $\rho$ and $\eta^2$ we denote the loglikelihood of $D$ as
 \begin{align*}
    L(D)= \log\bigl(  p_{S,\eta^2}( Y ) \bigr) \; .
 \end{align*}
The observed information with respect to the tensor parameter $D$ is given by
\begin{align*} & J_o(D)=
 -  \biggl[ \frac{\partial^2 L(D) }
             { \partial D_{ij}\partial D_{l r }}  \biggr]_{i\le j, l\le r} & \\ &=\frac{  S^2 }{ \eta^2}
 \biggl( 2  +  \frac {Y^2 }{\eta^2}  \biggl\{
  \frac{  I_1\bigl(   S Y / \eta^2   \bigr  )^2} { 
  I_0\bigl(  S Y / \eta^2  \bigr )^2} -1   \biggr\}   \biggr)\biggl [( 2-\delta_{ij})(2-\delta_{lr})
  g(i) g(j) g(l) g(r)  \biggr]_{i\le j,l\le r} \;. &
\end{align*}
and the Fisher information is obtained by integrating out
the data $Y$ with respect to  \eqref{observation:density} under the signal model \eqref{signal:eq} with
  tensor parameter $D$, obtaining 
\begin{align} \label{fisher:info}
 J(D)&= E_{D}\bigl( J_o(D) \bigr) =E_{D}\biggl(
 \biggl[ 
 \frac{\partial L(D) }{\partial D_{ij} } \frac{\partial L(D) }{\partial D_{lr} }  \biggr]\biggr)_{i\le j,l\le r}
& \\ \nonumber & =w(S/\eta )   \biggl[ 
 (2- \delta_{ij})(2-\delta_{lr})  g(i) g(j) g(l) g(m)  \biggr]_{i\le j,l\le r}\;,
&\end{align}
 depending on the signal to noise ratio (SNR) $S/\eta$  of the complex Gaussian error model through the weight function 
\begin{align*}
  w(z)=
  \frac{ \exp(-z^2/2)} {z^2}  \int_0^{\infty}  x^3\exp\biggl( - \frac{x^2}{2z^2}\biggr) \frac{ I_1(x)^2 } {  I_0(x) } dx -z^4   \ge 0\;,
\end{align*} 
see \cite{idier}.
Note that necessarily  $J_{ij,ij}(D)=4 J_{ii,jj}(D)$  $\forall\;1\le j<i \le 3$.
By replacing the Rician density \eqref{observation:density} with another likelihood which is function of the SNR,
we always obtain  a  Fisher information of the form  \eqref{fisher:info},
with a different weight function.

We now consider a sequence of DTI-experiments, with  measurements $\bigl(Y_k^{(n)}: k=1,\dots,M^{(n)}  \bigr)$ from respective
 signals $\bigl(S^{(n)}_k : k=1,\dots,M^{(n)}  \bigr)$,
 corresponding to the gradients  $\bigl(g^{(n)}_k : k=1,\dots,M^{(n)}  \bigr) \subset \R^3$,
and  denote the scaled Fisher Information as
  \begin{align} \label{fisher:n}
 J^{(n)}(D)= \frac 1 { M^{(n)} } \biggl[ 
 (2- \delta_{ij})(2-\delta_{lr})                
 \sum_{k=1}^{M^{(n)}}w\bigl(   S^{(n)}_k /\eta \bigr)   
  g_k^{(n)}(i) g_k^{(n)} (j) g_{j}^{(n)}(l) g_k^{(n)}(r)  \biggr]_{i\le j,l\le r}.
\end{align}
Assume  that $M^{(n)}\to\infty$  and the sequence of  discrete gradient distributions
 \begin{align*}
\pi^{(n)}(dg)=\frac 1 {M^{(n)} } \sum_{k=1}^{M^{(n)} } {\bf 1}\bigl(  g_k^{(n)}  \in dg \bigl)  
 \end{align*}
converges weakly to a  probability $\pi$ on $\R^3$, which implies
 \begin{align}\label{fisher:infty:noniso}
  & \lim_{n\to\infty}  J^{(n)}(D) = J^{(\infty)}(D)
   & \\ \nonumber &= \biggl[ 
 (2- \delta_{ij})(2-\delta_{lr})   
 \int_{\R^3} w\bigl( 
  \exp\bigl( -  g D g^{\top} \bigr)\rho/\eta \bigr)   g(i) g(j) g(l) g(r) \pi(dg)    \biggr]_{i\le j,l\le r}.   
 &\end{align}
Let   $D^{(n)}$ be a regular statistical estimator of the tensor parameter, as for example the Maximum Likelihood Estimator (MLE),
the penalized MLE, the Bayesian Maximum a Posteriori Estimator (MAP), or the posterior mean, 
based on  the data $(Y_k^{(n)}: 1\le k \le M^{(n)} )$ with gradients  $(g_k^{(n)}: 1\le k \le M^{(n)} )$.
 When $0<\det( J^{(\infty)} )<\infty$,
 under the tensor model 
with true parameter $\bar D$,  all these regular  estimators are consistent with asymptotically Gaussian error, such that 
\begin{align} \label{gaussian:limit:estimator}
 \sqrt{M^{(n)} } ( D^{(n)} - \bar D ) \stackrel{law}{\to} X \sim {\mathcal N}\bigl( 0,  (J^{(\infty)}(\bar D) )^{-1} \bigr)\;.  
\end{align}
\subsection{Isotropic Gaussian limit error distribution}
When  $J^{(\infty)}(\bar D)=A(\bar\mu,\bar\mu)$ as in   \eqref{Omega_D:matrix:2nd} for some $\bar\mu>0$,
the Gaussian limit distribution  \eqref{gaussian:limit:estimator} is isotropic. In such case
Theorem \ref{spectral:clustering}, Corollary 
\ref{asymptotic:eigdensity} and Lemma \ref{sphericity:test:thm}
apply with $a^{(n)}=\bar\mu  M^{(n)} $ and $\lambda=1$.
When the true tensor  $\bar D= \bar \gamma \mbox{I}$ is isotropic, 
and  the asymptotic gradient design distribution $\pi(dg)$ is radially symmetric,  asymptotic isotropy is  achieved with
\begin{align}\label{fisher:infty:iso}  
 & J^{(\infty)}(\bar D)  = & \\ &\nonumber 
   \biggl( \int_0^{\infty} w\bigl( 
  \exp\bigl( - \bar \gamma b \bigr)\rho/\eta \bigr) \nu(db)
    \biggr)\biggl[ 
 (2- \delta_{ij})(2-\delta_{lr})    
 \int_{ {\mathcal S}^2 }   u(i) u(j) u(l) u(r) \sigma(du)    \biggr]_{i\le j,l\le r} 
& \\ \nonumber  &
 =  A(1,1)  \bar \mu, \quad\quad 
\bar \mu  =  \biggl( \int_0^{\infty} w\bigl( 
  \exp\bigl( - \bar \gamma b \bigr)\rho/\eta \bigr)  \nu(db)
    \biggr)\bigg/15 ,   & 
\end{align}
where $b=\parallel g\parallel^2$, referred as  $b$-value, is integrated with respect to 
$$\nu(db)=\pi( \{g:\parallel g \parallel^2 \in db \})\;,$$
and $u=g/\parallel g\parallel$ has  uniform  distribution  $\sigma(du)$  on the 
surface of the unit sphere ${\mathcal S}^2= \{u \in \R^3: \parallel u \parallel = 1 \}$.
A more general  condition  implying \eqref{fisher:infty:iso} is the following: the asymptotic gradient design distribution decomposes as
\begin{align}\label{gradient:distribution}
\pi( dg) = \nu( db ) s(du |b ) \;,
\end{align}
where for $\nu$-almost all $b$-values,   the conditional probability on ${{\mathcal S}^2}$ is such that
 \begin{align}  \label{4:moment:condition}
    \int_{{\mathcal S}^2} f(u)s(du|b) = \int_{{\mathcal S}^2} f(u)\sigma(du)
  \end{align}
for all homogeneous polynomials $f(u_1,u_2,u_3)$ of  degree $t=4$.
\begin{proposition} \label{asymptotic:optimality:1} When the true diffusion tensor $\bar D$ is isotropic, the uniform gradient distribution $\sigma(du)$ maximizes $\det(  J )$
among all probability distributions on the unit sphere.  
 \end{proposition}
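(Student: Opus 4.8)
The plan is to recognise this as a classical D-optimal design problem and to settle it by symmetrisation. The first step is to observe that, since $\bar D=\bar\gamma\,\Id$ and every gradient $g$ in the support of $\pi$ has $\|g\|=1$, the signal $S=\rho\exp(-g\bar D g^{\top})=\rho\exp(-\bar\gamma)$ is the \emph{same} for all gradients, so the weight $w(S/\eta)$ appearing in \eqref{fisher:infty:noniso} is a constant $w_0=w\bigl(\rho e^{-\bar\gamma}/\eta\bigr)\ge 0$; if $w_0=0$ the statement is vacuous, so we may assume $w_0>0$. Writing $\varphi(u)=\bigl((2-\delta_{ij})\,u_iu_j:1\le i\le j\le 3\bigr)\in\R^6$ for the vector encoding the symmetric matrix $uu^{\top}$ in these scaled coordinates, \eqref{fisher:infty:noniso} becomes
\[
  J^{(\infty)}(\bar D)=w_0\,K(\pi),\qquad K(\pi):=\int_{{\mathcal S}^2}\varphi(u)\varphi(u)^{\top}\,\pi(du),
\]
so that $\det J^{(\infty)}(\bar D)=w_0^{6}\det K(\pi)$, and it suffices to prove that $\sigma$ maximises $\det K(\pi)$ over all probability measures $\pi$ on ${\mathcal S}^2$.

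The second step records two structural facts about $F(\pi):=\log\det K(\pi)$ (set $F(\pi)=-\infty$ when $K(\pi)$ is singular). First, $\pi\mapsto K(\pi)$ is affine and $\log\det$ is concave on the positive semidefinite cone, so $F$ is concave on the convex set of probability measures. Second, $F$ is ${\mathcal O}(3)$-invariant: for $R\in{\mathcal O}(3)$ one has $\varphi(Ru)=T_R\varphi(u)$, where $T_R$ is the matrix, in the $\varphi$-coordinates, of the linear map $M\mapsto RMR^{\top}$ on symmetric $3\times3$ matrices; by the transformation rule \eqref{eq:transformation} (Proposition 1.2 of \cite{forrester}) this map has determinant $\det(R)^{m+1}=\det(R)^{4}=1$ for $m=3$, so $\det T_R=1$. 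Hence, if $R_*\pi$ denotes the law of $Ru$ with $u\sim\pi$, then $K(R_*\pi)=T_RK(\pi)T_R^{\top}$ and $\det K(R_*\pi)=(\det T_R)^{2}\det K(\pi)=\det K(\pi)$, i.e.\ $F(R_*\pi)=F(\pi)$.

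The third step is the symmetrisation. Given any probability measure $\pi$ on ${\mathcal S}^2$, put $\bar\pi:=\int_{{\mathcal O}(3)}R_*\pi\,H_3(dR)$. Since ${\mathcal O}(3)$ acts transitively on ${\mathcal S}^2$, $\bar\pi$ is ${\mathcal O}(3)$-invariant and therefore equals the uniform distribution $\sigma$. Combining the concavity of $F$ (Jensen's inequality applied to the Haar average) with the invariance $F(R_*\pi)=F(\pi)$ gives
\[
  F(\sigma)=F(\bar\pi)\ \ge\ \int_{{\mathcal O}(3)}F(R_*\pi)\,H_3(dR)=\int_{{\mathcal O}(3)}F(\pi)\,H_3(dR)=F(\pi).
\]
Consequently $\det K(\sigma)\ge\det K(\pi)$ for every $\pi$ — the inequality being trivial when $\det K(\pi)=0$, because $K(\sigma)$ is positive definite (indeed $K(\sigma)=A(1,1)/15$ by \eqref{fisher:infty:iso}, cf.\ \eqref{Omega_D:matrix:2nd}) — and hence $\sigma$ maximises $\det J^{(\infty)}(\bar D)$, as claimed. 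The only slightly delicate point is the identity $\det T_R=1$, which is exactly the content of the quoted Forrester transformation law, so it poses no real obstacle; the verification that the Haar average of an arbitrary $\pi$ equals $\sigma$ uses only transitivity of the action, and everything else is the standard concavity-plus-symmetry argument underlying D-optimality. The same proof works verbatim in any dimension $m$, since $(\det T_R)^{2}=\det(R)^{2(m+1)}=1$.
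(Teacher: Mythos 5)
Your proof is correct, and it takes a genuinely different route from the paper. Both arguments rest on the concavity of $J\mapsto\log\det J$ established through the identity \eqref{eq:logdetdiff}, and both implicitly or explicitly use that the weight $w(S/\eta)$ is constant over ${\mathcal S}^2$ when $\bar D$ is spherical — you make this explicit, the paper does not, which is a small improvement in exposition. Beyond that the mechanisms diverge. The paper performs a variational calculation: it evaluates the Gâteaux derivative of $\log\det J$ at $\sigma$ in the direction $\nu-\sigma$, observes that isotropy of $J^{-1}(\sigma)$ makes the integrand constant so the first-order term vanishes, and concludes by concavity that $\sigma$ is a global maximiser. You instead observe that the objective $\pi\mapsto\det K(\pi)$ is ${\mathcal O}(3)$-invariant (via the Forrester transformation law \eqref{eq:transformation} applied to $T_R$, giving $\det T_R=\det(R)^{m+1}=\pm1$), and then symmetrise: Haar-averaging any $\pi$ yields $\sigma$ by transitivity, and Jensen's inequality combined with the invariance gives $F(\sigma)\ge F(\pi)$ directly. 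The paper's argument is closer in spirit to the general equivalence theorem for D-optimal designs and would let one read off necessary conditions for optimality at non-spherical $\bar D$; your argument is shorter, cleaner, manifestly dimension-free, and makes the role of the symmetry group transparent. One tiny point you gloss over: the Forrester Jacobian is computed in the unscaled coordinates $(dD_{ij}:i\le j)$ whereas your $\varphi$ uses the scaling $(2-\delta_{ij})$; since that rescaling is a fixed diagonal change of basis applied to both domain and codomain, it cancels in the determinant, so the conclusion $\det T_R=\pm1$ stands — but a sentence acknowledging this would close the gap cleanly. One could also bypass Forrester entirely by noting that $M\mapsto RMR^{\top}$ preserves the Frobenius inner product on symmetric matrices.
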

\begin{proof} 
When $J$ is invertible 
 we have \cite[Theorem 8.1]{magnus}
\begin{align}\label{eq:logdetdiff}
  d\log \det(J)=
  \tr( J^{-1} dJ )  , \quad \; d^2 \log \det(J) =  - \tr(  J^{-1} dJ J^{-1} dJ ) \le  0 \; ,
\end{align}
 which implies that the  function $J\mapsto \log \det(J)\in \R\cup\{ -\infty\}$ is concave, and a local maximum is also a global maximum. 
Let $\nu(du)$ be   probability measure on ${\mathcal S}^2$,
and consider a small perturbation of the uniform measure $\sigma$ in the direction $\nu$.
By taking the  differential using \eqref{eq:logdetdiff}, we obtain
\begin{align} \label{det:proof}
\lim_{\varepsilon \to 0+} &
\frac{  \log \det J( (1-\varepsilon) \sigma + \varepsilon \nu) - \log \det J(\sigma) }{ \varepsilon }  & \\ \nonumber &
= \int_{{\mathcal S}^2 } \biggl(  \sum_{i\le j, l\le r}  J^{-1}_{ij,lr} (\sigma)( 2-\delta_{ij}) (2- \delta_{lr} )    u_i u_j u_l u_r \biggr)(\nu-\sigma)(du) =0
\;, &\end{align}
where  since $J^{-1}(\sigma)$ is also isotropic,   for every $u,v\in {\mathcal S}^2$
\begin{align*}
\sum_{i\le j, l\le r}  J^{-1}_{ij,lr} (\sigma)( 2-\delta_{ij}) (2- \delta_{lr} )    u_i u_j u_l u_r 
=\sum_{i\le j, l\le r}  J^{-1}_{ij,lr}(\sigma) ( 2-\delta_{ij}) (2- \delta_{lr} )    v_i v_j v_l v_r
\end{align*}
   and the integrand in $\eqref{det:proof}$ is constant, which means that  $ \det\bigl(J(\sigma)\bigr)$ is a global maximum. 
\end{proof} 
This shows that, when the true tensor $\bar D$ is isotropic,  asymptotically uniform 
gradient designs are most informative, minimizing the Gaussian entropy of the asymptotic estimation error
\begin{align*}
  H(J^{(\infty)})=\mbox{const.} - \log\bigl( \det(J^{(\infty)} )\bigr)/2 \; .
\end{align*}
\if 0
\begin{corollary} \label{asymptotic:optimality:2 }For each $b>0$ the limiting  gradient probability distribution
\begin{align*} 
 s^*(du)=s^*(du |\bar D,b) \propto w\bigl( \exp\bigl( - b u \bar D u^{\top} \bigr)\rho/\eta \bigr)^{-1}\sigma(du),
\end{align*}
where $\sigma(du)$ is the  uniform probability on ${\mathcal S}^2$,
 maximizes the determinant of
 \begin{multline*}
 J^{(\infty)}(\bar D,b,s)=  \\
 \biggl[ (2- \delta_{ij})(2-\delta_{lr})   
 \int_{\R^3} w\bigl( 
  \exp\bigl( -  b u \bar D u^{\top} \bigr)\rho/\eta \bigr)   u(i) u(j) u(l) u(r) s(dg)    \biggr]_{i\le j,l\le r},
 \end{multline*}
 among all probability distributions $s(du)$ on ${\mathcal S}^{2}$,
and minimizes the entropy of the asymptotic estimation error.
\end{corollary}
\begin{proof} Note that $J^{(\infty)}(\bar D,b,s^*) \propto J^{(\infty)}( \sigma)\propto A(1,1)$ is isotropic and it depends linearly
on the gradient distribution. The proof of Proposition \ref{asymptotic:optimality:1} applies.
\end{proof}\fi
  In the next section we introduce
discrete gradient distributions which attain the same bound.   
\subsection{Spherical $t$-designs in Diffusion Tensor Imaging} \label{section:t-spherical}
A spherical $t$-design $\Upsilon \subset {\mathcal S}^{m-1}$ is a finite subset of $m$-dimensional unit vectors with the property
\begin{align}  \label{4:moment:condition:2}
   \int_{{\mathcal S^{m-1}}  }  f( u ) \sigma(du) = \frac 1 { \# \Upsilon } \sum_{  \upsilon \in  \Upsilon }  f( \upsilon)
\end{align}
for all polynomials $f(u_1,\dots,u_m)$ of degree $r\le t$, where  $\sigma$ is the uniform probability measure on ${\mathcal S}^{m-1}$,
and $\#\Upsilon$ is the number of points in $\Upsilon$.
In other words, a spherical $t$-design is a quadrature rule on $S^{m-1}$
with constant weights.  
The algebraic theory behind such designs  is deep and beautiful \cite{delsarte}, for a recent survey see \cite{bannai,an}.
In  particular, in dimension $m=3$,  spherical $t$-designs of order $t\ge 4 $ satisfy   \eqref{4:moment:condition}.
A  database of spherical $t$-designs on ${\mathcal S}^2$ computed by Rob Womersley is available at his webpage
\url{http://web.maths.unsw.edu.au/~rsw/Sphere/EffSphDes/}. Table \ref{table:design} displays the sizes
of these designs and Fig. \ref{fig:gradients:4th_order} shows a spherical $t$-design of order 4 with 14 gradients from
Womersley's database.

When $\Upsilon= - \Upsilon$,  
we say that the spherical design 
is {\it  antipodal}.
Two well known examples (see \cite{basser-pajevic03},\cite{batchelor})
are   the regular icosahedron and  its dual, the regular dodecahedron, 
whose vertices form antipodal spherical $t$-designs of order $5$ with sizes $12$ and $20$, respectively.
Note that any two antipodal gradients produce the same DTI-signal. 
Starting from  an antipodal spherical $t$-design $\Upsilon$ 
and selecting one gradient from  each antipodal pair  $\{u, -u \}\subset \Upsilon$, 
we obtain a  design $\Upsilon'$ of size $\#\Upsilon'=\#\Upsilon/2$ 
which satisfies \eqref{4:moment:condition:2}   for all homogeneous polynomials $f$ of even degree $\le t$.
Figures  \ref{fig:gradients:icosahedron}-\ref{fig:gradients:dodecahedron}
show respectively the intersection of the northern hemisphere with the regular icosahedron and 
dodecahedron,  forming gradient designs of size $6$ and $10$ which satisfy \eqref{4:moment:condition:2} 
 for all  homogeneous polynomials $f$ of degrees $2$ and $4$.

   In the DTI experiment, for a finite subset of $b$-values
$0< b_1^{(n)}\le \dots \le b_n^{(n)}$ and respective spherical $t$-designs  $\Upsilon_{\ell}^{(n)}$  of order   $t^{(n)}_{\ell}\ge 4$, 
we construct the gradient set as the union of  shells
\begin{align*}
 G^{(n)}= \bigcup_{\ell=1}^n 
\Upsilon_{\ell}^{(n)}\sqrt{b_{\ell}^{(n)}}  \subset \R^3\; .
\end{align*}
The resulting gradient distribution 
\begin{align*}
 \pi^{(n)}( B ) = \frac{ \#\bigl(  G^{(n)} \cap B \bigr) }
{ \# G^{(n)} } , \quad   B \subseteq \R^3 \; .
\end{align*}
satisfies \eqref{gradient:distribution}, and  when the true tensor $\bar D=\bar\gamma \mbox{I}$ is totally symmetric,  we  have
\begin{align*}
  J^{(n)}(\bar D)=\bar \mu^{(n)} A( 1 , 1)
\end{align*}
with
\begin{align*}
 \bar \mu^{(n)}= \frac 1 { 15 \#G^{(n)}  } \sum_{\ell =1}^n 
  w\bigl( 
  \exp\bigl( - \bar \gamma b^{(n)}_{\ell}  \bigr)\rho/\eta \bigr)  \# \Upsilon^{(n)}_{\ell}  \;,
\end{align*}
i.e. the Fisher information  coincides with the precision matrix of an Isotropic Gaussian matrix distribution.
When $\Upsilon \subset S^{2}$ is a spherical $t$-design and $O \in SO(3)$ is a rotation matrix,
the rotated design $ O\Upsilon $ is  a spherical $t$-design as well.
Since the true tensor   $\bar D $ is unknown, and possibly it is not isotropic,  
in practice it is advisable to choose the gradient directions covering  ${\mathcal S}^2$ as  uniformly as possible.
To achieve that,
different $t$-designs can be rotated with respect to each other in order maximize the  spread between gradient directions.
Namely, starting from a collection of spherical $t$-designs $\Upsilon_1^0,\dots, \Upsilon_n^0$ of respective orders $t_k, \; 1\le k \le n$
 we find the optimized design $\Upsilon_k^{(n)} = O^*_{k} \Upsilon_k^0$, $1\le k\le n$, where 
 $O^{*}_1,\dots ,O^*_n$ are rotation matrices maximizing
\begin{align} \label{geodesic:dist:critera}
 \max_{ O_1,\dots, O_n \in SO(3)} \min_{1\le k < l \le n}  \bigl\{  \mbox{dist}(  O_k \Upsilon^0_k , O_l \Upsilon_l^0 )  \bigr\}\; ,
 \end{align}
 with  $\mbox{dist}( U,V)= \sup_{ u  \in U, v \in V } \mbox{dist}(u,v) $ and  $\mbox{dist}(u,v) $ is the  geodesic
  distance on  ${\mathcal S}^2$.
This can be achieved by a  greedy  iterative algorithm, where in turn \eqref{geodesic:dist:critera} is optimized
with respect to each single $O_k$ keeping fixed the other rotations until convergence to
a fixed point. 
Fig.  \ref{fig:gradients} shows a gradient sequence obtained in such a way, with colors
correponding to spherical $t$-designs on different shells.
The benefits of these gradient designs are illustrated  in the next paragraph.

\if 0
We claim that in dimension 3, spherical $t$-designs of order $t\ge 4$ have good properties and
should be used to construct the set of gradients in the DTI experiment.

Namely, when the true tensor is totally symmetric, under a spherical $t$-design, the Fisher
information is isotropic.

 When the true tensor parameter $\bar D$ is totally symmetric, and  $Z$ corresponds
  to a spherical $t$-design with $t\ge 4$, the Fisher information $J(\bar D)$ corresponds
   to an isotropic precision matrix of the form \eqref{isotropic_prior:2}  with $\mu=\lambda$,
    not only asymptotically but also for finite sample size.
    
    Formula \eqref{test:statistics:distribution} on the distribution of the test statistics
     relies on three assumptions on the estimation error for the tensor $D$,
     the noise is Gaussian, isotropic.  
       Under the assumption \ref{assumption:Z} on the sequence of designs
   these assumptions hold asymptotically, for a finite sample the ML-estimator
     $\widehat D$ is approximated by a Gaussian, with approximately isotropic error distribution.
     Such anisotropy undermines the accuracy of a symmetry test based on the asymptotic
      distribution  \eqref{test:statistics:distribution}.
 \fi 
\if 0
\paragraph{asymptotics in growing dimension}
We have as $\mu \uparrow\infty$
\begin{eqnarray*} &&
  \exp( -2 \mu \gamma \bar \gamma) \int_{SO(3) } \exp\bigl (2 \mu  \tr( O G O^{\top} \bar G ) \bigr)H(dO)
 =  \\ && 1 - \frac{ 2\mu}  3 ( \tau_{12} + \tau_{13} + \tau_{23}  ) + \frac  {4\mu^2} {10}  ( \tau_{12}^2 + \tau_{13}^2 + \tau_{23}^2  )
 + \frac {4 \mu^2} {15} ( \tau_{12}\tau_{13}+ \tau_{12}\tau_{23} +  \tau_{13}\tau_{23}  ) + O( \mu^3) 
\end{eqnarray*}
with 
\begin{eqnarray*}
 \tau_{ij} = ( \gamma_i - \gamma_j)(\bar \gamma_j-\bar\gamma_j)
\end{eqnarray*}
\fi
 \section{ILLUSTRATION OF THE METHODS} \label{section:mcsim}
\subsection{Monte Carlo study with  isotropic Gaussian noise}\label{MC:study:I}
\if 0
In \cite{basser-pajevic03} the eigenvalue distribution
is also analyzed in a Monte Carlo study. Concerning certain
quadratic functionals
of the eigenvalues,
it is stated that (Section F){ \it
''agreement between the analytical distribution and MC simulated data is excellent'',}
on the other hand (footnote 7, Section E),
{\it ''the theoretical distribution, 
may not always conform to
an empirical distribution obtained, e.g. by using MC simulations, because of the
well-known sorting artifact that occurs when one orders calculated eigenvalues.'' }

For small dimensional matrices,
a Monte Carlo study of moderate size can be misleading when we look at
the marginal distribution  the GOE eigenvalues.
However, looking at the joint eigenvalue
distribution, the discrepancy is evident:

For each of   5000 independent  GOE realizations, 
we picked randomly 2 distinct unordered eigenvalues  out of 3
to form random pairs $(X_i,Y_i)$, displayed in Fig. \ref{fig1}. 
The empirical joint distribution
avoids the diagonal, in agreement with \eqref{eigenvalue:density}. For comparison, Fig. \ref{fig2} 
shows 10000
i.i.d.standard Gaussian random pairs.
\fi
Fig. \ref{fig1}   shows the results from a Monte Carlo study with a sample of $N=10000$ i.i.d. $3\times 3$ symmetric random matrices
with isotropic Gaussian density \eqref{isotropic_prior:2}  with precision parameters $\mu=1/2$, $\lambda=0$,
for
various choices of the diagonal  mean matrix:
\begin{enumerate}

\item[(a)]  $\bar D=0$, correponding to the $3\times3$ GOE,

\item[(b)] $\bar D$ isotropic,  with $\bar\gamma_1=\bar\gamma_2=\bar\gamma_3=15$, 

\item[(c)] $\bar D$ prolate, with $\bar\gamma_1=15  > \bar \gamma_2 = \bar\gamma_3 =3$, 

 \item[(d)] $\bar D$ oblate, with $\bar\gamma_1=\bar\gamma_2 =15   > \bar\gamma_3 =3$. 

 \end{enumerate}For comparison we show in Fig. \ref{fig1}e  
i.i.d. eigenvalue pairs from the $2\times 2$-Gaussian orthogonal ensemble, and in  \ref{fig1}f
 i.i.d. pairs of independent  standard Gaussian random variables. The empirical joint  eigenvalue distribution
avoids the diagonal, in agreement with \eqref{eigenvalue:density}. 
We see that the fluctuations
of the eigenvalues corresponding to the same $\bar D$ eigenspaces around their mean 
are distributed like the GOE corresponding to the dimension of the eigenspace. 
 One can see also some
differences between the GOE eigenvalue distribution in dimension 2 (in Fig. \ref{fig1}e, 
sampled with precision parameters $\mu=1/2,\lambda=0$,
which agrees with
\ref{fig1:positive:prolate} and
\ref{fig1:positive:oblate}), and dimension 3 (in Fig. \ref{fig1:GOE3}, which
agrees with \ref{fig1:positive:totally:symmetric}).

Fig. \ref{fig3} shows that, in the case with prolate mean matrix, 
the empirical distribution of the  cluster barycenter $(\gamma_2+ \gamma_3)/2$ 
fits very well
the Gaussian distribution.

Fig. \ref{fig:power:comparison}
shows the behaviour of the sphericity test statistics $\tau_2,\tau_4,\tau_5$
under  Gaussian matrix distributions
with the same  isotropic  precision matrix $ A(2,2)$, and different  means: namely
a spherical mean tensor, and  $15$ prolate mean tensors, all  with the same mean diffusivity $\kappa_1(\bar D)=15$,
and  FA in  $(0.01,0.15]$. We can see that at this noise level, under the null hypothesis, the
distributions of these three test statistics fit very well the asymptotic $\chi_5^2$ distribution,
while under prolate alternatives the corresponding sphericity tests have approximately the same power
 at all significance levels.
 
Fig. \ref{fig:eigenvector:oblate} displays on the unit sphere the orthonormal eigenvector
triples from the Gaussian model with isotropic noise parameters $\mu=1/2,\lambda=0$, with $N=200$ i.i.d. replications.
On the left side figure the mean tensor diagonal and totally anisotropic  with $\bar\gamma_1=15,\bar\gamma_2=7.5, \bar\gamma_3=3$.
On the right the mean tensor is diagonal and oblate, with 
$\bar\gamma_1=\bar\gamma_2=15, \bar\gamma_3=3$, and the
eigenvectors corresponding to the first two eigenvalues
 are uniformly  distributed   around  the equator.
 
\subsection{Monte Carlo study of sphericity test statistics based on DTI data with Rician noise}\label{MC:study:II}
In order to validate the asymptotic results of Lemma
\ref{sphericity:test:thm} and Corollary \ref{volume:ratio:corollary}, 
 we conducted  another large Monte Carlo study,
 with DTI data  simulated under the Rician
 noise model 
 with   ground truth parameters
 $\eta^2=64.056$, $\rho=110.046$, and
 isotropic  diffusion tensor $\bar D=6.622\times 10^{-4}\times\mbox{\rm Id}$ ${\rm mm^2/s}$.
 For each of  the experimental designs 1-5 below, which have increasing number of acquisitions, we simulated $N=50000$ replications
 of the dataset, and for each replication 
  we computed the MLE  $D^{(n)}$ based on the simulated data by using the EM-algorithm from \cite{liu}.
 The empirical distribution of the sphericity statistics
 $\tau_2^{(n)},\tau_3^{(n)}$ \eqref{test:statistics:T} and $\tau_5^{(n)}$ \eqref{asymptotic:regime:2} 
 with their theoretical limit distributions are displayed correspondingly 
 in Figures  \ref{fig:sphericity:test:14}-\ref{fig:sphericity:test:32x3}.
 \begin{enumerate}
 \item[Design 1:]  Spherical $t$-design of order  $4$ with $14$ gradients computed by R. Womersley, shown
  in Fig. \ref{fig:gradients:4th_order}, with  $b$-value $996$  ${\rm s/mm^2}$, 
  and one acquisition at zero $b$-value, for a total of $15$ acquisitions.
  The corresponding Fisher information is given by 
  \begin{align*}
  J^{(n)}(\bar D)=      \bar \mu^{(n)} A_{3}(1,1), \quad\bar\mu^{(n)}=  4.63  \times 10^7  {\rm s^2/{mm^4} },
\end{align*}
  and the ML estimator $\mbox{vec}(D^{(n)})$ has a Gaussian approximation
                                           with mean         $\mbox{vec}(\bar D)$ and isotropic covariance
\begin{align*}\Sigma^{(n)}=J^{(n)}(\bar D)^{-1}= 10^{-9} \times\left(
\begin{matrix}
    8.64  &   -2.16   &  -2.16   &        0   &        0    &       0    \\
   -2.16  &    8.64   &  -2.16   &        0   &        0    &       0      \\
   -2.16  &   -2.16   &   8.64   &        0   &        0    &       0        \\
         0  &         0   &        0   &   5.4   &        0    &       0         \\
         0  &         0   &        0   &        0   &   5.4    &       0          \\
         0  &         0   &        0   &        0   &        0    &  5.4          \\
\end{matrix} \right)                                          {\rm \frac{ mm^4}{s^2} }\; .
\end{align*}
   \item[Design 2:] It is based on the icosahedron with  the $6$  gradients shown in Fig. \ref{fig:gradients:icosahedron} for each  $b$-value 
 in the set $\{   560,
         778,
         996,
        1276,
        1556,
        1898,
        2240\}$
 ${\rm s/mm^2}$,
 and one acquisition at zero $b$-value, for a total of $43$ acquisitions.  
 The corresponding Fisher information is given by 
 \begin{align*}
 J^{(n)}(\bar D)=      \bar \mu^{(n)} A_{3}(1,1), \; \bar\mu^{(n)}=  1.323  \times 10^8  {\rm s^2/{mm^4} }
 \end{align*}
                                           and the ML estimator $\mbox{vec}(D^{(n)})$ has a  Gaussian approximation
                                           with mean         $\mbox{vec}(\bar D)$ and isotropic covariance
\begin{align*}\Sigma^{(n)}=J^{(n)}(\bar D)^{-1}= 10^{-9} \times\left( {\small
\begin{matrix}
    3.02   &   -0.76   &   -0.76   &         0   &         0  &          0   \\
   -0.76   &    3.02   &   -0.76   &         0   &         0  &          0     \\
   -0.76   &   -0.76   &    3.02   &         0   &         0  &          0       \\
         0   &         0   &         0   &    1.89   &         0  &          0        \\
         0   &         0   &         0   &         0   &    1.89  &          0         \\
         0   &         0   &         0   &         0   &         0  &     1.89          \\
\end{matrix} } \right)                                          {\rm \frac{ mm^4}{s^2} } \; .
\end{align*}
 
  \item[Design 3:]  It is based on the dodecahedron with the  $10$  gradients shown in  Fig. \ref{fig:gradients:dodecahedron}
 for each  $b$-value 
 in the set 
 \begin{align*}\bigl\{   560,
         778,
         996,
        1276,
        1556,
        1898,
        2240\bigr\} \; {\rm s/mm^2} ,\end{align*}
 and one acquisition at zero $b$-value, for a total of $71$ acquisitions.
 The corresponding Fisher information is given by 
   \begin{align*}J^{(n)}(\bar D)=      \bar \mu^{(n)} A_{3}(1,1), \quad \bar\mu^{(n)}=  2.205   \times 10^8  {\rm s^2/{mm^4} },
   \end{align*}
and the  ML estimator $\mbox{vec}(D^{(n)})$ has a  Gaussian approximation
                                           with mean         $\mbox{vec}(\bar D)$ and isotropic covariance
\begin{align*}\Sigma^{(n)}=J^{(n)}(\bar D)^{-1}= 10^{-9} \times\left( {\small
\begin{matrix}
            1.81  &  -0.45   & -0.45   &       0  &        0   &       0  \\
           -0.45  &   1.81   & -0.45   &       0  &        0   &       0    \\
           -0.45  &  -0.45   &  1.81   &       0  &        0   &       0      \\
                 0  &        0   &       0   &  1.13  &        0   &       0       \\
                 0  &        0   &       0   &       0  &   1.13   &       0        \\
                 0  &        0   &       0   &       0  &        0   &  1.13         \\
\end{matrix} } \right)                                          {\rm \frac{ mm^4}{s^2} } \; .
\end{align*}
  \item[Design 4:]
  Combination of  spherical $t$-designs of orders 5,7,9,11, shown in  Fig. \ref{fig:gradients}
 on  shells corresponding to the $b$-values $\{  560,
         996,
        1556,
        2240\}$, respectively,   with one acquisition at zero $b$-value, for a total of $163$ acquisitions.
 
    The corresponding Fisher information is given by 
    \begin{align*}
J(D^{(n)})=      \bar \mu^{(n)} A_{3}(1,1),  \quad\bar\mu^{(n)}=  5.263   \times 10^8  {\rm s^2/{mm^4} },
    \end{align*}
and the ML estimator $\mbox{vec}(D^{(n)})$ has a  Gaussian approximation
                                           with mean         $\mbox{vec}(\bar D)$ and isotropic covariance
\begin{align*}\Sigma^{(n)}=J^{(n)}(\bar D)^{-1}= 10^{-10} \times\left(  {\small
\begin{matrix}
    7.6 &    -1.9   &  -1.9    &       0   &        0    &       0     \\
   -1.9 &     7.6   &  -1.9    &       0   &        0    &       0       \\
   -1.9 &    -1.9   &   7.6    &       0   &        0    &       0         \\
      0 &       0   &     0    &    4.75   &        0    &       0          \\
      0 &       0   &     0    &       0   &     4.75    &       0           \\
      0 &       0   &     0    &       0   &        0    &    4.75            \\
\end{matrix} }\right)                                          {\rm \frac{ mm^4}{s^2} }\; .
\end{align*}
  \item[Design 5:] with  $3$ repetitions of the  $32$  gradients in  Fig. \ref{gradient:table}
 for each $b$-value in  
 \begin{align*}\bigl\{ \mbox{\footnotesize 62,     249,       560,       996,    1556,        2240, 
 3049,  3982,        5040 ,6222,        7529,    8960,     10516,      12196,      14000} \bigr \}  {\rm s/mm^2}, \end{align*}
 and $3$ acquisitions at zero $b$-value, for a total of $1443$ acquisitions.
                                           The ML estimator $\mbox{vec}(D^{(n)})$ has a Gaussian approximation
                                           with mean         $\mbox{vec}(\bar D)$ and  non-isotropic covariance
\begin{align}\label{non:iso}\Sigma^{(n)}=J^{(n)}(\bar D )^{-1}= 10^{-10} \times\left( \small{ 
\begin{matrix}
    6.77  &  -3.24 &  -2.31 &  -0.07 &  -0.08  &  0.21     \\
   -3.24  &  7.04  & -2.53  & -0.11  &  0.15   &-0.05        \\
   -2.31  & -2.53  &  6.70  &  0.10  & -0.10   &-0.59          \\
   -0.07  & -0.11  &  0.10  &  1.17  & -0.14   & 0.01           \\
   -0.08  &  0.15  & -0.10  & -0.14  &  1.3   &-0.01            \\
    0.21  & -0.05  & -0.59  &  0.01  & -0.01   & 1.33             \\
\end{matrix} } \right)                                          {\rm \frac{ mm^4}{s^2} } \; .
\end{align}
\end{enumerate}
All scatterplots in Figures  \ref{fig:sphericity:test:14}-\ref{fig:sphericity:test:32x3} are consistent with 
the asymptotic independence of
the sphericity statistics $\tau_2^{(n)}$ and $\tau_5^{(n)}$ from  $\tau_3^{(n)}$.
When the experimental design is based on spherical $t$-designs of order $t\ge 4$ (Designs 1-4), with isotropic
Fisher information,
the empirical distributions of  $\tau_2^{(n)}$ and $\tau_5^{(n)}$ fit well the theoretical limit distribution $\chi^2_5$ (Figures \ref{fig:sphericity:test:14}-\ref{fig:sphericity:test:162}).
The  5th design has the largest number of acquisitions and it is the most informative of all, however
the Fisher information is not isotropic and Fig.
\ref{fig:sphericity:test:32x3} shows that the empirical distributions of $\tau^{(n)}_2$ and $\tau^{(n)}_5$
do not fit the $\chi^2_5$ distribution, with the consequence of underestimating
the Type I error probability of rejecting an isotropic true tensor. We conclude that the distribution of 
these sphericity statistics is sensitive to anisotropies of the estimation error distribution.
As it was shown in section  \ref{section:sphericity},
these sphericity test statistics should be calibrated against the law of
 $\tau(c +\kappa_1(X),\kappa_2(X),\kappa_3(X) )$, evaluated at $c=\kappa_1(D^{(n)})$,
 where $X$ is the zero mean symmetric Gaussian matrix with covariance \eqref{non:iso}.

We also remark that in lower part of Fig. \ref{fig:sphericity:test:14}-\ref{fig:sphericity:test:10}, compared
with the uniform density, the histogram estimator of the  $\tau_3^{(n)}$ density shows an increasing
linear trend. This linear trend is less evident in \ref{fig:sphericity:test:162}, which is based on a larger number of
acquisitions, and the distribution of the MLE $D^{(n)}$ is presumably 
better approximated by a Gaussian than
in the previous cases.
By taking absolute value  $\vert\tau_3^{(n)}\vert$ the linear trend cancels out,
and the  histogram of $\vert\tau_3^{(n)}\vert$ in the upper part of 
Figures \ref{fig:sphericity:test:14}-\ref{fig:sphericity:test:32x3} fits robustly the
uniform distribution in all the situations we have considered.

\section{CONCLUSION} \label{conclusion}
We have considered the problem of estimating   the spectrum $\bar \gamma_1\ge \bar \gamma_2 \ge \dots \ge \bar \gamma_m$  and the eigenvectors      of a real symmetric  $m\times m$
matrix $\bar D$, possibly non-positive, by the spectrum and the eigenvectors 
of a consistent  and asymptotically Gaussian matrix estimator $D^{(n)}$,
 assuming that  the covariance of  the rescaled limit is isotropic.
 When $\bar D$ has repeated eigenvalues, the delta method does not apply and
the spectrum  of the matrix estimator has  a non-Gaussian limit distribution.
In the limit, 
the random  eigenvalues $ \gamma_1^{(n)}>\gamma_2^{(n)}>\dots >  \gamma_m^{(n)}$ of $D^{(n)}$
form clusters corresponding to the $\bar D$
eigenspaces, with jointly Gaussian barycenters. 
Within each cluster, the differences between eigenvalues and  barycenter
are independent from the barycenter  and the other clusters, 
and follow the conditional law of GOE eigenvalues conditioned on having zero barycenter.

In many applications  it is important to detect the symmetries of the true matrix parameter $\bar D$, in particular to test whether $\bar D$ is  spherical, 
which leads to singular hypothesis testing problems.
A statistical test against $\bar D$-symmetries needs to be calibrated  
taking into account the repulsion between
the random eigenvalues of $D^{(n)}$  corresponding to the same $\bar D$-eigenspace.
In dimension $m=3$, we derived
the    asymptotic joint distribution of some commonly used
sphericity statistics as Fractional  Anisotropy, Relative Anisotropy and Volume Ratio under isotropy assumptions.
We have also discussed the implications of these general results  for the design and analysis of DTI measurements,
and we showed that gradient designs based on spherical $t$-designs have isotropic Fisher information and are asympotically most informative
when the true tensor is 
spherical.
A direct application would be in  denoising the  FA maps derived from  diffusion tensor estimates.
Testing for sphericity at each volume element with a fixed confidence level, corresponds to
a FA cut-off threshold which is not constant over the voxels  but depends locally on the  estimated
noise and mean diffusivity parameters. We have seen 
in the Monte Carlo study  that the simulated sphericity statistics fit well their theoretical limit
distribution when the Fisher  information of the experiment was isotropic.
However, 
there was a significant discrepancy  under  experimental design 5, with non-isotropic Fisher information.
We conclude that these findings  give a  strong theoretical argument in favour of using spherical $t$-designs in DTI, and we plan to conduct 
similar experiments with real DTI data in the near future.
 Finally, our work in progress is to  generalize this theory to situations in which the covariance of the Gaussian 
limit matrix has symmetries without being fully isotropic.

\section*{ACKNOWLEDGMENTS} We thank Konstantin Izyurov, Sangita Kulathinal Antti Kupiainen and Juha Railavo for insightful discussions.

\begin{table}[]
\centering
\caption*{Size of Spherical $t$-Designs }
\begin{tabular}{|l|r|r|r|r|r|r|r|r|r|r|r|r|r|r|r|r|r|r|r|r|r|r|}\hline
$t$ &   4 & 
 5 & 
 6 & 
  7  & 
  8 & 
  9  & 
 10  &  
 11  & 
 12 &  
 13 & 
 14 & 
 15 &
 16 & 17 
 \\  \hline 
$n_a$ & -&12 & - & 32 & -& 48 &- & 70 & -& 94  &- & 120 &- & 156 
\\ \hline$n$ &
   14  &
 18   &
   26  &
   32  &
 42 &   50
  & 62  
  & 72 
  &  86
 & 98
  & 114
  &128
& 146
& 163
\\ \hline
  \end{tabular}
\caption{ Number $n_a$ of points  in  some known antipodal spherical $t$-designs of order $4\le t \le 17$ 
in ${\mathcal S}^2$, computed by
 Rob Womersley, while $n$ is for his non-antipodal spherical $t$-designs. }\label{table:design}
\end{table}
\if 0
\begin{table}[]
\centering
\caption*{Size of Spherical $t$-Designs }
\begin{tabular}{|l|r|r|r|r|r|r|r|r|r|r|r|r|r|r|r|r|r|r|r|r|r|r|}\hline
$t$ &   4 & 
 5 & 
 6 & 
  7  & 
  8 & 
  9  & 
 10  &  
 11  & 
 12 &  
 13 & 
 14 & 
 15 &
 16 & 17 & 18 & 19 & 20 & 21 & 22 & 23 & 24 & 25  
 \\  \hline 
$n_a$ & -&12 & - & 32 & -& 48 &- & 70 & -& 94  &- & 120 &- & 156 & -& 192 &- & 234 &- & 278  &- & 328 
\\ \hline$n$ &
   14  &
 18   &
   26  &
   32  &
 42 &   50
  & 62  
  & 72 
  &  86
 & 98
  & 114
  &128
& 146
& 163
& 182 & 201  & 222 & 243 & 266 & 289 & 314 & 339 
\\ \hline
  \end{tabular}
\caption{ Number $n_a$ of points  in  some known antipodal spherical $t$-designs of order $4\le t \le 25$ in ${\mathcal S}^2$, computed by
 Rob Womersley, while $n$ is for his non-antipodal spherical $t$-designs. }\label{table:design}
\end{table}
\fi

 \begin{figure}[]
\centering
\begin{minipage}[r]{0.51\textwidth}
\centering
\vspace{0pt}
\begin{tabular}{ |c c c |} 
   \hline  
   $u_x$ &$u_y$ & $u_z$  \\ \hline 
0.0000   &  0.0000  &  1.0000 \\
0.9473 &  0.0000  & 0.3202 \\
  -0.9035  &  0.1944 & -0.3821 \\
   0.2693 &  0.7379 &  0.6189 \\
   0.4465 & -0.6627 &  0.6012 \\
  -0.8205 & -0.0749 &  0.5668  \\
   -0.1166 &   0.8072 & -0.5787 \\
     0.6831 &   0.6942 & -0.2269 \\
   0.0897 &  0.0476  & -0.9948 \\
   0.7740 & -0.2872 & -0.5642 \\
   0.2389 & -0.9284 & -0.2846 \\
  -0.5595 & -0.5216 & -0.6441 \\
  -0.5094 & -0.8054 &  0.3029 \\
  -0.5394 &  0.7991 &  0.2655 \\  \hline
\end{tabular}
\end{minipage}%
\hfill
\begin{minipage}[l]{0.49\textwidth}
\centering
\vspace{0pt}
\begin{figure}[H]
\centering
\includegraphics[width=\textwidth]{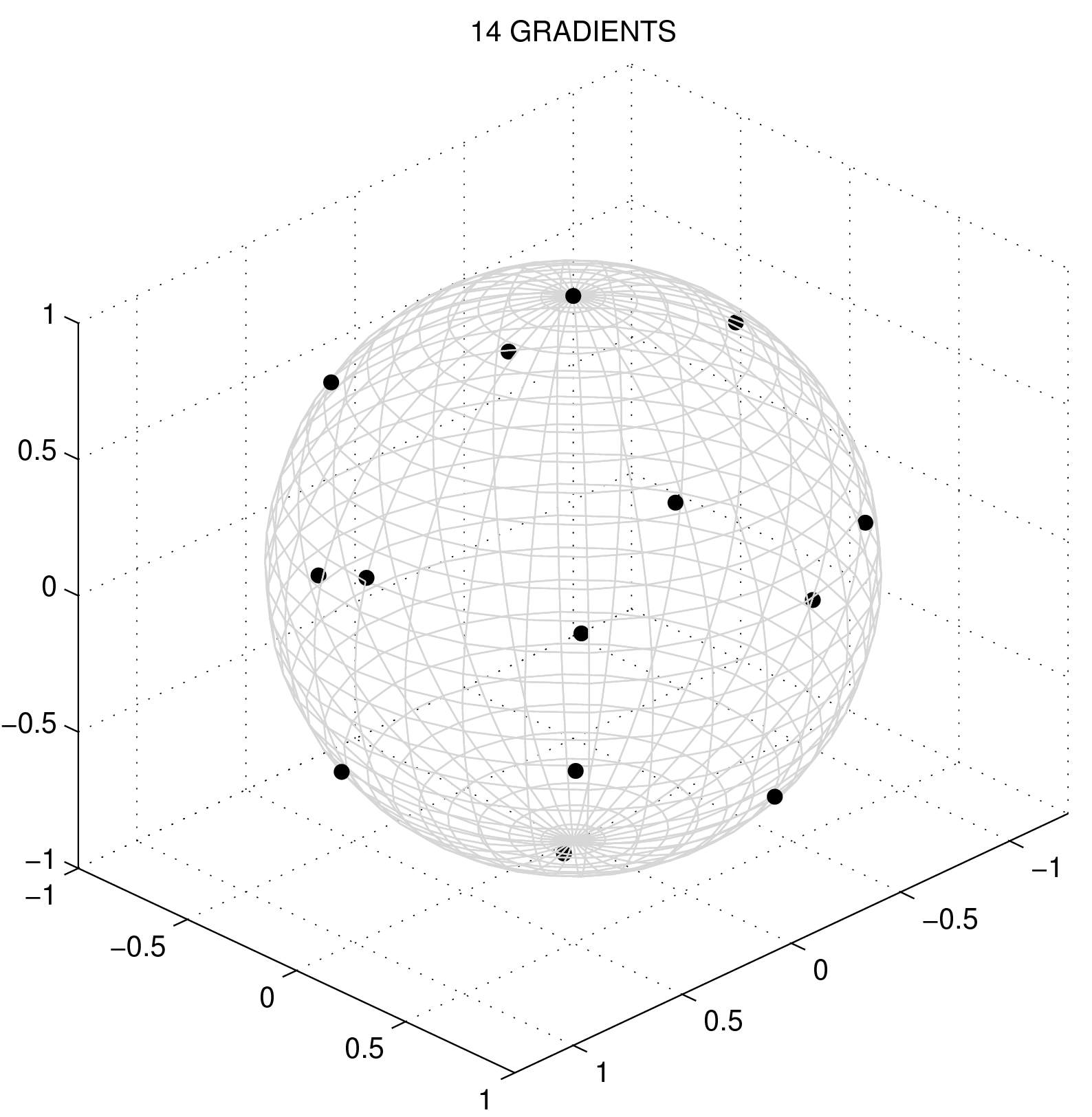}
\end{figure}
 \end{minipage}
 \caption{A non-antipodal spherical $t$-design of order 4, with 14 gradients, by Rob Womersley }
 \label{fig:gradients:4th_order}
\end{figure}

\begin{figure}[]
\centering
\begin{minipage}[r]{0.51\textwidth}
\centering
\vspace{0pt}
\begin{tabular}{ |c c c| } 
   \hline  $u_x$ &$u_y$ & $u_z$  \\ \hline 
   0.0000  &  0.0000  &  1.0000 \\
   0.8944   & 0.0000  &      0.4472 \\   
   0.2764   &-0.8507  &  0.4472 \\ 
  -0.7236&   0.5257 &  0.4472 \\
  -0.7236&  -0.5257 &  0.4472 \\
   0.2764&   0.8507 &  0.4472  \\
\hline
\end{tabular}
\end{minipage}%
\hfill
\begin{minipage}[l]{0.49\textwidth}
\centering
\vspace{0pt}
\begin{figure}[H]
\centering
\includegraphics[width=\textwidth]{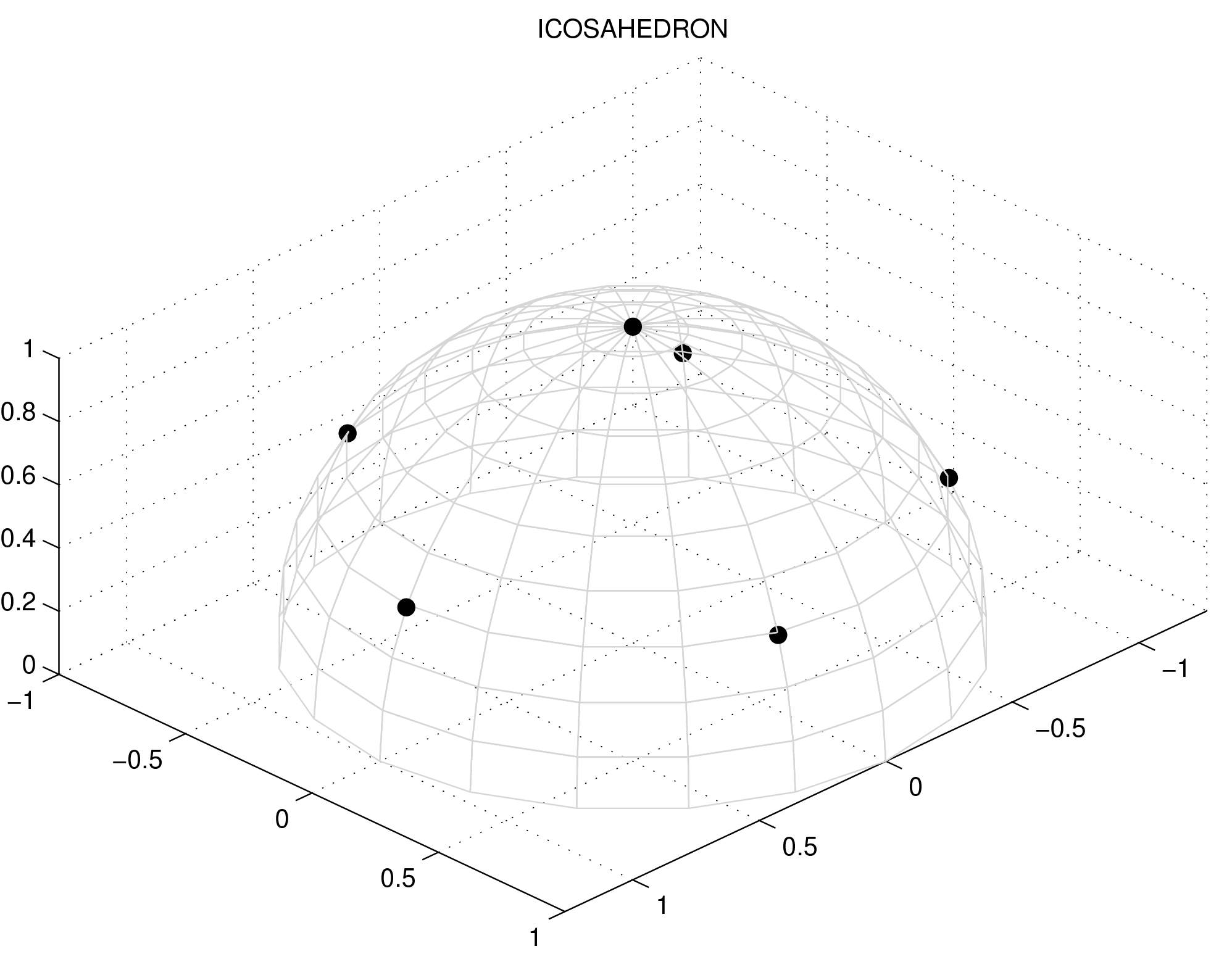}
\end{figure}
 \end{minipage}
 
\caption{Gradient design based on the icosahedron with 6 gradients on the northern hemisphere. }
 \label{fig:gradients:icosahedron}
 \end{figure}

\begin{figure}[]
\centering
\begin{minipage}[r]{0.51\textwidth}
\centering
\vspace{0pt}
\begin{tabular}{ |c c c | } 
   \hline  $u_x$ &$u_y$ & $u_z$  \\ \hline 
    -0.9342  & 0.3568  & 0.0000   \\
  -0.5774 & -0.5774&  0.5774  \\
  -0.5774 &  0.5774 &  0.5774  \\
  -0.3568 &  0.0000 &  0.9342  \\
   0.0000 & -0.9342  & 0.3568 \\
   0.0000 &  0.9342  & 0.3568  \\
   0.3568 &  0.0000  & 0.9342   \\
   0.5774 & -0.5774  & 0.5774   \\
   0.5774 &  0.5774  & 0.5774  \\
   0.9342 &  0.3568  & 0.0000  \\
%
\hline
\end{tabular}
\end{minipage}%
\hfill
\begin{minipage}[l]{0.49\textwidth}
\centering
\vspace{0pt}
\begin{figure}[H]
\centering
\includegraphics[width=\textwidth]{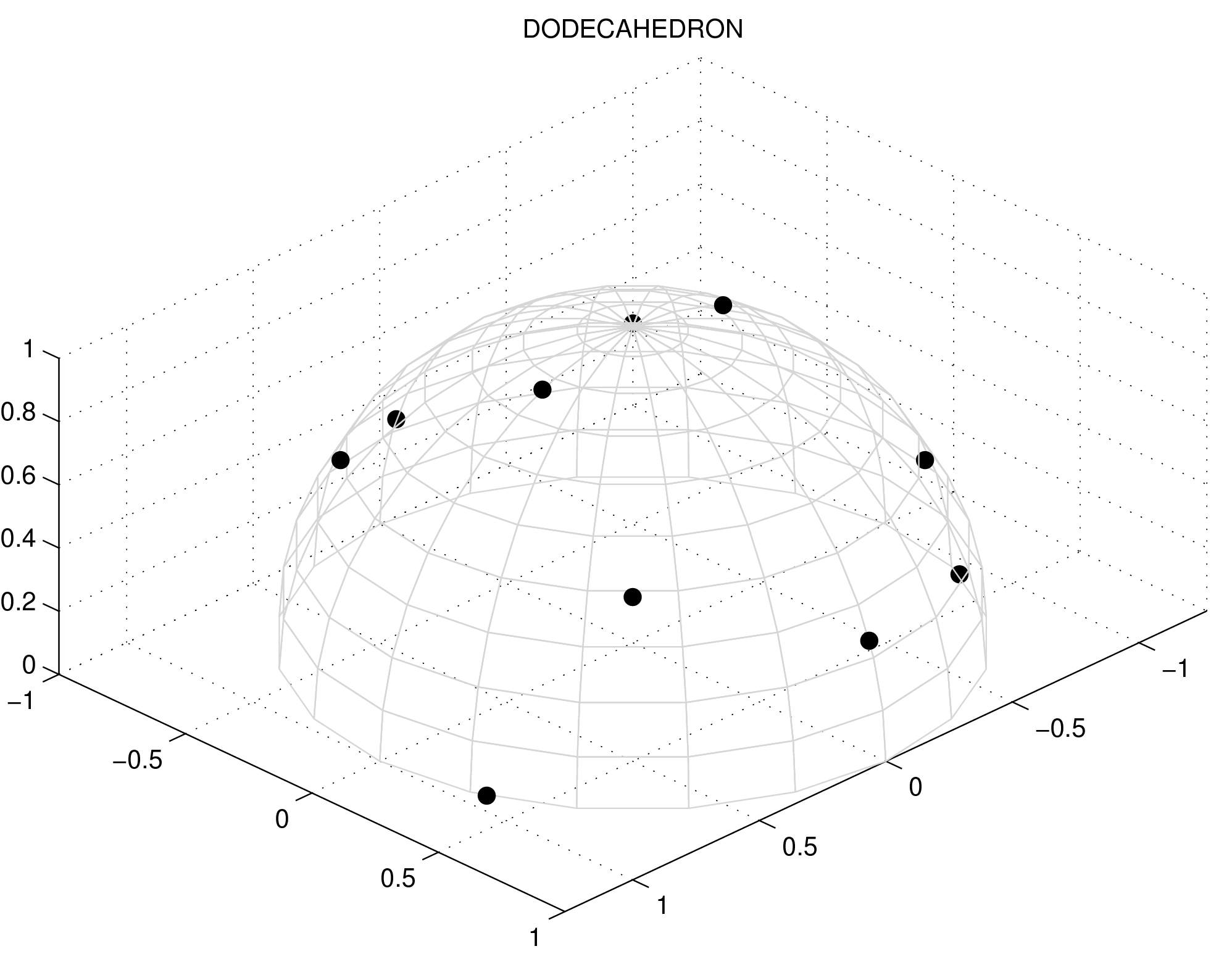}
\end{figure}
 \end{minipage}
\caption{Gradient design based on the dodecahedron with 10  gradients.
}
 \label{fig:gradients:dodecahedron}
 \end{figure}


\begin{figure}[]
\centering
\includegraphics[width=2.5in]{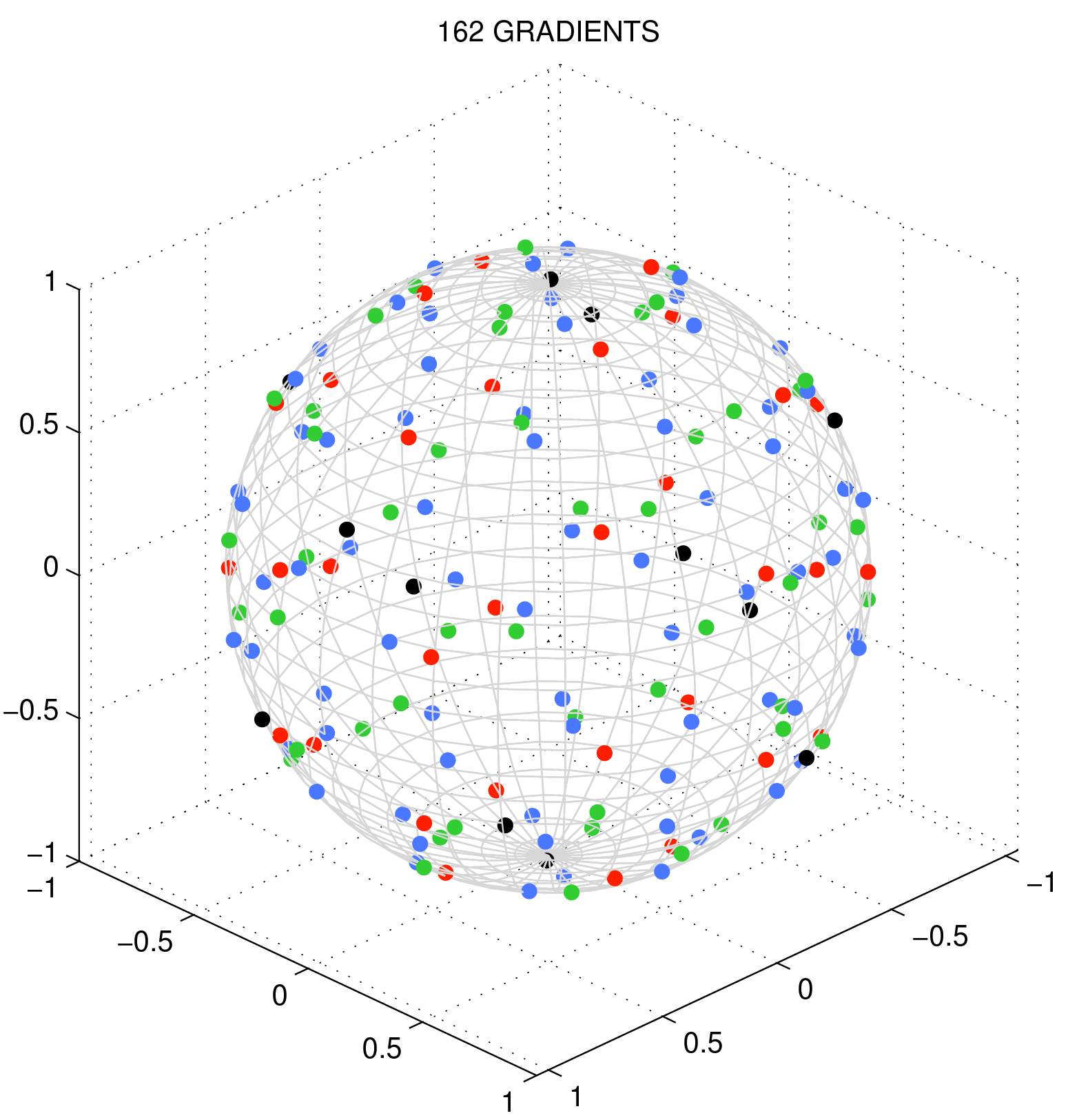}
\caption{Gradient sequence based on combined antipodal spherical $t$-designs of orders 5 (black),7 (red),9 (green)  and 11 (blue),
of respective sizes 12,32,48 and 70. The spherical $t$-designs on different shells were
 rotated in order to maximize the minimal geodesic distance \eqref{geodesic:dist:critera} between gradients. 
  }
 \label{fig:gradients}
\end{figure}

\begin{figure}[!tbp] 
~
\begin{subfigure}[b]{0.4\textwidth}
\centering
\includegraphics[width=2.0in]{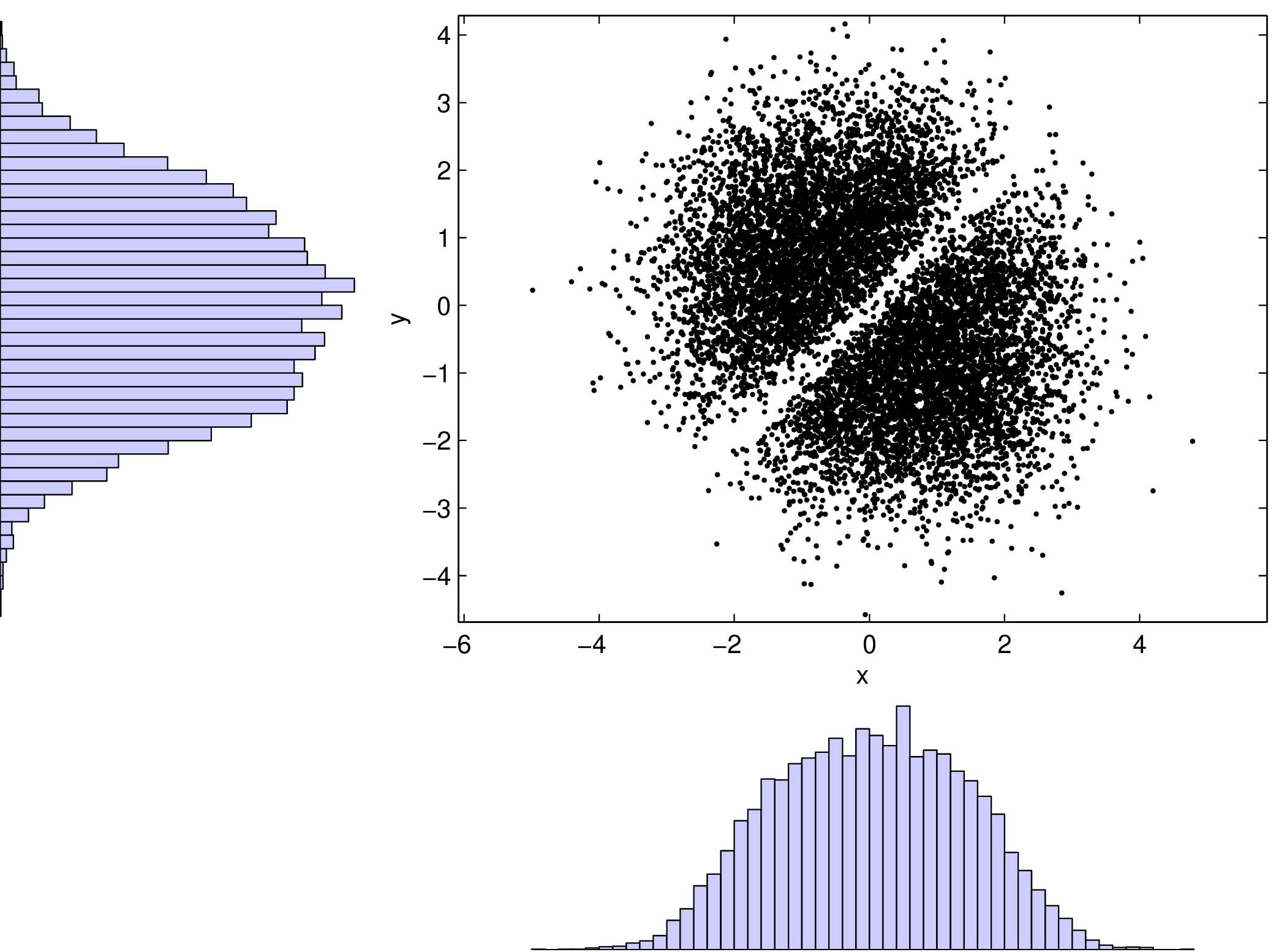}
\caption{ \footnotesize  $\{ \gamma_i,\gamma_j\},1\le i \ne j\le 3$, 
$\bar\gamma_i=0$.}\label{fig1:GOE3}
\end{subfigure}
~
\begin{subfigure}[b]{0.4\textwidth}
\centering
\includegraphics[width=2.0in]{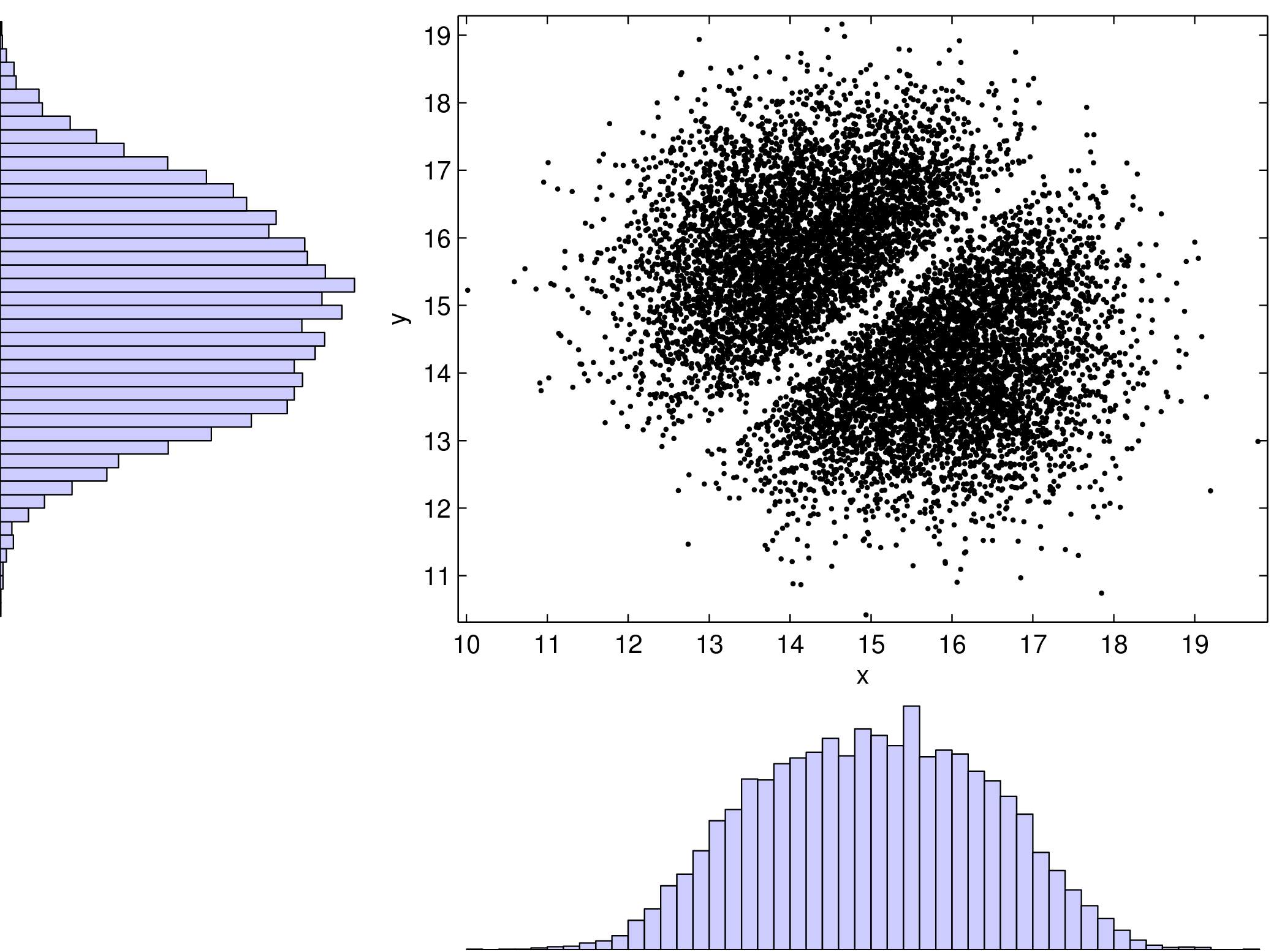}
\caption{
 \footnotesize
$\{ \gamma_i,\gamma_j\},1\le i \ne j\le 3$, 
$\bar\gamma_1=\bar\gamma_2=\bar\gamma_3=15$.
}\label{fig1:positive:totally:symmetric}
\end{subfigure}
~

\begin{subfigure}[b]{0.4\textwidth}
\centering
\includegraphics[width=2.0in]{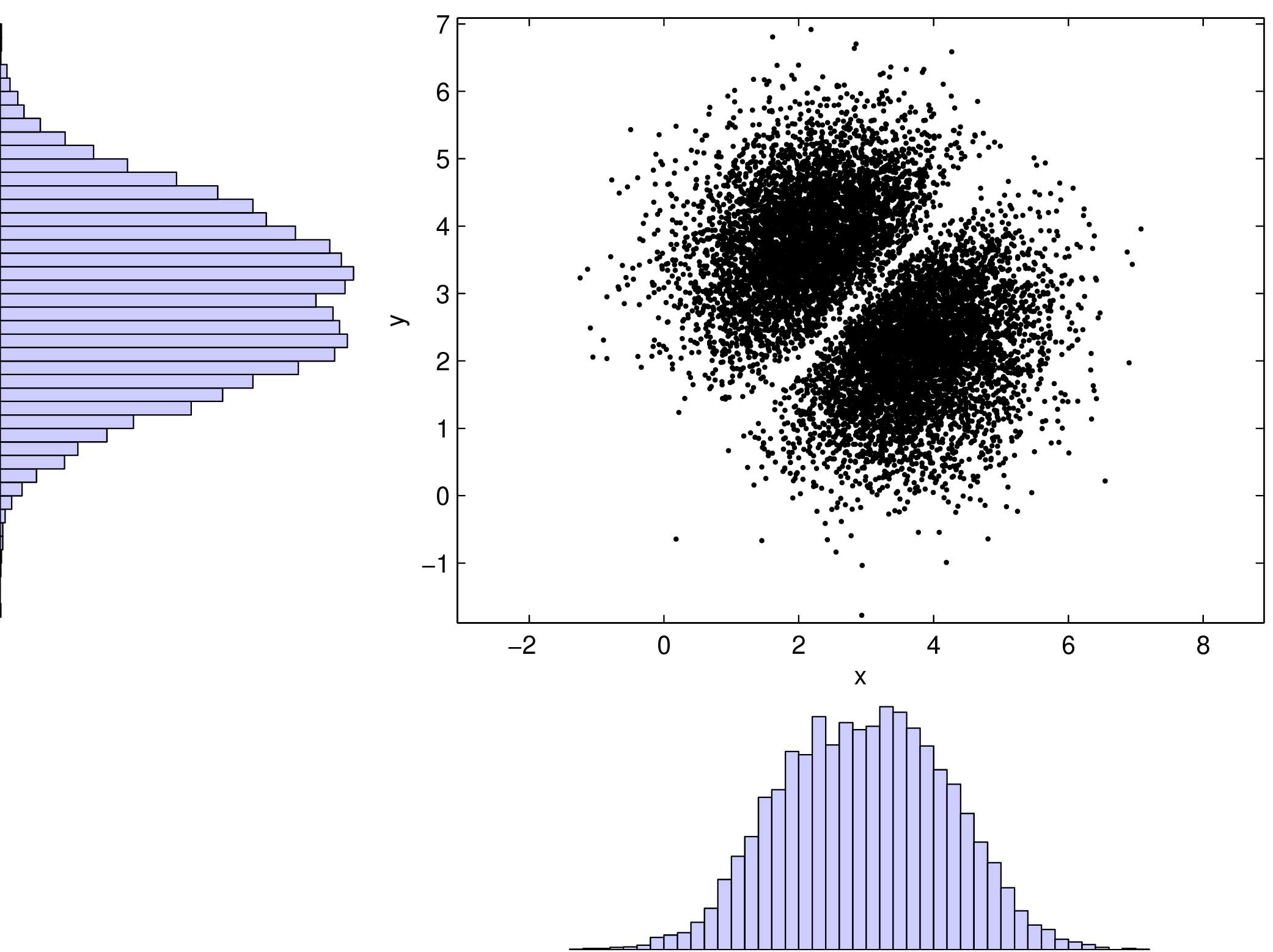}
\caption{
\footnotesize
$\{ \gamma_2,\gamma_3\}$,
 $\bar\gamma_1=15 > \bar \gamma_2 = \bar\gamma_3 =3$ 
.}\label{fig1:positive:prolate}
\end{subfigure}
~
\begin{subfigure}[b]{0.4\textwidth}
\centering
\includegraphics[width=2.0in]{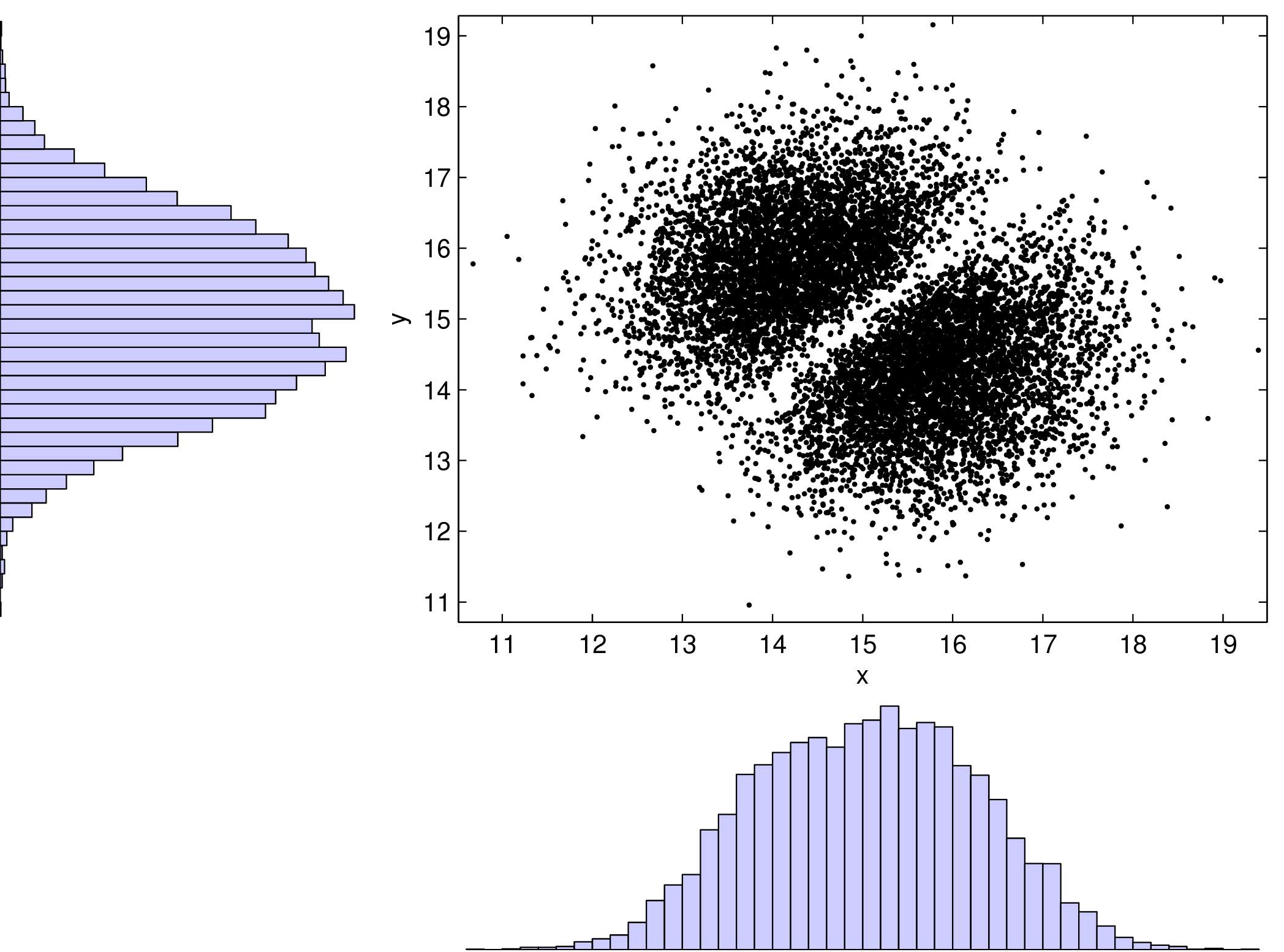}
\caption{
\footnotesize
$\{ \gamma_1,\gamma_2\}$, 
$\bar\gamma_1=15=\bar \gamma_2 =15 > \bar\gamma_3 =3$
.}\label{fig1:positive:oblate}
\end{subfigure}

~
\begin{subfigure}[b]{0.4\textwidth}
\centering
\includegraphics[width=2.0in]{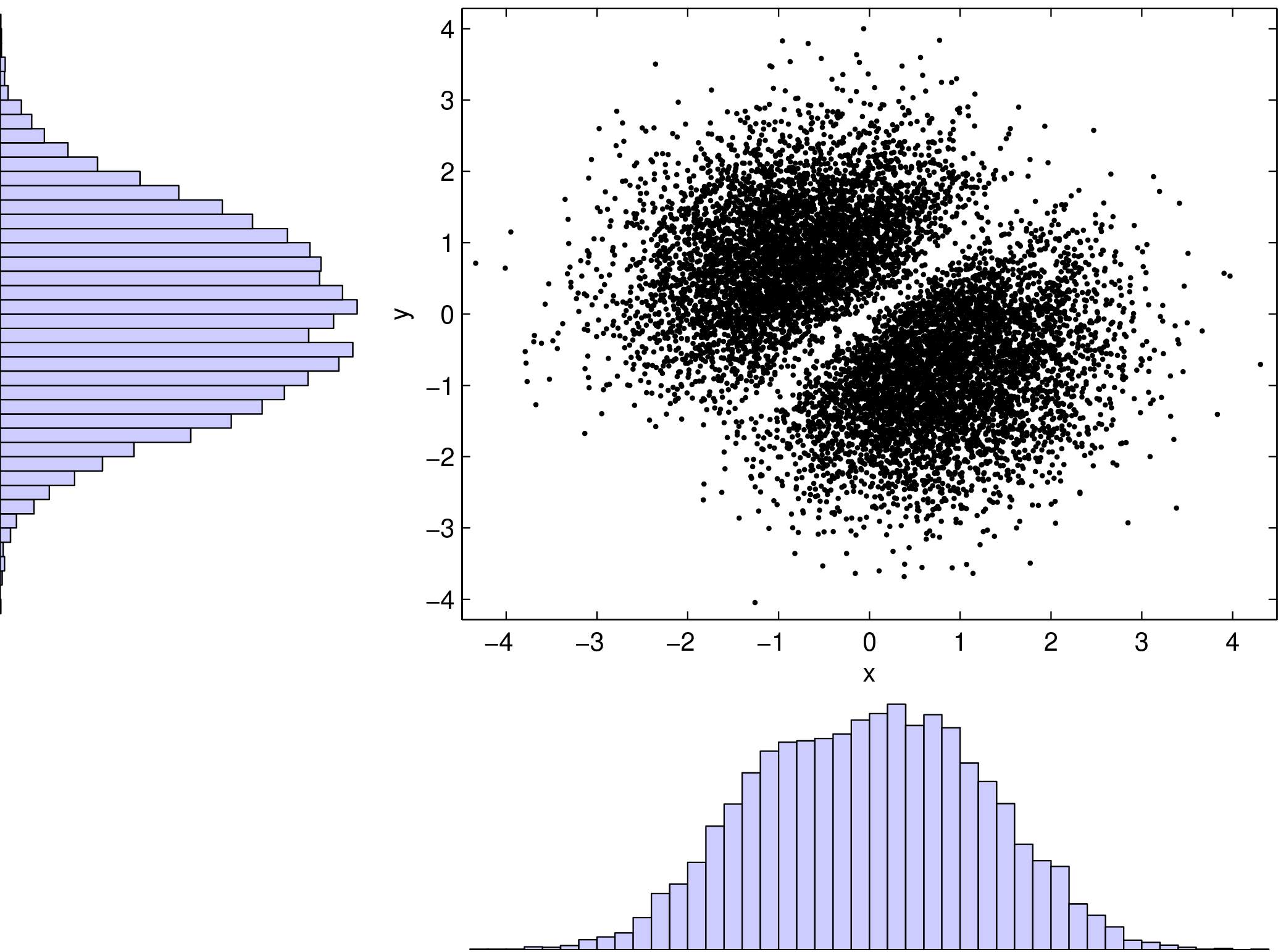}
\caption{  \footnotesize $2\times 2$-GOE eigenvalues}
\end{subfigure}
~
\begin{subfigure}[b]{0.4\textwidth}
\centering
\includegraphics[width=2.0in]{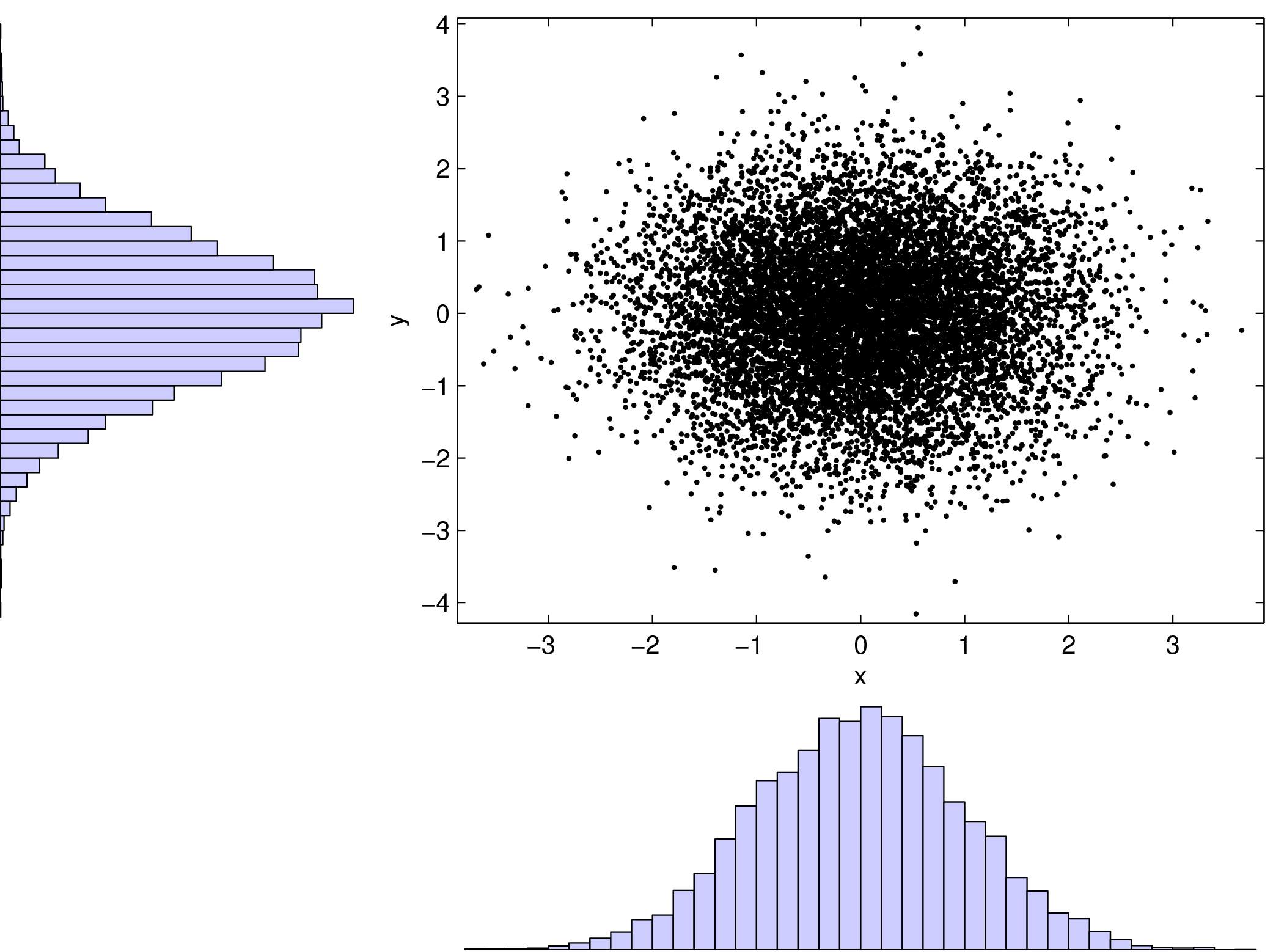}
\caption{  \footnotesize i.i.d. standard Gaussian pairs }
\end{subfigure}
\caption{ 
$10000$  pairs of distinct eigenvalues of i.i.d. symmetric random matrices
with isotropic Gaussian noise ($\mu=1/2$, $\lambda=0$) with various  mean:
zero, corresponding to the $3\times 3$-GOE (a),
 isotropic (b), 
 prolate (c), 
 oblate (d). 
 For comparison we show  i.i.d. $2\times 2$-GOE eigenvalue pairs (e), and  i.i.d. standard Gaussian pairs (f). 
 Within each pair the ordering is randomized, to emphasize
 the repulsion effect around the diagonal.
}\label{fig1} 
\end{figure} 
\begin{figure}[!t]
\centering
\includegraphics[width=2.5in]{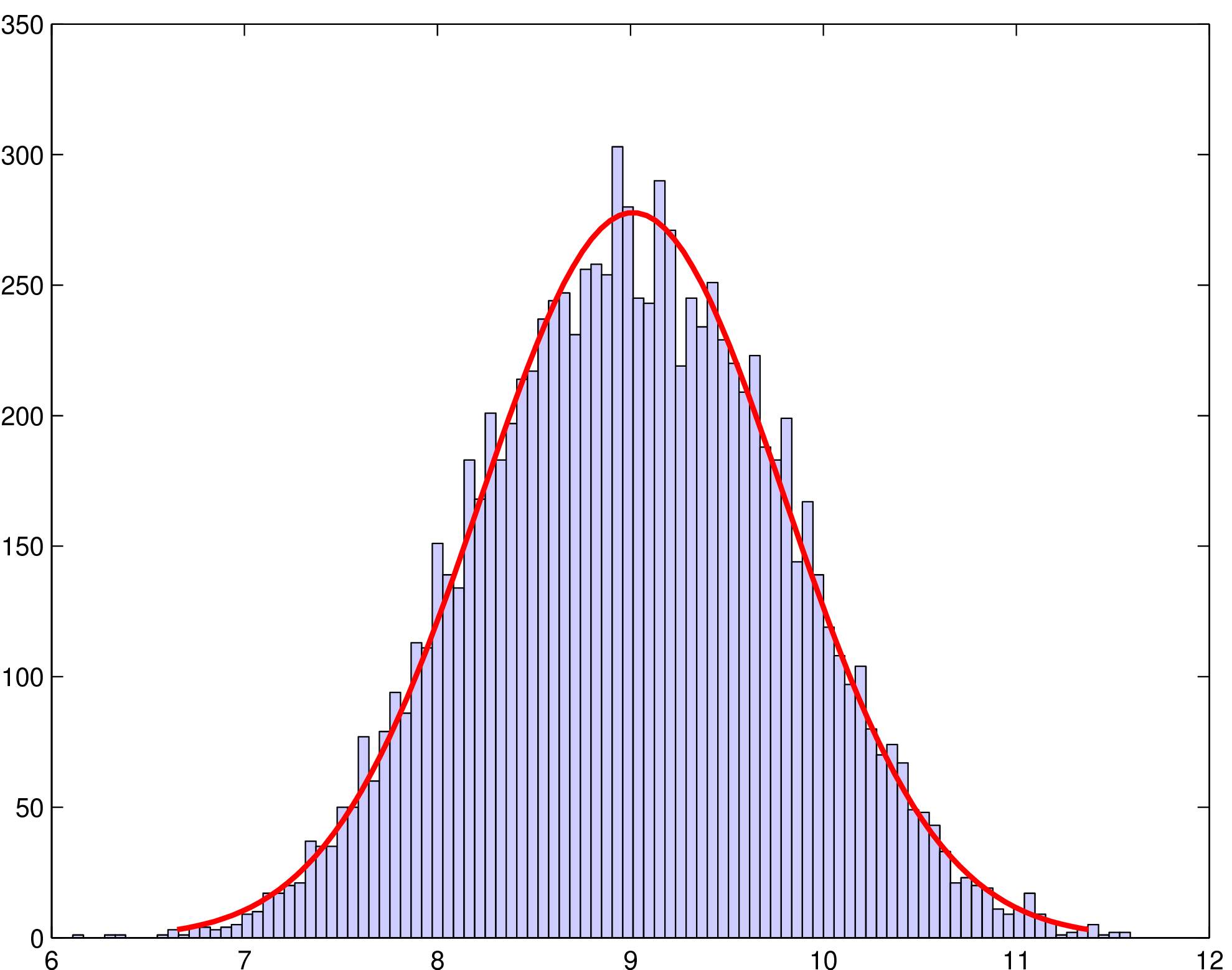}
\caption{Histogram and fitted Gaussian curve from 10000 i.i.d. realizations of the cluster barycenter 
$(\gamma_2+\gamma_3)/2$, in the  prolate  mean tensor case.}\label{fig3}
\end{figure}
\begin{figure}[!tbp]
 \includegraphics[width=1.35\textwidth,height=5.3in,left]{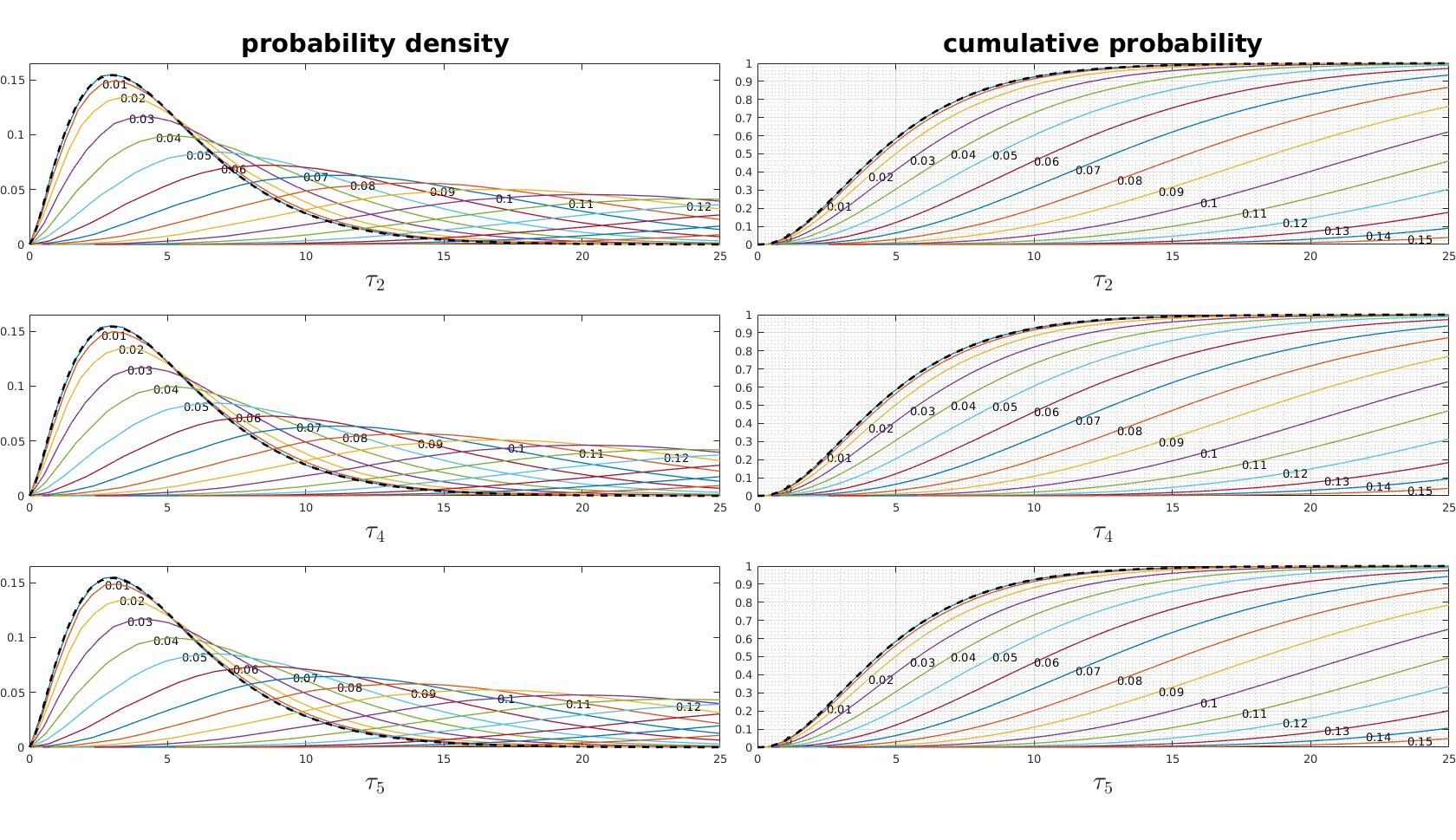}
 \caption{ Probability densities (left) and cumulative probabilities (right) of the sphericity test statistics $\tau_2(D),\tau_4(D),\tau_5(D)$,
 where the $3\times 3$ symmetric random matrix $D$ is  Gaussian with isotropic precision 
 $A(2,2)$,  and  there are $16$ alternative mean tensors $\bar D$, 
  with fixed mean diffusivity $\kappa_1(\bar D)=15$.
Under the null hypothesis $\bar D$ is spherical, while the alternatives correspond to prolate mean tensors with FA in $(0.0,0.15]$.
For each test statistics, the probability density and cumulative probability curves are 
labeled by the FA values of the corresponding mean tensors.
 The broken curves display the  $\chi^2_5$ limit distribution 
 under the null hypothesis. 
  }
 \label{fig:power:comparison}
\end{figure}
\begin{figure}[!tbp]
\begin{subfigure}[b]{0.4\textwidth}
\includegraphics[width=2in]{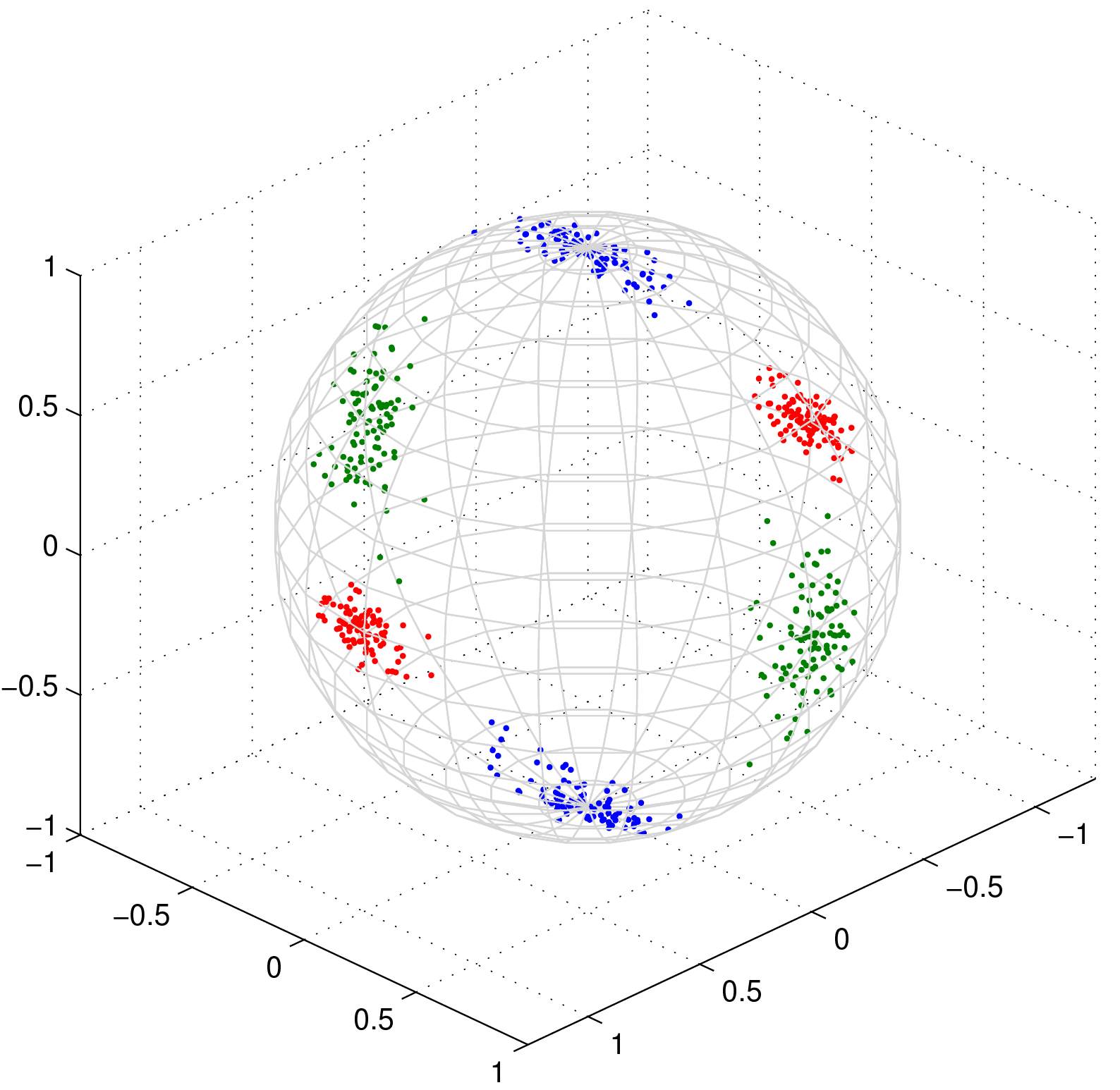}
\caption{  
$\bar\gamma_1=15,\bar\gamma_2=7.5, \bar\gamma_3=3$. }\label{fig:eigenvector:asymmetric}
\end{subfigure}
~
\begin{subfigure}[b]{0.4\textwidth}
\includegraphics[width=2in]{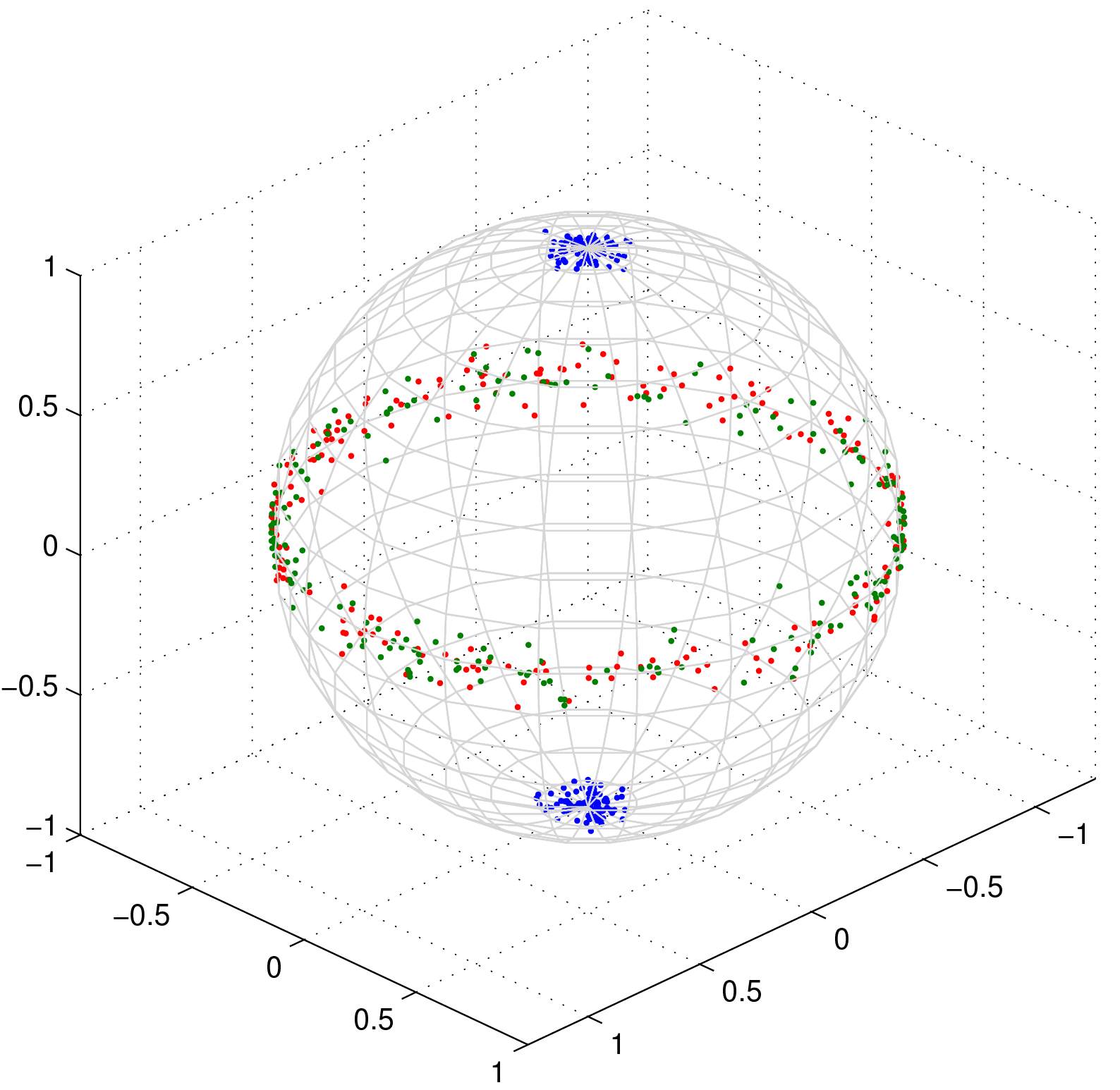}
\caption{ 
$\bar\gamma_1=\bar\gamma_2=15, \bar\gamma_3=3$. }\label{fig:eigenvector:oblate}
\end{subfigure}
\label{fig4}
\caption{200 i.i.d. orthonormal eigenvector 
triples from the Gaussian model with isotropic noise parameters $\mu=1/2,\lambda=0$,
with totally asymmetric (left) and  oblate (right) diagonal mean tensor,
using a similar graphical  construction as the one introduced in \cite{basser:artefact}. 
}
\end{figure}
%
%

\begin{figure}[!tbp]
\caption*{Design 1, sphericity   statistics. }
~
\begin{subfigure}[b]{0.9\textwidth}
\centering
\includegraphics[width=4.5in, height=2.5in]{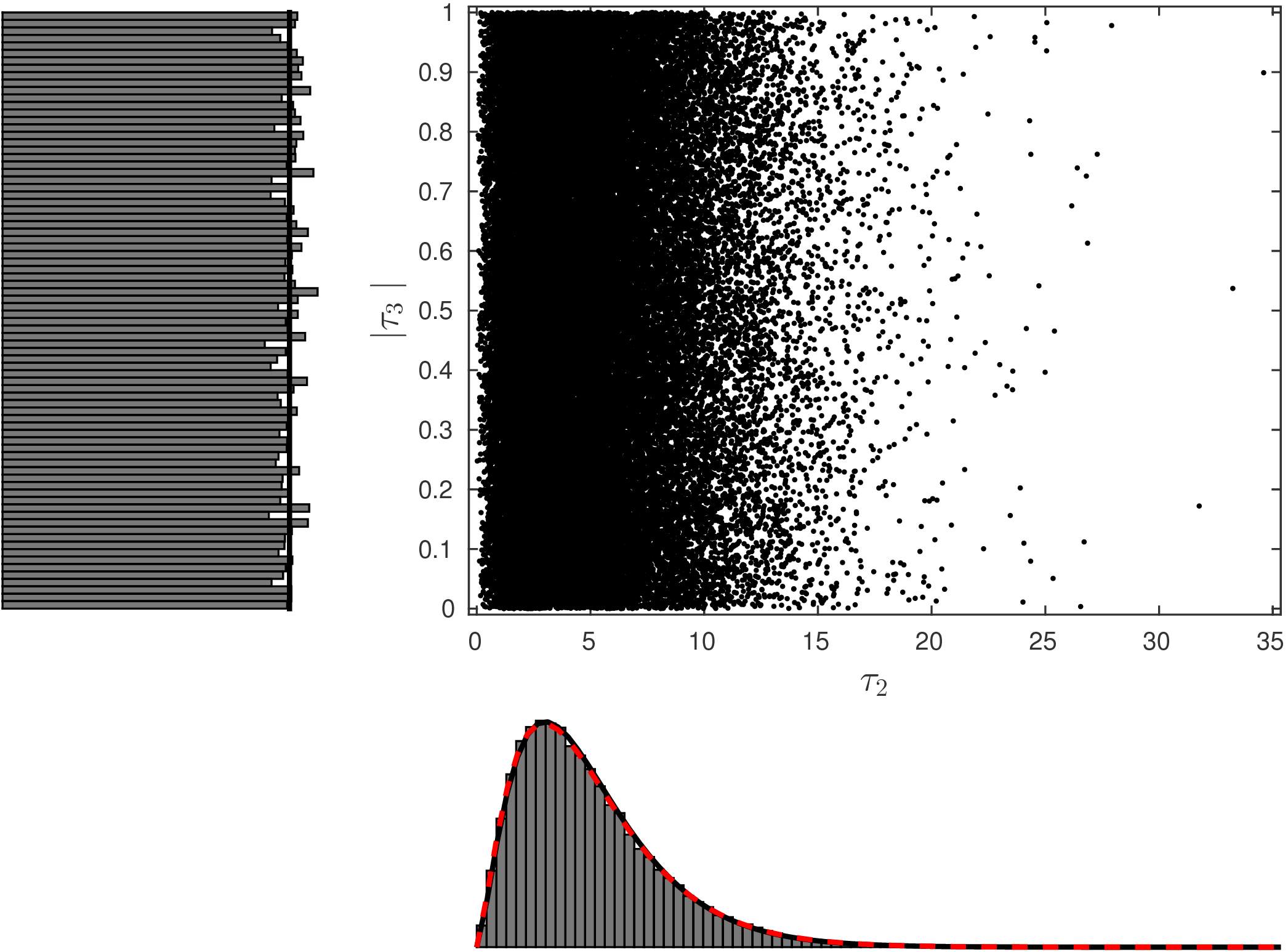}
\caption{}\label{fig:sphericity:test:14a}
\end{subfigure}
~
 
 \begin{subfigure}[b]{0.9\textwidth}
\centering
\includegraphics[width=4.5in, height=2.5in]{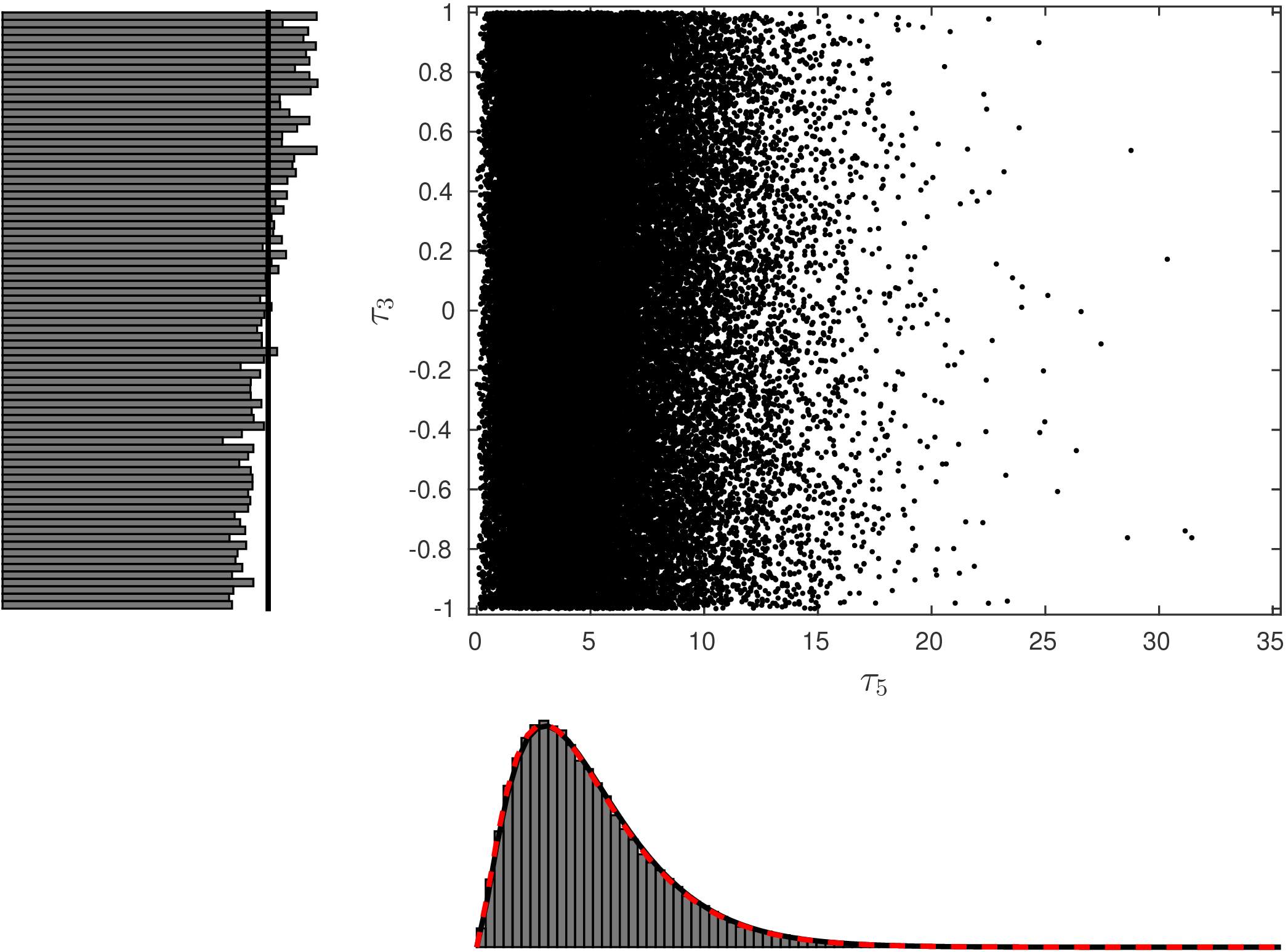}
\caption{ }\label{fig:sphericity:test:14b}
\end{subfigure}

\caption{Scatterplot of the eigenvalue statistics    $(\tau_2^{(n)},\vert\tau_3^{(n)}\vert)  $
in (a) and $(\tau_5^{(n)},\tau_3^{(n)})  $ in (b), from
 a Monte Carlo study
 based on $N=50000$  replications of  a dataset generated under Design 1, where the true tensor and the  Fisher information
 are isotropic. The histogram density estimators are compared with  theoretical limit  densities (black continuous curves), which are
 uniform on the vertical axes and  $\chi^2_5$ on the horizontal axes. The best fitting gamma densities
 (red broken curves) are also shown,  with shape parameter $2.4238$ and scale parameter $2.0627$
 in (a) and 
 with shape parameter  $2.4566$ and scale parameter $2.0137$ in (b).
}\label{fig:sphericity:test:14}
\end{figure}
   \begin{figure}[!tbp]
\caption*{ Design 2, sphericity   statistics. }
~
\begin{subfigure}[b]{0.9\textwidth}
\centering
\includegraphics[width=4.5in, height=2.5in]{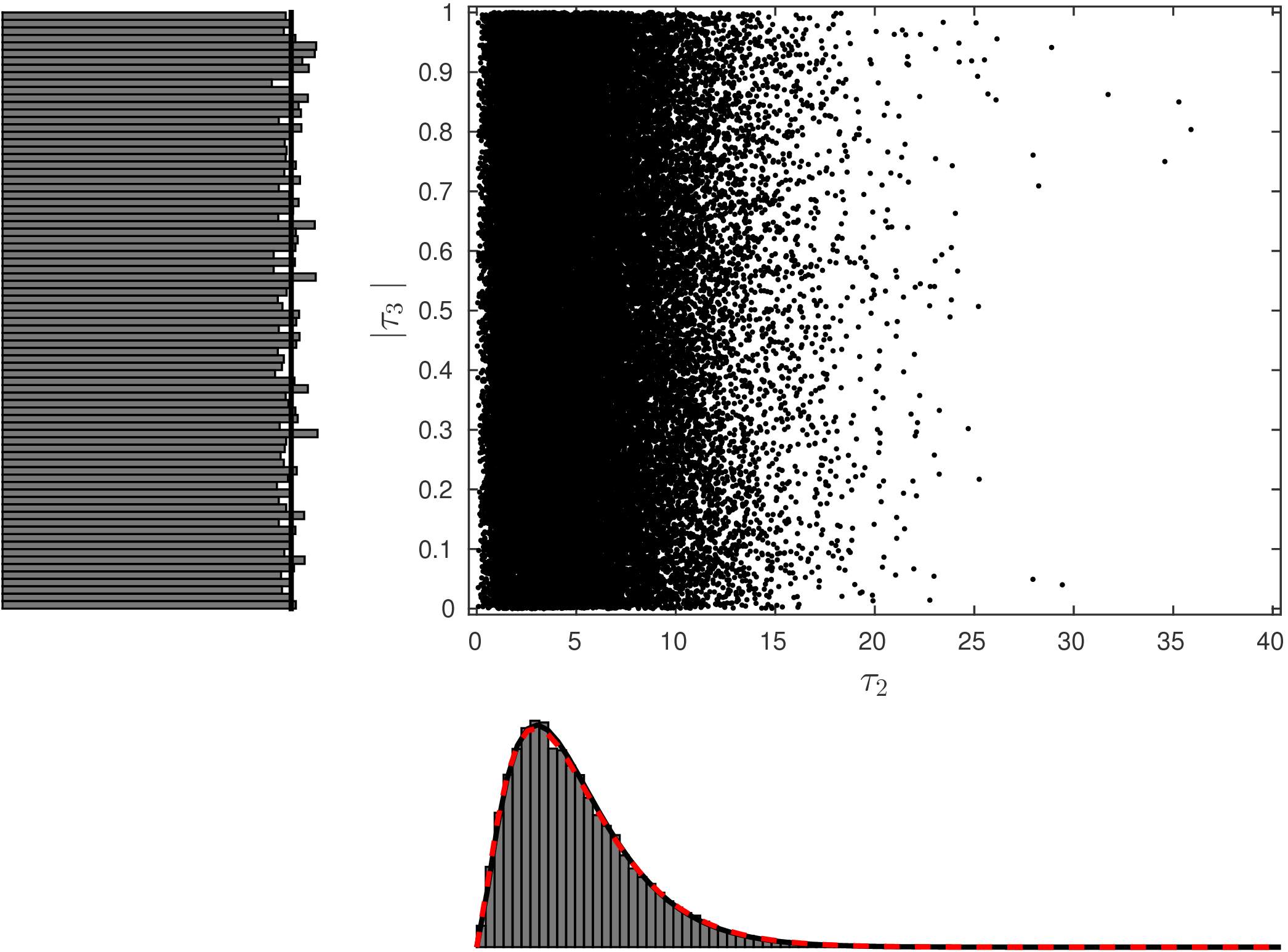}
\caption{}\label{fig:sphericity:test:6a}
\end{subfigure}
~
 
 \begin{subfigure}[b]{0.9\textwidth}
\centering
\includegraphics[width=4.5in, height=2.5in]{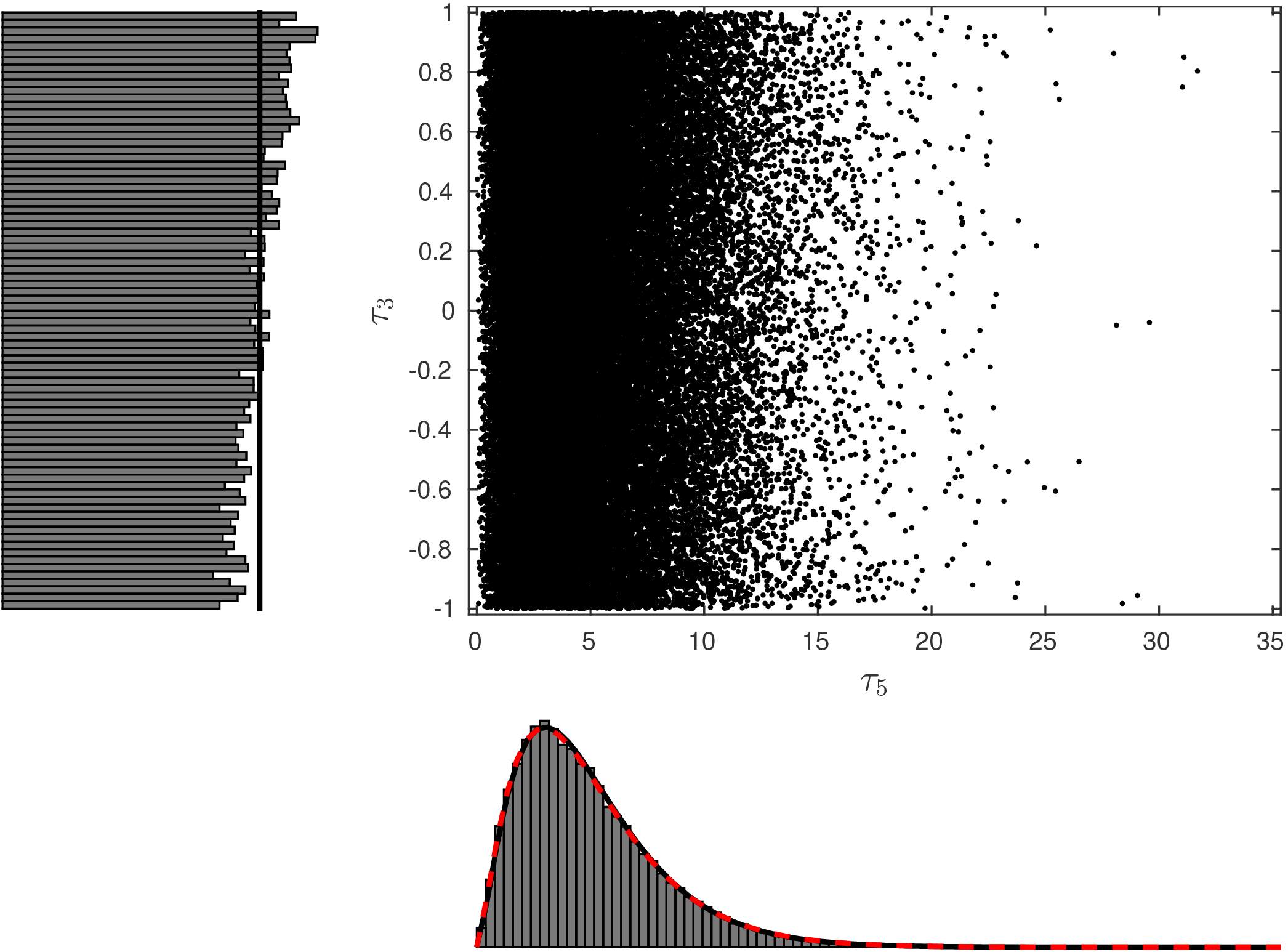}
\caption{ }\label{fig:sphericity:test:6b}
\end{subfigure}
\caption{Scatterplot of the eigenvalue statistics    $(\tau_2^{(n)},\vert\tau_3^{(n)}\vert)  $
in (a) and $(\tau_5^{(n)},\tau_3^{(n)})  $ in (b), from
 a Monte Carlo study
 based on $N=50000$  replications of  a dataset generated under Design 2, where the true tensor and the Fisher information are isotropic.
 The histogram density estimators are compared with  theoretical limit  densities (black continuous curves), which are
 uniform on the vertical axes and  $\chi^2_5$ on the horizontal axes. The best fitting gamma densities
 (red broken curves) are also shown,  with shape parameter
  $2.4103$ and scale parameter $2.0842$
 in (a) and 
 with shape parameter $2.4315$ and scale parameter $2.0542 $  in (b).
}\label{fig:sphericity:test:6}
\end{figure}

\begin{figure}[!tbp]
\caption*{  Design 3,  sphericity  statistics.  }
~
\begin{subfigure}[b]{0.9\textwidth}
\centering
\includegraphics[width=4.5in, height=2.5in]{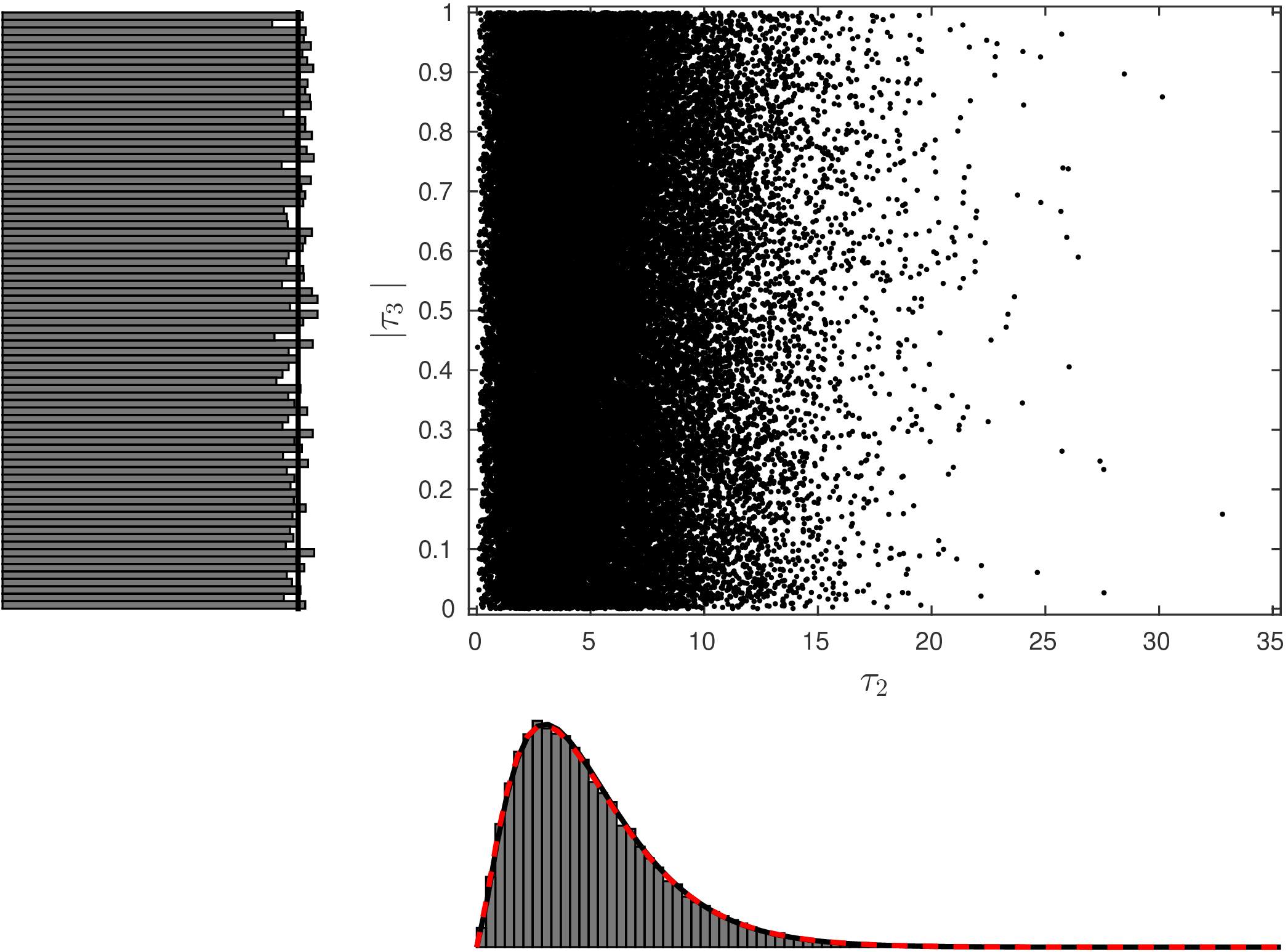}
\caption{}\label{fig:sphericity:test:10a}
\end{subfigure}
~
 
 \begin{subfigure}[b]{0.9\textwidth}
\centering
\includegraphics[width=4.5in, height=2.5in]{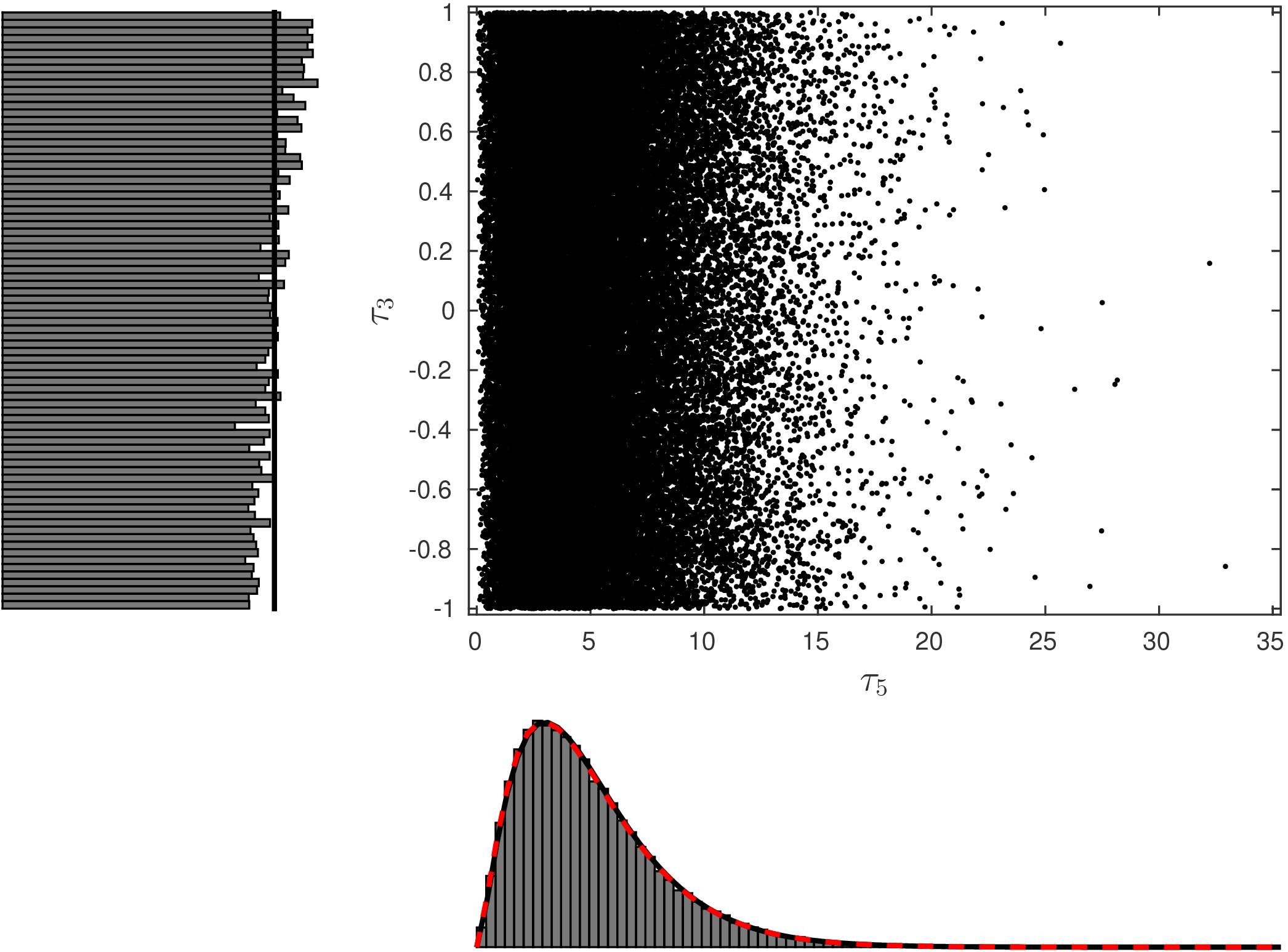}
\caption{ }\label{fig:sphericity:test:10b}
\end{subfigure}
\caption{Scatterplot of the eigenvalue statistics    $(\tau_2^{(n)},\vert\tau_3^{(n)}\vert)  $
in (a) and $(\tau_5^{(n)},\tau_3^{(n)})  $ in (b), from
 a Monte Carlo study
 based on $N=50000$  replications of  a dataset generated under Design 3, where the true tensor and the
 Fisher information are isotropic.
 The histogram density estimators are compared with  theoretical limit  densities (black continuous curves), which are
 uniform on the vertical axes and  $\chi^2_5$ on the horizontal axes. The best fitting gamma densities
 (red broken curves) are also shown,  with shape parameter $2.4405$ and scale parameter $2.0467$
 in (a) and 
 with shape parameter $2.4526$ and scale parameter $2.0298 $ in (b).
}\label{fig:sphericity:test:10}
\end{figure}
   
\begin{figure}[!tbp]
\caption*{Design 4, sphericity   statistics.  }
~
\begin{subfigure}[b]{0.9\textwidth}
\centering
\includegraphics[width=4.5in, height=2.5in]{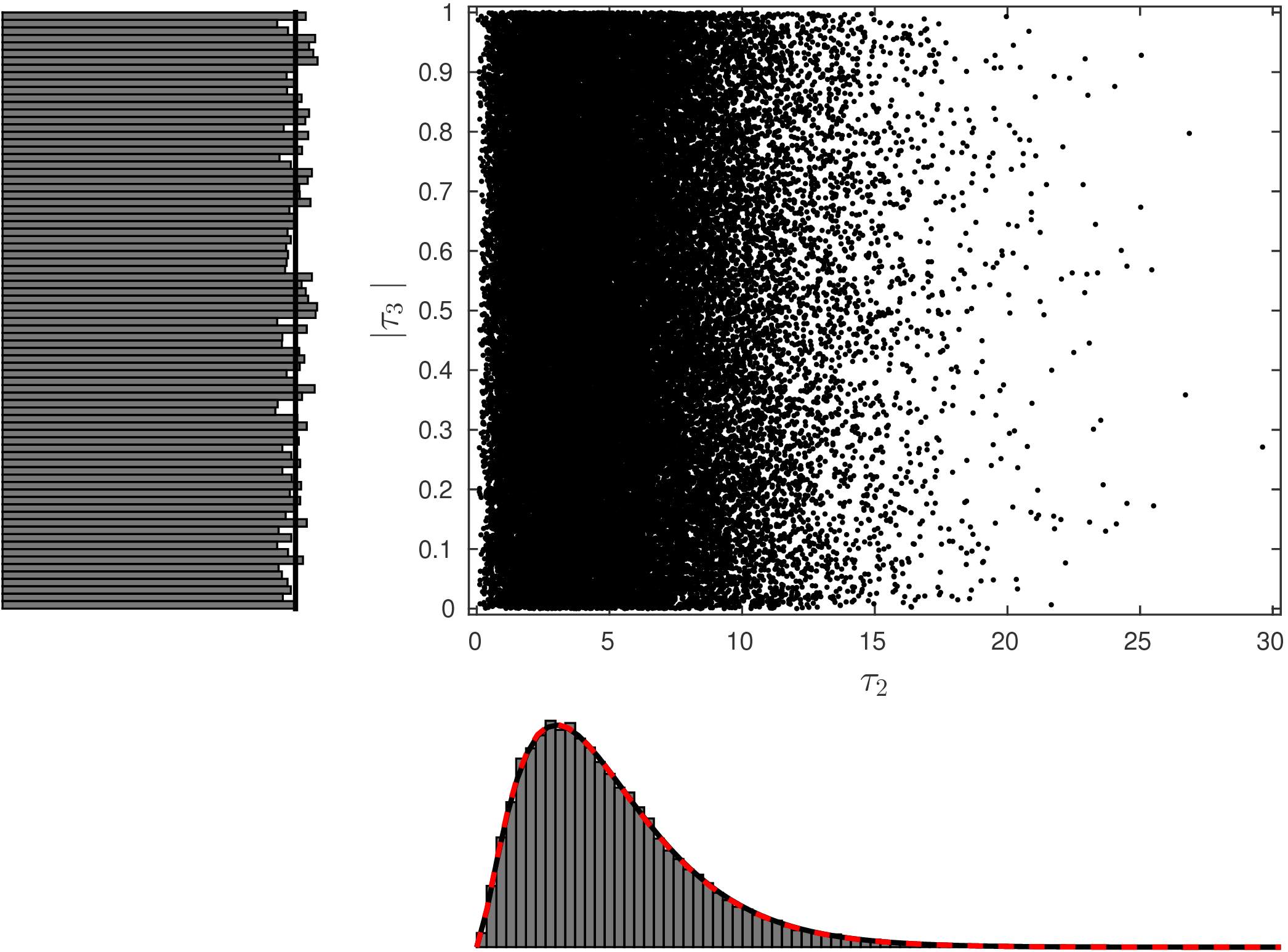}
\caption{}\label{fig:sphericity:test:162a}
\end{subfigure}
~
 
 \begin{subfigure}[b]{0.9\textwidth}
\centering
\includegraphics[width=4.5in, height=2.5in]{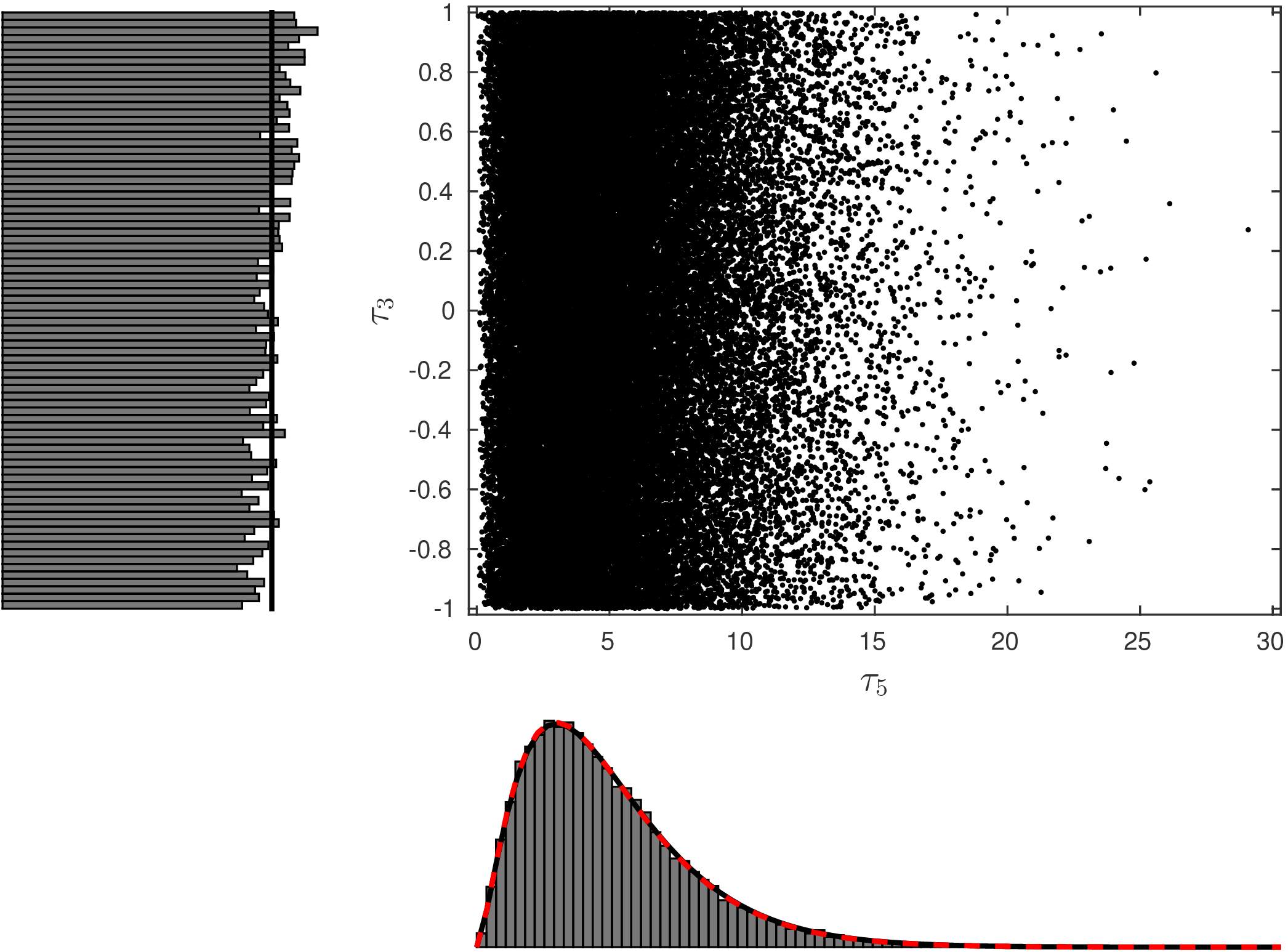}
\caption{ }\label{fig:sphericity:test:162b}
\end{subfigure}
\caption{Scatterplot of the eigenvalue statistics    $(\tau_2^{(n)},\vert\tau_3^{(n)}\vert)  $
in (a) and $(\tau_5^{(n)},\tau_3^{(n)})  $ in (b), from
 a Monte Carlo study
 based on $N=50000$  replications of  a dataset generated under Design 4, where the true tensor and the Fisher information are isotropic.
 The histogram density estimators are compared with  theoretical limit  densities (black continuous curves), which are
 uniform on the vertical axes and  $\chi^2_5$ on the horizontal axes. The best fitting gamma densities
 (red broken curves) are also shown,  with shape parameter $2.4924$ and scale parameter $1.9986 $
 in (a) and 
 with shape parameter  $2.4993$ and scale parameter $1.9896 $ in (b).
}\label{fig:sphericity:test:162}
\end{figure}
   
\begin{figure}[!tbp]
\caption*{ Design 5,  sphericity  statistics. }
~
\begin{subfigure}[b]{0.9\textwidth}
\centering
\includegraphics[width=4.5in, height=2.5in]{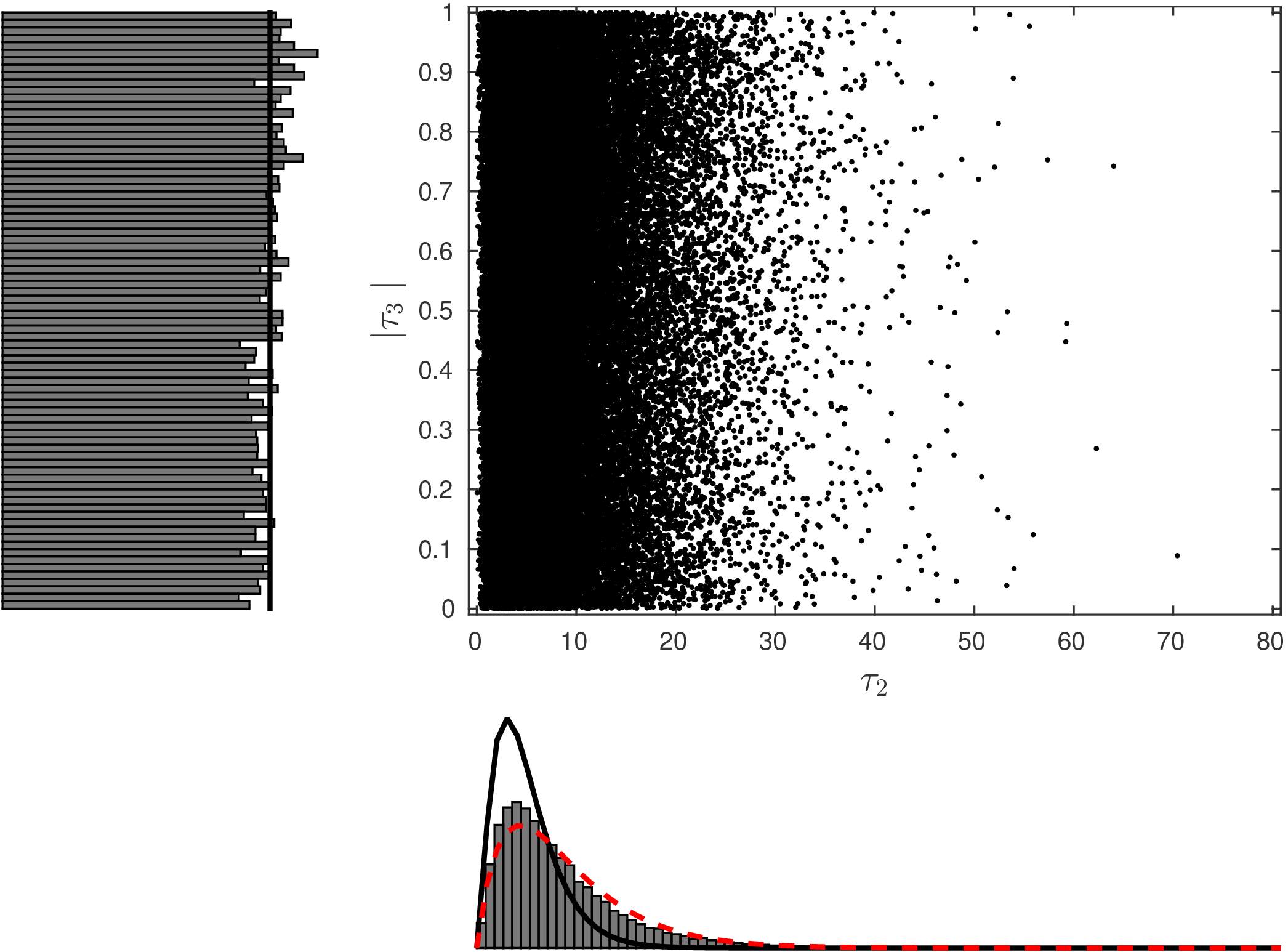}
\caption{}\label{fig:sphericity:test:32x3a}
\end{subfigure}
~
 
 \begin{subfigure}[b]{0.9\textwidth}
\centering
\includegraphics[width=4.5in, height=2.5in]{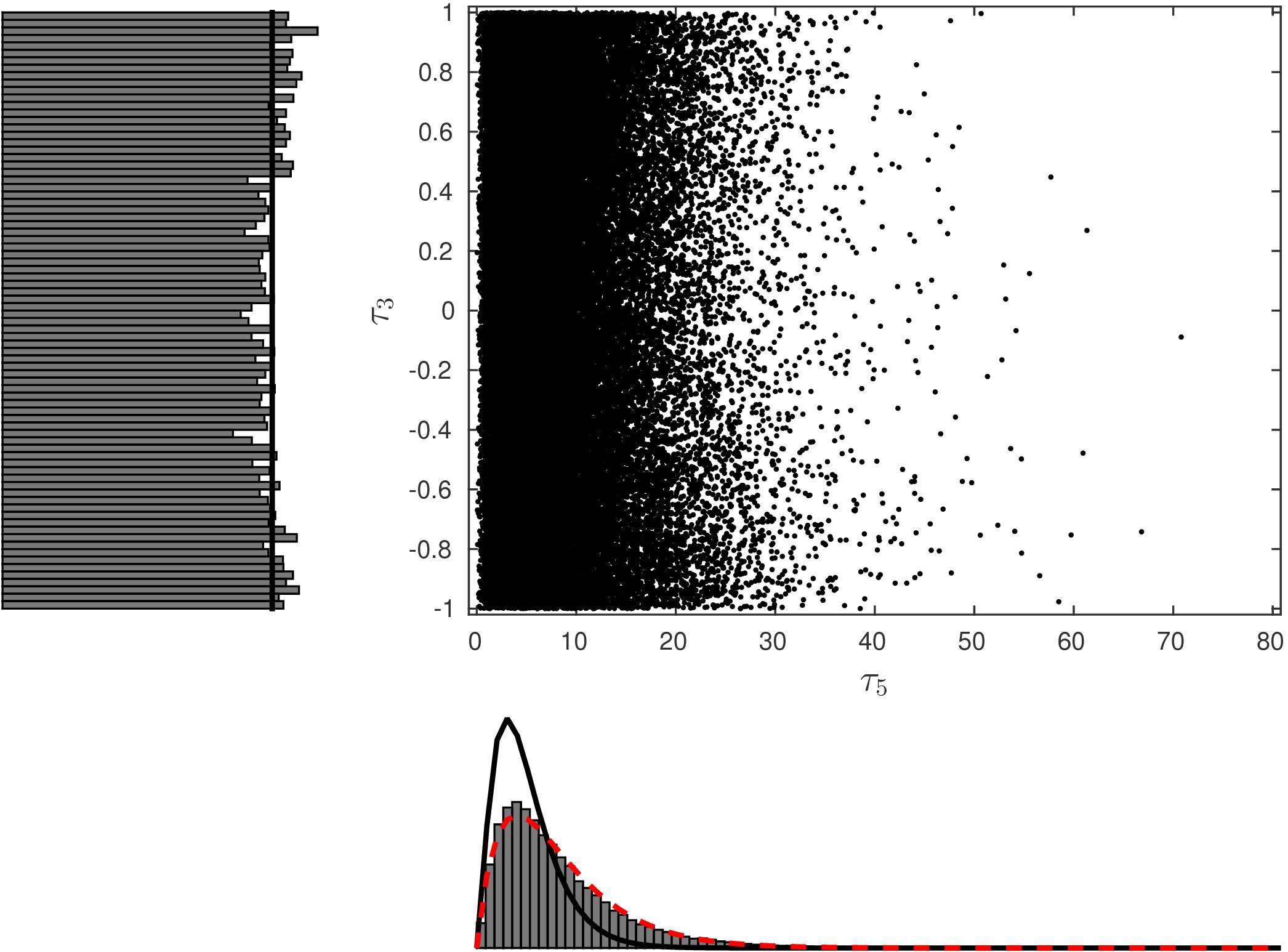}
\caption{ }\label{fig:sphericity:test:32x3b}
\end{subfigure}
\caption{Scatterplot of the eigenvalue statistics    $(\tau_2^{(n)},\vert\tau_3^{(n)}\vert)  $
in (a) and $(\tau_5^{(n)},\tau_3^{(n)})  $ in (b), from
 a Monte Carlo study
 based on $N=50000$  replications of  a dataset generated under Design 5, with isotropic true tensor and  anisotropic Fisher information.
 The histogram density estimators are compared with  theoretical limit  densities (black continuous curves), which are
 uniform on the vertical axes and  $\chi^2_5$ on the horizontal axes. The best fitting gamma densities
 (red broken curves) are also shown,  with shape parameter $1.9576$ and scale parameter $4.5494$ in (a) and 
 with shape parameter  $1.9565$ and scale parameter $4.2094$ in (b).
}\label{fig:sphericity:test:32x3}
\end{figure} 
   
\begin{figure}[]
\centering
\begin{minipage}[r]{\hsize}
\centering
\vspace{0pt}
{\footnotesize
\begin{tabular}[width=\textwidth]{ |c c c | c c c| } 
\hline
     $u_x$ &$u_y$ & $u_z$  &   $u_x$ &$u_y$ & $u_z$  \\ \hline 
   -0.5000  & -0.7071  &    0.5000     &      0.7071  &   -0.5261   &   0.4725    \\
   -0.5000  &  0.7071  &    0.5000     &     -0.7071  &   -0.0002   &   0.7071    \\
    0.7071  & -0.0000  &    0.7071     &     -0.7071  &    0.5261   &   0.4725    \\
   -0.6533  & -0.7071  &    0.2706     &      0.7071  &    0.5261   &   0.4725    \\
   -0.2087  & -0.7071  &    0.6756     &      0.4725  &    0.5261   &   0.7071    \\
    0.0197  & -0.7071  &    0.7068     &     -0.7071  &    0.0078   &   0.7071    \\
    0.4212  & -0.7071  &    0.5679     &     -0.6364  &    0.6436   &   0.4252    \\
    0.6899  & -0.7071  &    0.1549     &     -0.7060  &    0.0547   &   0.7060    \\
   -0.6535  & -0.7069  &    0.2707     &     -0.2929  &    0.6436   &   0.7071    \\
   -0.2929  & -0.6436  &    0.7071     &      0.2929  &    0.6436   &   0.7071    \\
    0.2945  & -0.6436  &    0.7064     &      0.7071  &    0.0078   &   0.7071    \\
    0.5150  & -0.7061  &    0.4861     &      0.7071  &    0.6436   &   0.2929    \\
    0.7071  & -0.6436  &    0.2929     &     -0.7063  &    0.0489   &   0.7063    \\
   -0.7071  & -0.5261  &    0.4725     &      0.0347  &    0.7071   &   0.7063    \\
   -0.4725  & -0.5261  &    0.7071     &      0.7071  &    0.0115   &   0.7071    \\
    0.5555  & -0.5261  &    0.6439     &      0.7071  &    0.7071   &   0.0000    \\
    \hline
\end{tabular}
}
\end{minipage}
\hfill
\begin{minipage}[l]{0.60\hsize}
\centering
\vspace{0pt}
\begin{figure}[H]
\centering
\includegraphics[width=\textwidth]
{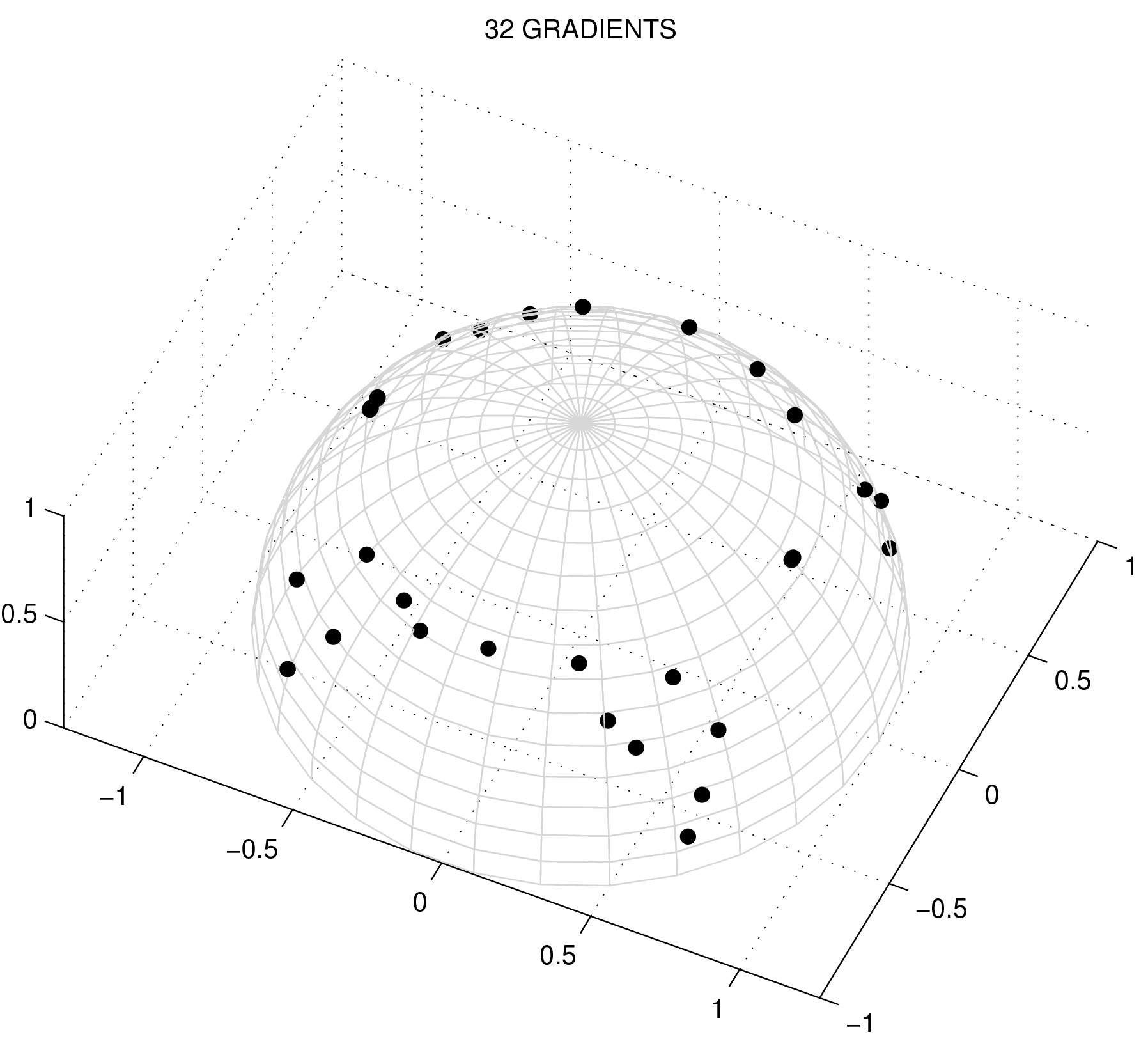}
\end{figure}
 \end{minipage}
 \caption{  The $32$ gradients table  used by default with the commercial 3T  Philips Achieva MR-scanner. }\label{gradient:table}
 \end{figure}
\FloatBarrier
\newpage
\beginsupplement
\begin{center}
{\bf Supplementary Materials } 

{\bf Title: Eigenvalues of random  matrices with isotropic Gaussian noise and the design of  Diffusion Tensor Imaging experiments. }

{ Authors: Dario Gasbarra, Sinisa Pajevic, Peter J. Basser.}

{ Contents: Appendix.}
\end{center}
\section{APPENDIX}\label{appendix:a}
 \subsection{Proof of Theorems \ref{spectral:clustering},\ref{eigenvector:asymptotics},\ref{HCIZ-asymptotics}}
 We follow the line of proof of Theorem 3.1 in \cite{chattopadhyay}, (see also \cite{muirhead}), which deals with the
 eigenvalues of rescaled  Wishart random matrices with growing degrees of freedom, and we   generalize it  
to the case of random matrices with asymptotically isotropic Gaussian noise, without the positivity assumption. 
By Sch\'effe's theorem (see \cite{van_der_vaart}), to prove convergence in distribution it is enough to show pointwise almost sure convergence
            of the densities to a probability density, and we achieve that
          by   using Laplace approximation.
Note first that when the mean matrix $\bar D$  is isotropic with 
 equal eigenvalues $\bar \gamma_1=\bar \gamma_2 = \dots =\bar  \gamma_m$, 
 \begin{align} \label{HCIZ:symmetric}
 {\mathcal I}_m( \gamma, \bar\gamma) = \exp( \gamma \cdot \bar\gamma) \;,
 \end{align}
and simply because on  a probability space the $L^{\mu}$ norm of a
  random variable converges to the $L^{\infty}$ norm as $\mu\to\infty$, it follows that
 \begin{align*}
    \lim_{\mu\to\infty} \frac 1 {\mu} \log {\mathcal I}_m(\mu \bar \gamma,\gamma)= \sup_{{\mathcal O}(m) } 
   \bigl  \{ \tr(O G O^{\top} \bar G )  \bigr \}= \gamma
    \cdot \bar \gamma  \; , 
  \end{align*}
  where the supremum is attained by every  orthogonal matrix $O\in {\mathcal K}_{\bar\gamma}$
  with  block-diagonal structure  \eqref{block:diagonal structure}
corresponding to the multiplicities of the $\bar D$-eigenvalues. 

We continue from  the joint density  \eqref{HCIZ:firststep} of  eigenvalues and eigenvectors, using
the representation
\begin{align}
R=\bar O^{\top} O = \check R  \hat R =
\check R \exp(S )
\in \bar O^{\top} {\mathcal O}(m)^+\;. \end{align}
In the new coordinates, since $\check R^{\top} \bar G \check R = \bar G$, the joint density of $(G,R)$ with respect to $d\gamma\times H_m(dR)$ is given by 
\begin{align*}
q_{\bar \gamma}( G, R)=q_{\bar \gamma}( G, \check R \hat R  )=q_{\bar \gamma}( G,  \hat R  ) 
\end{align*} 
which does not depend on $\check R$.
Moreover,
by using the properties of the wedge product,
\begin{align*} & 
\bigl( R^{\top} dR\bigr)^{\wedge} =  \bigl(  \hat R^{\top}  \check R^{\top} (  \check R \hat dR +d\check R \hat R ) \bigr)^{\wedge}
 =  \bigl( \hat R \hat dR+ \hat R^{\top}\check R^{\top}  d\check R \hat R    \bigr)^{\wedge} & \\ &
 =  \bigl( \hat R \hat dR  \bigr)^{\wedge}\bigl(  \hat R^{\top}\check R^{\top}  d\check R \hat R \bigr)^{\wedge}  
 =  \bigl( \hat R \hat dR  \bigr)^{\wedge} \bigl( \check R^{\top}  d\check R \bigr)^{\wedge}  
 =  \bigl( \hat R  d\hat R  \bigr)^{\wedge} \bigwedge_{i=1}^k \bigl( \check R^{\top}_{(i,i)} 
 d\check R_{(i,i)} \bigr)^{\wedge} \; . 
\end{align*}
Therefore, the blocks $(\check R_{(i,i)},i=1,\dots,k)$ of $\check R$ are  independent from the eigenvalues $\gamma$ and 
distributed  as the product of  Haar probability measures 
$H_{m_i}\bigl( d\check R_{(i,i)}\bigr)$
on the respective orthogonal groups ${\mathcal O}(m_i)$ corresponding
to the $\bar D$-eigenspaces, with the constraint $\bar O \check R \hat R\in {\mathcal O}(m)^+$.
 Since for every fixed $\hat R \in {\mathcal C}_{\gamma}$, by symmetry
 \begin{align*}
   \int_{ {\mathcal O}(m_1)\times \dots \times {\mathcal O}(m_k)  }  {\bf 1}\bigl( \bar O \check R \hat R  \in  {\mathcal O}(m)^+   \bigr) 
   \bigwedge_{i=1}^k \bigl( \check R^{\top}_{(i,i)}  d\check R_{(i,i)} \bigr)^{\wedge} = 2^{-m }
   \prod_{i=1}^k \mbox{Vol}\bigl( {\mathcal O}(m_i) \bigr) \;,
 \end{align*}
after integrating out $\check R$ we see that  $q_{\bar \gamma} (G,\hat R)$ in \eqref{HCIZ:firststep} 
is also the joint density of $(\gamma,\hat R)$ on 
$\{\gamma\in \R^m: \gamma_1 >\gamma_2 > \dots > \gamma_m\}\times {\mathcal C}_{\bar\gamma}$ with respect to the product measure
  $d\gamma_1 \times d\gamma_2 \times \dots \times d\gamma_m \times H_{\mathcal C_{\bar\gamma}} (d\hat R)$, where
\begin{align*} 
H_{\mathcal C_{\bar\gamma}} (d\hat R) &=  \mbox{Vol}({\mathcal C}_{\bar\gamma})^{-1}   ( \hat R^{\top} d\hat R \bigr)^{\wedge},
\mbox{ with }  & \\  \mbox{Vol}({\mathcal C}_{\bar\gamma})&=\int_{{\mathcal C}_{\bar\gamma} }
\bigl(\hat R^{\top}  d\hat R \bigr)^{\wedge} = 
\frac{  \mbox{ Vol} ( {\mathcal O}(m) ) } { \prod_{j=1}^k \mbox{Vol}\bigl(  {\mathcal O}(m_j) \bigr) } 
= \frac{ Z_m(1,0) }{ C_m(1,0) }\prod_{j=1}^k \frac{ C_{m_j}(1,0) }{ Z_{m_j}(1,0) }
&\\ &= \pi^{ \bigl( m^2-\sum\limits_{j=1}^k m_j^2\bigr)/4}
 \frac{  \prod\limits_{i=1}^k \prod\limits_{j=1}^{m_j} \Gamma( j/2) }{  \prod\limits_{\ell=1}^{m} \Gamma( \ell/2) }\; , &
&\end{align*}
is  the Haar probability measure of ${\mathcal C}_{\bar\gamma}$.
Since rows and columns of  $\hat R$ are normalized eigenvectors,
 $( R_{ij}^2: 1\le i,j \le m)$ is a {\it doubly stochastic} matrix, with
\begin{align*}
 \sum_{\ell=1}^m  \hat R_{i,\ell}^2 =\sum_{\ell=1}^m  \hat R_{\ell,j}^2 =1 \quad  \mbox{ for } 1\le i,j\le m,
\end{align*}
and by substitution
\begin{align} \label{trace:expression}
  \tr\bigl(  G\hat R^{\top} \bar G \hat R\bigr) & = \sum_{ij} \bar \gamma_i \gamma_j \hat R_{ij}^2
  = \sum_{j} \gamma_j \biggl\{  \sum_{i=1}^{\ell_{k-1}} \bar \gamma_i \hat R_{ij}^2 + \bar \gamma_k 
  \biggl( 1
  -\sum_{i=1}^{\ell_{k-1} } \hat R_{ij}^2 \biggr) \biggr\} & \\ &= 
   \sum_{j} \gamma_j \biggl\{ \bar \gamma_k+  \sum_{i=1}^{\ell_{k-1}} (\bar \gamma_i -\bar \gamma_k)\hat  R_{ij}^2 
 \biggr\} \; . &
\end{align}
For any fixed $\gamma_1> \gamma_2 > \dots > \gamma_m$,
the  maximum of \eqref{trace:expression} 
over ${\mathcal C}_{\bar \gamma}$
is attained 
 at $\hat R=\mbox{I}$ corresponding to  $\hat S=0$, and as $\mu\to \infty$
 the density of $\hat R$ will concentrate around this maximum. 
We apply the Laplace approximation method and take the second order expansion of the matrix exponential as $\hat S\to 0$,
\begin{align} \nonumber &
\hat R= \exp(\hat S)=  \mbox{I} + \hat S +  \hat S ^2 /2 + o( \parallel \hat S\parallel^2 )  , \mbox{ with }
 & \\ & \label{R:expansion}
  \hat R_{ii} =  1 - \frac 1 2 \sum_{j=1}^m \hat S_{ij}^2+ o( \parallel \hat S \parallel^2),
 \; \hat R_{ii}^2 =  1 -  \sum_{j=1}^m \hat S_{ij}^2+ o( \parallel \hat S \parallel^2),  & \\ \nonumber & \mbox{ and for $i\ne j$ }
   \hat R_{ij}^2= \hat S_{ij}^2 + o( \parallel \hat S^2 \parallel) .
&\end{align}
By substituting \eqref{R:expansion} in   \eqref{trace:expression}, 
\begin{align*}
   \tr\bigl(  G\hat R^{\top} \bar G \hat R\bigr) - \gamma \cdot \bar \gamma=
 - \sum_{i, j} ( \gamma_i - \gamma_j )( \bar \gamma_i- \bar\gamma_j) \hat S_{ij}^2 + o( \parallel \hat S \parallel^2)
  \; , \quad \mbox{  as $\hat S\to 0$, }
\end{align*}
where in the sum the terms indexed by   $(i,j)$ corresponding to identical $\bar D$-eigenvalues vanish.
We now take a  sequence of parameters $\mu_n=n$, and $\lambda_n$ such that  $-2/m < \lambda_n/n \to \lambda$,  as $n\to \infty$.
After changing of variables,  we approximate the density  of  $(\gamma,\hat S)$
with respect to the volume measure  $\bigl(d\hat S \bigr)^{\wedge}=\bigl(\hat R^{\top} d\hat R\bigr)^{\wedge}$, as
\begin{align}\label{gaussian:approximation:density} & 
  q_{\bar \gamma}^{(n)}(  \gamma,\hat S ) \sim     \sqrt{  1+ \lambda m/2 } \prod_{l=1}^k Z_{m_l}(1,0)
  V(\gamma) 
\exp\biggl( -  n \sum_{i,j=1}^m  \biggl(  \delta_{ij }+
\frac{\lambda} 2 \biggr) (\gamma_i -\bar \gamma_i)(\gamma_j-\bar \gamma_j) \biggr)
&\\ \nonumber&
 \times n^{m(m+1)/4} \frac{ C_m(1,0)  } 
 {  \prod_{l=1}^k  C_{m_l}(1,0) }
 \prod_{h=1}^{k-1} \prod_{i=\ell_{h-1}+1}^{\ell_h} \prod_{j=\ell_h+1}^m 
\exp\biggl( - 2n ( \gamma_i- \gamma_j)  ( \bar \gamma_i- \bar\gamma_j) \hat S_{ij}^2 \biggr)  , &
& \end{align}
where   $x_n \sim y_n$ when $\lim\limits_{n\to\infty} x_n/y_n =1$.

For fixed $\gamma_1> \gamma_2> \dots > \gamma_m$,  $(\gamma_i - \gamma_j)( \bar\gamma_i- \bar\gamma_j) \ge 0$,
and by integrating  $\hat S_{ij}$ over $\R$  for each $i<j$ with $\bar\gamma_i > \bar \gamma_j$
we obtain the  Laplace approximation
\begin{multline}\label{laplace:marginal} 
 q^{(n)}_{\bar\gamma}(\gamma) \sim  \\ n^{ \sum_{i=1}^k m_i(m_i+1)/4 }   
   \sqrt{  1+ \lambda m/2 } \prod_{l=1}^k Z_{m_l}(1,0)
  V(\gamma)\biggl( \frac   {\pi} 2 \biggr)^{m^2/4 -\sum_{i=1}^k m_i^2/4 }
   \frac{ C_m(1,0)  }  
{  \prod_{l=1}^k  C_{m_l}(1,0) } \times \\ 
\exp\biggl( -  n \sum_{i,j=1}^m  \biggl(  \delta_{ij }+
\frac{\lambda} 2 \biggr) (\gamma_i -\bar \gamma_i)(\gamma_j-\bar \gamma_j) \biggr)
 \prod_{h=1}^{k-1} \prod_{i=\ell_{h-1}+1}^{\ell_h} \prod_{j=\ell_h+1}^m \bigl\{ (\gamma_i-
 \gamma_j) (\bar\gamma_i -\bar \gamma_j) \bigr\}^{-1/2} , 
\end{multline}
and by comparing the right hand side of \eqref{laplace:marginal} with
the eigenvalue density \eqref{alt:eigdensity}  we obtain \eqref{HCIZ:asymptotic:eq}, proving Theorem \ref{HCIZ-asymptotics}.

By the further rescaling \eqref{gaussian:approximation:density} with
\begin{align*}\quad \xi_i = (\gamma_i- \bar \gamma_i) \sqrt{n}, i=1,\dots,m, \quad
\theta_{ij}= \widehat S_{ij} \sqrt n,  \;  1\le i < j \le m \mbox{ with } \; \bar\gamma_i > \bar \gamma_j, 
\end{align*} we obtain  
\begin{align*}& q_{\bar \gamma}^{(n)}(  \xi ,\theta )  \sim
q_{\bar \gamma}(  \xi ) \prod_{h=1}^{k} \prod_{i=\ell_{h-1}+1}^{\ell_h}
\prod_{j=\ell_h+1}^m  q_{\bar \gamma}( \theta_{ij}) 
\end{align*}
where the asymptotic density of the eigenvalue fluctuations is given by
\begin{multline*}  q_{\bar \gamma}(  \xi )= \\ 
 \sqrt{  1+ \lambda m/2 }
\prod_{r=1}^{k} \biggl\{ Z_{m_r}(1,0)  
\prod_{\ell_{r-1}+1\le v < w \le l_{r}} ( \xi_v-\xi_w) 
\exp\biggl( - \sum_{i,j=1}^m  \biggl(  \delta_{ij }+
\frac{\lambda} 2 \biggr) \xi_i \xi_j \biggr) \biggr\} 
 \end{multline*}
and for $i< j$ with $\bar\gamma_i > \bar\gamma_j$ the fluctuations $\theta_{ij}$ are asymptotically independent 
with respective Gaussian densities 
\begin{align} \label{angle:fluctuations}  q_{\bar \gamma}( \theta_{ij})=
\sqrt{ \frac 2 {\pi} }(\bar \gamma_i -\bar\gamma_j)\exp\biggl( - 2   ( \bar \gamma_i- \bar\gamma_j)^2  \theta_{ij}^2 
\biggr)\; ,
\end{align} 
 which completes the proof of  Theorem \ref{eigenvector:asymptotics}.
 
In order to study the fluctuations of the cluster barycenters and  eigenvalue distribution
within clusters, we  change variables again  by using the linear maps
\begin{align*} 
 T_i\bigr(\xi_{\ell_{i-1}+1}, \dots \xi_{\ell_{i}-1},\xi_{\ell_i} \bigr )=
\bigl( \zeta_{\ell_{i-1}+1}, \dots , \zeta_{\ell_i-1}, \widetilde \xi_i \bigr ) , \quad 1\le i  \le k
\end{align*}
with Jacobian determinants $\det( \nabla T_i)=1/m_i$,
where, we have denoted the cluster barycenters as 
\begin{align*}
   \widetilde \xi_i 
   =   \frac{1}{ m_i}  \sum_{j=\ell_{i-1}+1}^{\ell_i}  \xi_j
   ,\quad  1\le i \le k  \; .
   \end{align*} 
   and  
   \begin{align*}
     \zeta_j=  \xi_j - \widetilde \xi_i, \quad  1\le i \le k, \quad  \ell_{i-1}< j \le \ell_i   
   \end{align*}
 are  the differences between the eigenvalue  and their  cluster barycenters.
 In these new random variables the asymptotic eigenvalue fluctuation density factorizes as
as   \begin{align*} q_{\bar \gamma}(  \xi )=     
     q( \widetilde \xi_1,\dots \widetilde \xi_k) \prod_{i=1}^k  q_{m_i}( 
     \zeta_{\ell_{i-1}+1},\dots ,\zeta_{\ell} ) \;,
     \end{align*}
     where the cluster barycenters have Gaussian density \eqref{center:density:asymptotic},
which is also the density of $\widetilde X$ in 
   \eqref{cluster:center:distribution}, 
and the differences 
$(\zeta_{\ell_{i-1}+1},\dots ,\zeta_{\ell})$
between 
   the $m_i\times m_i$-GOE eigenvalues and their barycenter
   have degenerate densities \eqref{cluster:density:asymptotic}.

   \end{document}